\def\final{0}  
\def\iflong{\iffalse}
\newcommand{\knote}[1]{\note{Karthik: #1}}
\newcommand{\hnote}[1]{\note{Hsien: #1}}
\newcommand{\anote}[1]{\note{Alex: #1}}
\newcommand{\cnote}[1]{\note{Charlie: #1}}
\newcommand{\todonote}[1]{\note{TODO: #1}}
\newcommand{\sidecomment}[1]{}
\newcommand{\knote}[1]{}
\newcommand{\hnote}[1]{}
\newcommand{\anote}[1]{}
\newcommand{\cnote}[1]{}
\newcommand{\todonote}[1]{}
\newcommand{\sidecomment}[1]{}
\newcommand*\patchAmsMathEnvironmentForLineno[1]{%
 \expandafter\let\csname old#1\expandafter\endcsname\csname #1\endcsname
 \expandafter\let\csname oldend#1\expandafter\endcsname\csname end#1\endcsname
 \renewenvironment{#1}%
 {\linenomath\csname old#1\endcsname}%
 {\csname oldend#1\endcsname\endlinenomath}}%
\newcommand*\patchBothAmsMathEnvironmentsForLineno[1]{%
 \patchAmsMathEnvironmentForLineno{#1}%
 \patchAmsMathEnvironmentForLineno{#1*}}%
\def\mathsc#1{\text{\textsc{#1}}}
\definecolor{DarkRed}{rgb}{0.50,0.00,0.00}
\def\EMPH#1{{{\emph{#1}}}} 
\newtheorem{theorem}{Theorem}[section]
\newtheorem{lemma}[theorem]{Lemma}
\newtheorem{claim}[theorem]{Claim}
\newtheorem*{claim*}{Claim}
\newtheorem{corollary}[theorem]{Corollary}
\numberwithin{figure}{section}
\theoremstyle{definition}
\def\beginaalgo{\begin{minipage}{1in}\normalfont\begin{tabbing}
        \quad\=\qquad\=\qquad\=\qquad\=\qquad\=\qquad\=\qquad\=\qquad\=\qquad\=\qquad\=\qquad\=\qquad\=\qquad\=\kill}
\def\endaalgo{\end{tabbing}\end{minipage}}
\newenvironment{algorithm}
{\begin{tabular}{|l|}\hline\beginaalgo}
{\endaalgo\\\hline\end{tabular}}
\def\textul#1{\underline{\smash{#1}\vphantom{.}}}
\def\R{\mathbb{R}}
\def\Ceil#1{\left\lceil #1 \right\rceil}
\def\Set#1{\left\{ #1 \right\}}
\def\Abs#1{\left| #1 \right|}
\def\Paren#1{\left( #1 \right)}		
\def\etal{\emph{et~al.}}			
\def\noote#1{\textsf{\boldmath \textbf{$\langle\!\langle$#1$\rangle\!\rangle$}}\leavevmode}
\def\nootewaarn{\GenericWarning{}{AUTHOR WARNING: Unresolved \protect\note}}
\def\nootenoote#1{\marginpar
	[\hfill\llap{\textcolor{red}{{#1}$\!\Longrightarrow$}}]
	{\rlap{\textcolor{red}{$\Longleftarrow\!${#1}}}}}
\def\nootedingboot{\textcircled{$\maltese$}}
\def\note#1{\textcolor{red}{\nootewaarn\noote{\nootenoote{\nootedingboot}#1}}}
\newcommand{\sgn}{\mathrm{sgn}}
\def\cycles{\text{Cycles}}
\def\matchings{\text{Matchings}}
\def\loops{\text{Loops}}
\begin{document}
\title{Invertibility and Largest Eigenvalue of Symmetric Matrix Signings}
\author{Charles Carlson \quad 
Karthekeyan Chandrasekaran  \quad
Hsien-Chih Chang \quad
Alexandra Kolla%
\thanks{Email: \texttt{\{ccarlsn2,karthe,hchang17,akolla\}@illinois.edu}}}
\date{}
\maketitle
\thispagestyle{empty}
\begin{abstract}
The spectra of signed matrices have played a fundamental role in social sciences, graph theory, and control theory. In this work, we investigate the computational problems of identifying symmetric signings of matrices with natural spectral properties. Our results are twofold:
\begin{enumerate}
\item
We show NP-completeness for the following three problems: verifying whether a given matrix has a symmetric signing that is positive semi-definite/singular/has bounded eigenvalues. However, we also illustrate that the complexity could substantially differ for input matrices that are adjacency matrices of graphs.
\item
We exhibit a stark contrast between invertibility and the above-mentioned spectral properties: we show a combinatorial characterization of matrices with invertible symmetric signings and design an efficient algorithm using this characterization to verify whether a given matrix has an invertible symmetric signing.	 
Next, we give an efficient algorithm to solve the search problem of finding an invertible symmetric signing for matrices whose support graph is bipartite.  
We also provide a lower bound on the number of invertible symmetric signed adjacency matrices. Finally, we give an efficient algorithm to find a minimum increase in support of a given symmetric matrix so that it has an invertible symmetric signing. 
\end{enumerate}
We use combinatorial and spectral techniques in addition to classic results from matching theory. 
Our combinatorial characterization of matrices with invertible symmetric signings might be of independent interest. 
\end{abstract}

\newpage
\setcounter{page}{1}
\section{Introduction}
\label{sec:intro}
The spectra of several graph-related matrices such as the adjacency and the Laplacian matrices have become fundamental objects of study in computer science research. They have had a tremendous impact in several areas including machine learning, data mining, web search and ranking, scientific computing, and computer vision. In this work, we undertake a systematic and comprehensive investigation of the spectrum and the invertibility of \emph{symmetric signings} of matrices. 

For a real symmetric $n\times n$ matrix $M$ and an $n\times n$ matrix $s$ taking values in $\{\pm 1\}$---which we refer to as a \EMPH{signing}---we define the \EMPH{signed matrix} $M(s)$ to be the matrix obtained by taking entry-wise products of $M$ and $s$. We say that $s$ is a \emph{symmetric} signing if $s$ is a symmetric matrix and an \emph{off-diagonal} signing if $s$ takes value $+1$ on the diagonal. Signed adjacency matrices (respectively, Laplacians) correspond to the case where $M$ is the adjacency matrix (respectively, Laplacian) of a graph. Signed adjacency matrices were introduced as early as 1953 by Harary \cite{harary}, to model social relations involving disliking, indifference, and liking. They have since been used in an array of network applications such as finding ``balanced groups'' in social networks where members of the same group like or dislike each other\cite{harary} and reaching consensus among a group of competing or cooperative agents \cite{altafini}.
Studies of spectral properties of general signed matrices as well as signed adjacency matrices have led to breakthrough advances in multiple areas such as algorithms \cite{MSS13, Fri08, LP10, BL06, Fri03, ACKM13}, graph theory \cite{seymour, plummer, thomas, vazirani}, and control theory \cite{nemirovski, boyd, rohn1, rohn2, rohn3}. 

In this work, we study natural spectral properties of symmetric signed matrices and address the computational problems of finding/verifying the existence of signings with these spectral properties. We recall that a real symmetric matrix is \emph{positive semi-definite} (psd) if all its eigenvalues are non-negative. We study the following computational problems:

\vspace{11pt}
\begin{itemize}[leftmargin=0cm]
\item[] 
{\textsc{BoundedEvalueSigning}}:  
Given a real symmetric matrix $M$ and a real number $\lambda$, verify if there exists an off-diagonal symmetric signing $s$ such that the largest eigenvalue $\lambda_{\max}(M(s))$ is at most $\lambda$.

\item[] 
{\textsc{PsdSigning}}:  Given a real symmetric matrix $M$, verify if there exists a symmetric signing $s$ such that $M(s)$ is positive semi-definite.

\item[] 
{\textsc{SingularSigning}}: Given a real symmetric matrix $M$, verify if there exists an off-diagonal symmetric signing $s$ such that $M(s)$ is singular.

\item[] 
{\textsc{InvertibleSigning}}: Given a real symmetric matrix $M$, verify if there exists a symmetric signing $s$ such that $M(s)$ is invertible (that is, non-singular).
\end{itemize}
\vspace{11pt}

The magnitude of the eigenvalues of symmetric signed adjacency matrices and Laplacians are fundamental to designing expanders 
and in particular, Ramanujan graphs \cite{MSS13}. Motivated by this application, we examine the problems of finding symmetric signed matrices with bounds on the magnitude of the eigenvalues, that is, \textsc{BoundedEvalueSigning} and \textsc{PsdSigning}. In \textsc{SingularSigning} and \textsc{InvertibleSigning}, we examine the singularity and the invertibility of symmetric signed matrices. Both these properties are completely determined by the determinant. The determinant of signed matrices has a special significance in the computational problems of counting matchings and computing the permanent \cite{seymour, plummer, thomas, vazirani}. In what follows, we elaborate on the motivations behind the above-mentioned decision problems and their search variants, and describe our results. 

\subsection{Connections and Motivations} \label{sec:motivations}
\noindent{\textbf {Determinant of signed graphs.} }
The determinant of signed adjacency matrices of graphs has played an important role in answering fundamental questions concerning graphs and linear systems---e.g., see the survey by Thomas \cite{thomas}. As a highlight, we mention the seminal work of Robertson, Seymour and Thomas \cite{seymour} who gave an efficient characterization of matrices whose permanent can be computed by suitably signing the entries and computing the determinant. In this work we investigate questions concerning the \emph{invertibility} of \emph{symmetric} signed matrices. \\

\noindent{\textbf {Spectra of signed matrices and expanders.}}
A $d$-regular graph is \emph{Ramanujan} if the second eigenvalue $\lambda_2$ of the adjacency matrix of the graph is at most $2\sqrt{d-1}$. Ramanujan graphs have the best \emph{expansion} among regular graphs \cite{Nil91}. For this reason, efficient construction of Ramanujan graphs of all degrees has been an intensely studied area \cite{LPS88,MSS13,cohen}. A combinatorial approach to this problem is through the \emph{$2$-lift} operation that doubles the size of a base graph $G$ while preserving the degree. There is a simple bijection between $2$-lifts and symmetric signed adjacency matrices of $G$. Furthermore, the eigenvalues of the adjacency matrix of a $2$-lift are given by the union of the eigenvalues of the adjacency matrix of $G$ (also called the ``old'' eigenvalues) and the signed adjacency matrix of $G$ that corresponds to the $2$-lift (the ``new'' eigenvalues). 

Marcus, Spielman, and Srivastava \cite{MSS13} showed that the adjacency matrix of every $d$-regular bipartite graph \emph{has} a symmetric signing whose eigenvalues are at most $2\sqrt{d-1}$. Making this result \emph{constructive} would immediately lead to an efficient algorithm to construct bipartite Ramanujan simple graphs of all degrees\footnote{While efficient construction of bipartite Ramanujan \emph{multi-graphs} of all degrees is known \cite{cohen}, it still remains open to efficiently construct bipartite Ramanujan \emph{simple} graphs of all degrees.}. 
However, making the result of Marcus \etal\ \emph{constructive} requires an efficient algorithm to solve the sub-problem: find a symmetric signing for a given $d$-regular bipartite graph for which the largest eigenvalue of the signed adjacency matrix is at most $2\sqrt{d-1}$. This naturally raises the computational problem---does there exist an efficient algorithm to find a symmetric signing that minimizes the largest eigenvalue of a given matrix? Such an algorithm would also solve the sub-problem. This in turn motivates the need to investigate the complexity of \textsc{BoundedEvalueSigning} which is the decision variant of the computational problem.\\

\noindent{\textbf {Randomized algorithms for perfect matching.} }
An algebraic method to verify the existence of a perfect matching in a given {bipartite} graph $G$ is by certifying that the determinant of the \emph{Edmonds matrix} is {not} identically zero \cite{Lov79}. 
The Edmonds matrix of a $n\times n$ bipartite graph $G$ is a $n\times n$ matrix $M$ with variables $x_e$ for the edges such that $M[i,j]:=x_{\{i,j\}}$ if $\{i,j\}$ is an edge of $G$, and $M[i,j]=0$ otherwise. The graph $G$ has a perfect matching if and only if the polynomial $\det(M)$ is not identically zero. A simple way to verify this condition is by substituting uniformly random values from a large enough field $F$ for the variables $x_{e}$ and evaluating the determinant over $F$. In order for this algorithm to succeed with constant probability, the size of the field is required to be $\Omega(n)$, and thus the amount of randomness needed for the algorithm is at least $\Omega(mn)$, where 
$m$ is the number of edges in $G$ \cite{Lov79}.

Consider the following variant of this approach.  Instead of picking a large field $F$, substitute uniformly random $\{\pm 1\}$ values into variables $x_{e}$ and evaluate the determinant of the resulting matrix over the reals. If $G$ has no perfect matching, then the determinant will be zero. In this work, we investigate the probability of such a determinant being non-zero if $G$ has a perfect matching. In particular, we show that the probability is non-zero and moreover, we present a non-trivial lower bound on this probability.\\

\noindent{\textbf {Efficient algorithms for finding the solvability index of a signed matrix.}}
The notion of \emph{balance} of a symmetric signed matrix has been studied extensively in social sciences \cite{He46,harary,HJ80,katai}. 
A signed adjacency matrix is \emph{balanced} if there is a partition of the vertex set such that all edges within each part are positive, and all edges in between two parts are negative (one of the parts could be empty).
A number of works \cite{akiyama,hansen78,sivaraman,katai, zaslavsky1, zaslavsky2}
have explored the problem of minimally modifying signed graphs (or signed adjacency matrices) to convert it into a balanced graph. 

In this work, we introduce a related problem regarding symmetric signed matrices: Given a symmetric matrix $M$, what is the smallest number of non-diagonal zero entries of $M$ whose replacement by non-zeroes gives a symmetric matrix $M'$ that has an invertible symmetric signing? We define this quantity to be the \emph{solvability index}\footnote{Our definition of \emph{solvability index} is similar to the well-known notion of \emph{frustration index} \cite{Ha59,AR58}. The \emph{frustration index} of a matrix $M$ is the minimum number of non-zero off-diagonal entries of $M$ whose deletion results in a balanced signed graph. A simple reduction from \textsc{MaxCut} shows that computing the frustration index of a signed graph is NP-hard \cite{huffner07}.}.
Our main motivation is to study systems of linear equations in signed matrices that might be ill-defined, and thus do not have a (unique) solution and to minimally modify such matrices so that the resulting linear system becomes (uniquely) solvable. 

\subsection{Results}
Intriguingly, the complexity of \textsc{BoundedEvalueSigning} has not been studied in the literature even though it is widely believed to be a difficult problem in the graph sparsification community. 
We shed light on this problem by showing that it is NP-complete. 
Owing to the close connection between the maximum eigenvalue, positive semi-definiteness, and singularity (by suitable translations), we obtain that \textsc{PsdSigning} and \textsc{SingularSigning} are also NP-complete.


\begin{theorem}\label{thm:bounded-ev}
\textsc{BoundedEvalueSigning}, \textsc{PsdSigning}, and \textsc{SingularSigning} are NP-complete.
\end{theorem}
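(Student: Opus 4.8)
The plan is to prove NP-completeness by first reducing to a common core problem and then giving a reduction from a standard NP-hard problem. Membership in NP is easy for all three: a signing $s$ is a polynomial-size certificate, and computing the eigenvalues (or the determinant, or checking positive semi-definiteness via, say, the leading principal minors) can be done in polynomial time. So the work is in establishing NP-hardness. I would first observe that the three problems are closely related by spectral shifts. If $M$ is a symmetric matrix and $c$ is a scalar, then $M(s) - cI$ has largest eigenvalue at most $\lambda - c$ iff $M(s)$ has largest eigenvalue at most $\lambda$; also $cI - M(s)$ is psd iff $\lambda_{\max}(M(s)) \le c$; and $M(s)$ is singular iff some eigenvalue equals $0$. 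The cleanest route is to reduce a combinatorial problem to \textsc{BoundedEvalueSigning} for a specific target $\lambda$ and specific family of matrices, and then transfer hardness to \textsc{PsdSigning} by working with $\lambda I - M$ in place of $M$ (note diagonal entries of the input are unconstrained under a general symmetric signing, and for \textsc{BoundedEvalueSigning} we restrict to off-diagonal signings, which matches the $2$-lift/adjacency-matrix setting), and to \textsc{SingularSigning} by a further shift that forces the extremal eigenvalue to land exactly at $0$ (e.g.\ pad or scale so that ``$\lambda_{\max} \le \lambda$'' is achievable only with equality, turning the bounded-eigenvalue question into a singularity question for a shifted matrix).

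For the core hardness, the natural source is a cut-like or not-all-equal problem, since the sign pattern $s_{ij} \in \{\pm 1\}$ on edges of the support graph of $M$ is exactly the kind of object that \textsc{MaxCut}, \textsc{NAE-3SAT}, or $2$-coloring reductions produce, and because the spectral radius of a signed adjacency matrix is minimized precisely when the signing is ``as balanced as possible.'' Concretely, I would try to encode an instance of (a suitable restriction of) \textsc{MaxCut} or \textsc{Not-All-Equal SAT} into the support graph of $M$, choosing the nonzero magnitudes of $M$ and the threshold $\lambda$ so that a signing achieves $\lambda_{\max}(M(s)) \le \lambda$ if and only if the corresponding assignment satisfies the combinatorial instance. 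A gadget-based construction is likely needed: small fixed blocks whose spectral behavior is well understood (paths, cycles, complete bipartite pieces, or tensor-with-$K_2$ structures coming from the $2$-lift interpretation) wired together so that a ``bad'' local sign choice forces a large eigenvalue via an explicit test vector (Rayleigh quotient lower bound), while a globally consistent ``good'' choice makes the whole matrix unitarily equivalent to something with a controlled top eigenvalue.

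The main obstacle — and where I would spend most of the effort — is the quantitative control of $\lambda_{\max}$ under signings: unlike the determinant, the largest eigenvalue is a global, non-local quantity, so I cannot argue gadget-by-gadget independently. I need a gap: a choice of $\lambda$ together with a proof that any signing violating even one gadget constraint has $\lambda_{\max} > \lambda$ (via an explicit witness vector supported on that gadget, using interlacing or the Rayleigh quotient), and conversely that any signing respecting all constraints has $\lambda_{\max} \le \lambda$ (via an exact diagonalization or a norm bound, exploiting that a balanced signing of a bipartite-like structure is switching-equivalent to the all-ones signing and hence has the same spectrum as the unsigned matrix). Getting these two bounds to straddle the same threshold $\lambda$ with a constant gap is the crux; interlacing between a gadget submatrix and the full matrix is the key tool for the ``violation forces large eigenvalue'' direction, and switching equivalence is the key tool for the ``consistency gives small eigenvalue'' direction. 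Once \textsc{BoundedEvalueSigning} is pinned down, the shifts to \textsc{PsdSigning} and \textsc{SingularSigning} described above are routine, completing the proof of Theorem~\ref{thm:bounded-ev}.
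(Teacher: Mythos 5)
Your plan diverges from the paper's proof in both its source problem and its machinery, and more importantly it leaves the central construction unspecified, which is a genuine gap rather than a stylistic difference.

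The paper reduces from \textsc{Partition}, not from \textsc{MaxCut} or \textsc{NAE-SAT}. Given a vector $b\in\Z_{\ge 0}^n$, it builds the explicit $(n{+}2)\times(n{+}2)$ matrix
\[
M =
\begin{bmatrix}
I_n & b & \mathbf{1}_n \\
b^T & \langle b,b\rangle & 0 \\
\mathbf{1}_n^T & 0 & n
\end{bmatrix},
\]
and uses the Schur complement of the bottom-right $2\times 2$ block (Lemma~\ref{lem:schur-complement-properties}) to show that $M(s)$ is PSD, respectively singular, precisely when some $\pm1$-vector $z$ satisfies $\langle b,z\rangle=0$. Note the reduction does \emph{not} have to argue about $\lambda_{\max}$ of a global graph at all; the diagonal block $I_n$ and the algebraic identity $\det(M') = -\langle\hat b,z\rangle^2$ localize everything to a $2\times2$ Schur complement. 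The paper then derives \textsc{BoundedEvalueSigning}-hardness from \textsc{PsdSigning} by negating ($\textsc{NsdSigning}$) and flipping the signs of the diagonal so an off-diagonal signing with $\lambda_{\max}\le 0$ corresponds to a negative semi-definite signing — this is the reverse direction from the chain you propose.

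Your sketch identifies the right obstacle — that $\lambda_{\max}$ is global and must be controlled on both sides of a threshold — but you never produce the gadget that overcomes it. "I would try to encode," "a gadget-based construction is likely needed," "choosing the nonzero magnitudes so that $\ldots$" are promises, not a construction, and the two bounds you would need (interlacing gives $\lambda_{\max}>\lambda$ for a violated gadget; switching equivalence gives $\lambda_{\max}\le\lambda$ for a consistent signing) must straddle the \emph{same} threshold with a gap, which is exactly the hard part you flag and do not resolve. Switching equivalence only covers signings on the edge set of a cut, so "globally consistent" signings from a \textsc{MaxCut} assignment need not be switching-equivalent to the all-plus signing, and the "unitarily equivalent to something with controlled top eigenvalue" step is unsupported. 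Separately, your spectral-shift bookkeeping conflates $M(s)-cI$ with $(M-cI)(s)$: these agree only for off-diagonal signings, and you would need to justify (as the paper does explicitly, by noting a PSD matrix cannot have a negative diagonal entry) why the general symmetric signing in \textsc{PsdSigning} may be assumed off-diagonal. So while your overall instinct that the three problems are equivalent up to diagonal shifts is sound, as stated the proposal is a research program rather than a proof, and the paper's \textsc{Partition}-plus-Schur-complement route avoids precisely the difficulty you foresee by never needing a global eigenvalue gap argument.
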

We remark that the hard instances generated by our proof of Theorem \ref{thm:bounded-ev} are real symmetric matrices with non-zero diagonal entries and hence, it does not completely resolve the computational complexity of the problem of finding a signing of a given \emph{graph-related} matrix (e.g., the adjacency matrix) that minimizes its largest eigenvalue. 




\vspace{9pt}
Our next result provides some evidence that one might be able to design efficient algorithms to solve the NP-complete problems appearing in Theorem \ref{thm:bounded-ev} for \emph{graph-related} matrices. In particular, we show that \textsc{SingularSigning} and its search variant admit efficient algorithms when the input matrix corresponds to the adjacency matrix of a \emph{bipartite} graph. 

\begin{theorem}\label{thm:singular-singing-for-bipartite-graphs}
There exists a polynomial-time algorithm to verify if the adjacency matrix $A_G$ of a given bipartite graph $G$ has a symmetric signing $s$ such that $A_G(s)$ is singular and if so, find such a signing.
\end{theorem}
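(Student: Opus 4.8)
The plan is to reduce the singular-signing question for a bipartite adjacency matrix to a purely combinatorial question about the support graph, exploiting the bipartite block structure. Write $G$ with parts $L$ and $R$, so that after ordering the vertices the adjacency matrix has the block form $A_G = \begin{pmatrix} 0 & B \\ B^\top & 0 \end{pmatrix}$, and any symmetric signing of $A_G$ has the form $\begin{pmatrix} 0 & B(t) \\ B(t)^\top & 0 \end{pmatrix}$ where $t$ ranges over arbitrary $\{\pm 1\}$-matrices supported on the edge set (the diagonal entries of the signing are irrelevant and the symmetry constraint is automatically satisfied). Since $\det A_G(s) = \pm \det(B(t))^2$ when $|L|=|R|$ and is identically zero when $|L|\neq|R|$, the signed matrix $A_G(s)$ is singular if and only if either $|L|\neq|R|$, or $|L|=|R|$ and the (not-necessarily-symmetric) signing $t$ of the rectangular biadjacency matrix $B$ makes $\det(B(t))=0$. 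So the whole problem collapses to: given a square $0/1$ matrix $B$, decide whether some $\{\pm 1\}$-signing of its support has vanishing determinant, and if so produce it. I expect the paper's convention about whether $A_G(s)$ must itself be symmetric to make the diagonal blocks stay zero, which it does here automatically, so no subtlety there.

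First I would handle the trivial unbalanced case $|L|\neq |R|$ (output any signing). For the square case, I would set up the determinant as a signed sum over perfect matchings of $G$: $\det(B(t)) = \sum_{\sigma} \sgn(\sigma)\prod_i t_{i,\sigma(i)}$, where the sum is over permutations $\sigma$ with every $(i,\sigma(i))$ an edge, i.e.\ over perfect matchings of $G$ together with their permutation sign. If $G$ has no perfect matching then $\det(B(t))=0$ for every $t$ and we are done — again output anything. If $G$ has a perfect matching, the task is to choose signs so the signed sum of matching-monomials cancels to zero. The key structural step will be to argue that cancellation can always be forced as long as the matching polynomial is not a single monomial, i.e.\ as long as $G$ has at least two perfect matchings; and to reduce the single-perfect-matching case. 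When $G$ has a unique perfect matching, $\det(B(t)) = \pm\prod_i t_{i,\sigma(i)} \neq 0$ for every signing, so \emph{no} singular signing exists — and uniqueness of a perfect matching in a bipartite graph is checkable in polynomial time (e.g.\ delete the matched edges one at a time and test for an augmenting alternating cycle, or equivalently check that the "matching digraph" is acyclic).

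The heart of the argument is the case where $G$ has a perfect matching $N$ and at least one other perfect matching $N'$: then $N \triangle N'$ contains an even alternating cycle $C$. Here I would use the classical trick behind sign-nonsingular / Pfaffian-orientation theory: flipping the sign of a single edge on $C$ flips the relative sign between the two matchings $N$ and $N'$ (more precisely, between the two "halves" of the matchings-through-$C$), so by continuity of the integer-valued determinant as we flip edges one at a time along a path in signing-space, some intermediate signing attains determinant exactly $0$ — provided we can route from a signing with $\det>0$ to one with $\det<0$ changing one entry at a time, which we can since each single-entry flip changes $\det$ by an even integer? No — it changes by $2\cdot(\text{sum of matching monomials through that edge})$, which need not be $\pm 2$, so I cannot literally invoke a discrete intermediate value theorem. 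Instead the cleaner route, which I would take, is: restrict attention to the edges of one fixed alternating cycle $C$ of length $2k$ relative to $N$; signing only those edges (keeping all others at $+1$) makes $\det(B(t))$ equal to $\alpha + \beta\cdot(\prod_{e\in C} t_e)$ where $\alpha$ is the total contribution of matchings meeting $C$ in $N\cap C$ and $\beta$ that of matchings meeting it in the complementary way, and both $\alpha,\beta\neq 0$ because $N$ and $N'$ realize them; choosing the product $\prod_{e\in C} t_e$ to have the right sign makes $\det = \alpha - |\beta|\cdot\mathrm{sgn}(\alpha)$ which has strictly smaller absolute value, and iterating over a well-chosen sequence of cycles drives it to $0$.

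The main obstacle, then, is making this iteration terminate \emph{and} run in polynomial time: I need a potential-function/linear-algebra argument showing that a short (polynomially long) sequence of such cycle-sign-flips reaches determinant exactly zero, rather than merely decreasing $|\det|$. The right framework is probably to think of each choice "$\prod_{e\in C}t_e = \pm 1$" as flipping one coordinate of a vector in $\mathbb{F}_2^{E}$ (the cycle space), so that reachable signings form an affine subspace of $\{\pm 1\}^E$ on which $\det$ takes a controlled set of values; then singular-signing-existence becomes a question about whether $0$ lies in the image of an explicit multilinear form restricted to that subspace, which I would hope to decide by a rank/parity argument or by Gaussian elimination over $\mathbb{F}_2$ on the incidence structure. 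If instead a cleaner characterization holds — namely "$A_G(s)$ is singular for some symmetric $s$ iff $|L|\neq|R|$ or $G$ has a perfect matching but does not have a \emph{unique} one" — then the algorithm is immediate: check bipartition sizes, run a perfect-matching algorithm, and run a uniqueness test, and to produce the witness signing, take the alternating cycle $C$ from $N\triangle N'$ and set signs so that the contribution of matchings through $C$ one way cancels the other, which can be done edge-by-edge in linear time. I would bet the paper proves exactly this clean characterization, so the write-up reduces to: (i) the block-determinant reduction, (ii) the uniqueness $\Leftrightarrow$ SNS dichotomy via symmetric-difference cycles, and (iii) the polynomial-time matching and uniqueness subroutines.
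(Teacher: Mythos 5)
Your reduction to the biadjacency matrix $B$ and the observation that $\det A_G(s) = \pm\det(B(t))^2$ is fine, as is the handling of the unbalanced and no-perfect-matching cases. The problem is the characterization you conjecture at the end, which is the load-bearing step of your algorithm: you claim a singular signing exists iff $G$ lacks a \emph{unique} perfect matching. This is false. The correct invariant is the \emph{parity} of the number of perfect matchings, not whether it exceeds one. Concretely, take the bipartite graph on $\{1,2,3\}\times\{1',2',3'\}$ with biadjacency matrix
\[
B=\begin{pmatrix}1&1&1\\1&1&0\\0&1&1\end{pmatrix}.
\]
This graph has exactly three perfect matchings, so your criterion says a singular signing exists. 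But for any $\pm1$-signing $t$ supported on those seven entries, $\det(B(t))$ is a $\pm1$-sum of three $\pm1$ monomials, hence odd, hence nonzero, so $A_G(s)$ is never singular. More generally, the paper invokes a result of Little stating that $\det(A_G(s))$ is even for all signings iff $G$ has an even number of perfect matchings, so any bipartite graph with an odd number of perfect matchings (including three or more) has \emph{no} singular signing. Your criterion therefore gives wrong ``yes'' answers on an infinite family of inputs, and the decision algorithm you build on it is incorrect.

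The iterative cancellation argument you sketch to motivate the criterion also has a gap: when you sign only the edges of a single alternating cycle $C$ of length $2k$, $\det(B(t))$ is \emph{not} of the form $\alpha+\beta\prod_{e\in C}t_e$. Perfect matchings of $G$ need not use all of $C$ or its complementary half; they can match some $C$-vertices via chords, so the restriction of a matching to $C$-edges can be any partial matching of the cycle. The resulting polynomial in $\{t_e: e\in C\}$ is a general multilinear polynomial of degree up to $k$, and the clean binomial decrease you describe does not hold. You flagged the termination issue yourself, but the formula that is supposed to make each step well-defined is already wrong.

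The paper's route sidesteps all of this. It characterizes bipartite graphs with a singular signing as exactly those with an \emph{even} number of perfect matchings. This parity is computable in polynomial time (determinant of $B$ over $\mathbb{F}_2$), and a second theorem of Little gives an efficiently-findable structural witness: a vertex set $S$ such that every vertex has an even number of neighbors in $S$. The singular signing is then built constructively by taking Eulerian tours of the even-degree subgraph $G[S]$, alternating signs along each tour, and splitting the remaining edges into $S$ into equal-size positive/negative halves; this forces the signed columns indexed by $S$ to sum to zero. If you want to salvage your proposal, replace the uniqueness test by a permanent-mod-2 test and replace the cycle-flipping construction by this even-degree/Eulerian argument.
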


Next, we show a stark difference in complexity between \textsc{SingularSigning} and \textsc{InvertibleSigning}. 
In contrast to \textsc{SingularSigning} which is NP-complete for arbitrary input matrices (Theorem \ref{thm:bounded-ev}), we show that \textsc{InvertibleSigning} is solvable in polynomial time for arbitrary input matrices. 

\begin{theorem}\label{thm:invertible}
There exists a polynomial-time algorithm to solve \textsc{InvertibleSigning}. 
\end{theorem}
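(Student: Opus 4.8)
The plan is to establish a support-only combinatorial characterization: $M$ admits an invertible symmetric signing if and only if the bipartite graph $B(M)$ on two copies $\{r_1,\dots,r_n\}$ and $\{c_1,\dots,c_n\}$ of $[n]$, with an edge $r_ic_j$ exactly when $M[i,j]\neq 0$, has a perfect matching. Given this, the algorithm is immediate: compute a maximum matching of $B(M)$ and answer ``yes'' iff it is perfect. It is worth noting that this is exactly the Frobenius--König condition characterizing generic nonsingularity of \emph{arbitrary} (not necessarily symmetric) matrices with a prescribed zero/nonzero pattern, so the content of the statement is that forcing the signing to be symmetric costs nothing at the level of patterns.

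First I would turn the existence question into a polynomial non-vanishing statement. Introduce one $\pm1$ indeterminate $s_e$ for each edge $e$ of the support graph $G$ of $M$, together with one for each nonzero diagonal entry, and expand $\det(M(s))=\sum_{\sigma}\sgn(\sigma)\prod_i M[i,\sigma(i)]\,s_{\{i,\sigma(i)\}}$ over permutations $\sigma$ of $[n]$. A term is nonzero only when $\sigma$ is a \emph{cycle cover} of $G$, that is, $[n]$ is partitioned into directed cycles each of whose arcs lies in the support (length-$1$ cycles being loops at nonzero diagonal positions). Now reduce modulo the relations $s_e^2=1$: a transposition of $\sigma$ (a matching edge $\{i,j\}$) contributes the sign-independent scalar $M[i,j]^2>0$; a cycle of length $\ge 3$ and its reversal are distinct permutations of equal sign and together contribute $\pm 2\bigl(\prod_{e\in C}M[e]\bigr)\prod_{e\in C}s_e$; a loop contributes $M[i,i]\,s_{\{i\}}$ linearly. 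Hence $\det(M(s))$ reduces to a \emph{multilinear} polynomial $Q(s)$, and since a nonzero multilinear polynomial cannot vanish at every $\pm1$ assignment, $M$ has an invertible symmetric signing iff $Q\not\equiv 0$ as a polynomial.

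Next I would read off the monomials of $Q$. Grouping the contributing cycle covers by the edge set $F$ used by their cycles of length $\neq 2$ (the long cycles together with the loops), the coefficient of $\prod_{e\in F}s_e$ equals
\[
\pm\,2^{c(F)}\Bigl(\prod_{e\in F}M[e]\Bigr)\cdot\!\!\!\sum_{\text{perfect matchings }P'\text{ of }G-V(F)}\ \prod_{e\in P'}M[e]^2,
\]
where $c(F)$ counts the cycle components of $F$ of length $\ge 3$. The crucial point is that \emph{no cancellation} occurs: a $2$-regular edge set has a unique decomposition into cycles, reversing a cycle does not change the sign of the permutation, and all perfect matchings of $G-V(F)$ have the same cardinality, so every cycle cover contributing to this monomial carries the same global sign, and the inner sum is a sum of strictly positive terms. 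Therefore $Q\not\equiv 0$ iff there is a set $F$ of vertex-disjoint long cycles and loops with $G-V(F)$ having a perfect matching, which is precisely the condition that $G$ has a cycle cover, i.e.\ that $B(M)$ has a perfect matching. Conversely, if $B(M)$ has no perfect matching then $G$ has no cycle cover, so the permutation expansion of $\det(M(s))$ has no surviving term at all and every signing is singular. This proves the characterization, and a bipartite matching routine on $B(M)$ runs in polynomial time.

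The one step needing genuine care is the no-cancellation analysis in the preceding paragraph: one must ensure that collapsing the determinant modulo $s_e^2=1$ does not annihilate $Q$, and this rests on the uniqueness of the cycle decomposition of a $2$-regular graph, the invariance of a cycle's sign under reversal, and the positivity of the matching contributions $M[e]^2$. Everything else — the passage to a polynomial identity, the observation that matching edges contribute signs oblivious to the choice of $s$, and the final reduction to bipartite matching — is routine.
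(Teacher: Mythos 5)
Your proof is correct, reaches the same combinatorial criterion as the paper---a perfect $2$-matching in the support graph, which you state in the equivalent bipartite form as a perfect matching in $B(M)$, just as Lemma~\ref{L:algo_find_perfect} does---and obtains it by a genuinely different argument than the paper's. The paper's Lemma~\ref{L:zero-terms} is proved by a minimal-counterexample device: assuming some $M_\tau\neq 0$ yet $\det(M(s))\equiv 0$, one extracts a minimum-cardinality set $\Gamma\ni\tau$ of permutations with $\sum_{\sigma\in\Gamma}M_\sigma(s)\equiv 0$, proves via Claim~\ref{claim:coinciding-cycles-loops} (using sign flips on a single edge) that every $\sigma\in\Gamma$ shares $\tau$'s cycles and loops, and then derives a contradiction because the surviving matching factors are perfect squares. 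You instead perform the multilinearization $s_e^2=1$ explicitly and compute the coefficient of each surviving monomial $\prod_{e\in F}s_e$ directly, identifying it as $\pm\,2^{c(F)}\bigl(\prod_{e\in F}M[e]\bigr)\sum_{P'}\prod_{e\in P'}M[e]^2$, a signed positive multiple of a sum of positive terms; nonvanishing of the multilinear polynomial at some $\pm1$ point then forces some coefficient to be nonzero, i.e., some vertex-disjoint set $F$ of long cycles and loops with $G-V(F)$ perfectly matchable. Both proofs turn on the same two facts---symmetry of $M$ makes matching edges contribute $M[e]^2\geq 0$, and permutations with identical cycle/loop edge sets have identical sign because the number of matching edges and the number of even cycles are both determined by $F$---but your version is the direct, coefficient-by-coefficient computation of what the paper packages as a minimality contradiction. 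Indeed, the paper's remark immediately following Lemma~\ref{L:zero-terms} acknowledges that a DeMillo--Lipton--Schwartz--Zippel-style polynomial-identity proof is available; your argument is essentially that alternative carried out concretely, with the extra dividend of an explicit formula for every coefficient of the multilinear reduction $Q$, and of the clean observation that the criterion coincides with the classical Frobenius--K\"onig condition for \emph{unconstrained} (non-symmetric) sign patterns, so that restricting to symmetric signings costs nothing at the level of zero/nonzero patterns.
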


Our algorithm for solving \textsc{InvertibleSigning} is based on a novel graph-theoretic characterization of symmetric matrices $M$ for which every symmetric signed matrix $M(s)$ is singular. We believe that our characterization might be of independent interest. We describe the characterization now.  

The \EMPH{support graph} of a real symmetric $n\times n$ matrix $M$ is an undirected graph $G$ where the vertex set of $G$ is $\Set{1,\dots,n}$, and the edge set of $G$ is $\Set{\bigl. \{u,v\} \mid M[u,v]\neq 0}$. We note that $G$ could have self-loops depending on the diagonal entries of $M$. 
A \EMPH{perfect 2-matching} in a graph $G$ with edge set $E$ is an assignment $x:E\to \{0,1,2\}$ of values to the edges such that $\sum_{e\in \delta(v)}x_e=2$ holds for every vertex $v$ in $G$ (where $\delta(v)$ denotes the set of edges incident to $v$). 
We show the following characterization from which Theorem \ref{thm:invertible} follows immediately.

\begin{theorem}\label{thm:characterization}
Let $M$ be a symmetric $n \times n$ matrix and let $G$ be the support graph of $M$. 
The following are equivalent: 
\begin{enumerate}\itemsep3pt
\item The signed matrix $M(s)$ is singular for every symmetric signing $s$.
\item The support graph $G$ does not contain a perfect 2-matching.
\end{enumerate}
Moreover, there exists a polynomial-time algorithm to verify whether the signed matrix $M(s)$ is singular for every symmetric signing $s$.
\end{theorem}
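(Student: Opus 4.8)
The plan is to establish the (contrapositive) equivalence ``\emph{$M$ admits an invertible symmetric signing if and only if the support graph $G$ contains a perfect $2$-matching},'' from which both assertions of the theorem follow once we note that the existence of a perfect $2$-matching is decidable in polynomial time. Everything is driven by the Leibniz expansion
\[
\det M(s)\;=\;\sum_{\sigma\in S_n}\sgn(\sigma)\prod_{i=1}^n M[i,\sigma(i)]\,s[i,\sigma(i)],
\]
in which a permutation $\sigma$ contributes a nonzero term only if $(i,\sigma(i))$ is an edge or self-loop of $G$ for every $i$. The ``only if'' direction is then immediate: given a signing $s$ with $\det M(s)\neq 0$, pick a $\sigma$ whose term is nonzero, decompose it into cycles (fixed points correspond to self-loops of $G$, transpositions to edges, longer cycles to cycles of $G$), and let $x_e$ be the multiplicity with which $\sigma$ uses $e$; then $x_e=2$ on a transposed edge, $x_e=1$ on the edges of a longer cycle and on an implicated self-loop, and $\sum_{e\in\delta(v)}x_e=2$ at every vertex, so $x$ is a perfect $2$-matching of $G$.

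The ``if'' direction is the heart of the argument, and here the plan is to freeze the magnitudes of $M$ and regard the signs as formal variables. View $\det M(s)$ as a real polynomial in $\{s_e:e\in E(G)\}$ (using $s[i,j]=s[j,i]=s_e$) and reduce it modulo the relations $s_e^2=1$, obtaining a \emph{multilinear} polynomial $q$ that agrees with $\det M(s)$ on all of $\{\pm1\}^{E(G)}$. In $q$ a transposed edge contributes the sign-free factor $-M[e]^2$, a self-loop at $i$ contributes $M[i,i]\,s_{ii}$, and a $k$-cycle $C$ with $k\ge 3$ contributes, summed over its two orientations, the term $2(-1)^{k-1}\bigl(\prod_{e\in C}M[e]\bigr)\bigl(\prod_{e\in C}s_e\bigr)$; in particular the monomial a permutation contributes to $q$ records exactly the edges on its long cycles and the self-loops it uses. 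Now take a perfect $2$-matching $x$ of $G$ and split it: the edges with $x_e=1$ form a vertex-disjoint union of cycles $T$ of length $\ge 3$ together with a set $I$ of self-loops, and the edges with $x_e=2$ form a perfect matching $N$ of $W:=V\setminus V(T)\setminus I$. I would then show that the coefficient in $q$ of the monomial $\prod_{e\in T}s_e\cdot\prod_{i\in I}s_{ii}$ is nonzero. Since $T$ is $2$-regular on $V(T)$, the only family of vertex-disjoint long cycles with edge set $T$ is the set of connected components of $T$; the self-loops used are forced to be $I$; and the leftover vertices $W$ must be covered by transposed edges, i.e.\ by a perfect matching of $G[W]$. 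Hence this coefficient equals
\[
\Bigl(\prod_{\text{components }C\text{ of }T}2(-1)^{|C|-1}\prod_{e\in C}M[e]\Bigr)\Bigl(\prod_{i\in I}M[i,i]\Bigr)\sum_{N'}\,\prod_{e\in N'}\bigl(-M[e]^2\bigr),
\]
the last sum over perfect matchings $N'$ of $G[W]$. The first two factors are nonzero because $M$ is nonzero on every edge of $G$; the sum is nonzero because each summand equals $(-1)^{|N'|}\prod_{e\in N'}M[e]^2$ with $|N'|$ the same for all $N'$, so all summands share one sign and cannot cancel, and at least one $N'$ exists (namely $N$). Thus $q\not\equiv 0$, and since a nonzero multilinear real polynomial cannot vanish identically on $\{\pm1\}^k$ (a short induction, or orthogonality of the characters $s\mapsto\prod_{e\in S}s_e$), some symmetric signing $s$ satisfies $\det M(s)=q(s)\neq 0$.

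For the algorithmic claim it suffices to decide whether $G$ has a perfect $2$-matching; this is the classical degree-constrained subgraph problem (with degree bound $2$), solvable in polynomial time, e.g.\ via Tutte's reduction to ordinary maximum matching, and by the equivalence above it decides whether $M(s)$ is singular for every symmetric signing $s$. I expect the main obstacle to be the no-cancellation step in the ``if'' direction: the test monomial must be chosen so that its cycle part rigidly determines the long-cycle structure of every contributing permutation, and one must then argue that the only remaining freedom --- which perfect matching covers $W$ --- contributes terms of a single sign. This is exactly where realness of $M$ enters, through $-M[e]^2<0$, and the argument would genuinely fail over $\mathbb{C}$; secondary care is needed with the permutation signs, the orientation-doubling factor for long cycles, and the conventions for self-loops in the definition of a perfect $2$-matching.
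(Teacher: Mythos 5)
Your proof is correct, and it takes a genuinely different route from the paper's. The paper proves the key Lemma~\ref{L:zero-terms} by contradiction: assuming some $M_\tau \neq 0$, it takes a \emph{minimum-cardinality} set $\Gamma \ni \tau$ of permutations whose contributions cancel for every signing, proves via a sign-flipping parity argument (Claim~\ref{claim:coinciding-cycles-loops}) that every $\sigma \in \Gamma$ has the same cycle and loop structure as $\tau$, and then factors $\sum_{\sigma\in\Gamma}M_\sigma(s)$ into $M_\tau(s)/M_{\matchings}(\tau,s)$ times a nonempty sum of perfect squares, contradicting minimality. You instead work directly with the sign polynomial: reduce $\det M(s)$ modulo $\langle s_e^2 - 1\rangle$ to a multilinear polynomial $q$, extract the coefficient of the monomial $\prod_{e\in T}s_e\cdot\prod_{i\in I}s_{ii}$ determined by a given perfect $2$-matching, and show it is nonzero because the $2$-regularity of $T$ rigidly fixes the long-cycle decomposition of every contributing permutation, leaving only the perfect-matching freedom on $W$, whose contributions $(-1)^{|N'|}\prod_{e\in N'}M[e]^2$ all carry the same sign. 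Both proofs hinge on the same no-cancellation observation (matching edges contribute perfect squares), but your framing sidesteps the minimality argument entirely and replaces it with explicit coefficient extraction plus the elementary fact that a nonzero multilinear real polynomial does not vanish on $\{\pm1\}^k$. This is essentially the alternative via the DeMillo--Lipton--Schwartz--Zippel route that the paper remarks upon (after Claim~\ref{claim:coinciding-cycles-loops}) but does not carry out; your version is somewhat more explicit, as it identifies a concrete witness monomial rather than invoking a polynomial-identity lemma as a black box. One small point worth flagging for a careful write-up: the split of a perfect $2$-matching into cycle edges $T$, loops $I$, and doubled edges $N$, and the claim that $T$ is a vertex-disjoint union of cycles of length at least $3$, relies on the convention that a self-loop contributes $2$ to $\sum_{e\in\delta(v)}x_e$ and on the support graph being a simple graph (no parallel edges) -- both hold for support graphs of symmetric matrices, but should be stated.
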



\paragraph{Remark.} For a subset $S$ of vertices in a graph $G$, let $N_G(S)$ be the \emph{non-inclusive neighborhood} of $S$, that is, $\Set{\bigl. u\in V\setminus S \mid \text{$\Set{u,v}$ is an edge of $G$ for some $v$ in $S$} }$.
A subset $S$ of vertices is said to be \emph{independent} if there are no edges between any pair of vertices in $S$. 
A subset $S$ of vertices is said to be \emph{expanding} in $G$ if $|N_G(S)|\ge |S|$.
Tutte \cite{t-1og-53} showed that the existence of a \emph{non-expanding independent set} is equivalent to the absence of perfect 2-matchings in the graph, which in turn has been used in the study of independent sets \cite{BNR96,al-istgp-07,To14}.
%
%
Thus, Theorem \ref{thm:characterization} can be interpreted as a \emph{spectral characterization} of graphs with non-expanding independent sets: a graph contains a non-expanding independent set if and only if every symmetric signed adjacency matrix of the graph is singular. 

\vspace{9pt}
Our next result focuses on the search variant of \textsc{InvertibleSigning}. We mention that our proof of Theorem \ref{thm:characterization} is non-constructive, that is, even if the support graph of the given matrix has a perfect 2-matching, our proof does not lead to an efficient algorithm to find an invertible signing. While we do not have an efficient algorithm for the search problem for arbitrary symmetric matrices, we obtain an efficient algorithm for those whose support graph is bipartite. 

\begin{theorem}
\label{thm:non-singular-signing-search}
There exists a polynomial-time algorithm to verify if a given symmetric matrix $M$, whose support graph is bipartite, has a symmetric signing $s$ such that $M(s)$ is invertible and if so, find such a signing.
\end{theorem}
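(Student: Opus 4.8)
The plan is to reduce the problem to signing a single \emph{square} matrix so that its determinant is nonzero, and then to solve that by a greedy sign assignment guided by a perfect matching. Since the support graph $G$ of $M$ is bipartite it has no self-loops, so $M$ has zero diagonal; fixing a bipartition $(L,R)$ of $\{1,\dots,n\}$ and reordering the indices accordingly, $M=\left(\begin{smallmatrix}0 & B\\ B^{\top} & 0\end{smallmatrix}\right)$ for some $|L|\times|R|$ real matrix $B$ whose support is exactly the bipartite adjacency relation of $G$. A symmetric signing $s$ affects $M(s)$ only through the $\{\pm1\}$ signs it places on the edges, and by symmetry these are captured by a single signing $\sigma$ of the nonzero entries of $B$; writing $\widetilde B:=B(\sigma)$ we get $M(s)=\left(\begin{smallmatrix}0 & \widetilde B\\ \widetilde B^{\top} & 0\end{smallmatrix}\right)$, which is invertible if and only if $|L|=|R|$ and $\det\widetilde B\neq0$. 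So the first steps are: compute $(L,R)$; if $|L|\neq|R|$, report that no invertible symmetric signing exists (the rank of $M(s)$ is then at most $2\min(|L|,|R|)<n$). Otherwise set $k:=|L|=|R|$, so $B$ is $k\times k$, and it remains to decide the signs of its nonzero entries.

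Next, expand $\det\widetilde B=\sum_{\pi\in S_k}\sgn(\pi)\prod_i B[i,\pi(i)]\,\sigma[i,\pi(i)]$: a term is nonzero only when $\pi$ is a perfect matching of the bipartite graph $G$, so if $G$ has no perfect matching then $\det\widetilde B=0$ for every $\sigma$ and we again report infeasibility. (Since a bipartite graph has a perfect $2$-matching exactly when it has a perfect matching, this also recovers, in the bipartite case, the characterization of Theorem~\ref{thm:characterization}, and shows the algorithm does not need that nonconstructive theorem.) So suppose $G$ has a perfect matching $N$; reorder $L=\{l_1,\dots,l_k\}$ and $R=\{r_1,\dots,r_k\}$ so that $\{l_i,r_i\}\in N$, hence $B[i,i]\neq0$ for all $i$.

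Now build $\sigma$ in $k$ rounds, maintaining the invariant that after round $j$ the leading $j\times j$ principal submatrix $\widetilde B_j$ of $\widetilde B$ is invertible (whose entries have all been decided by then); the base case $j=1$ holds since $\widetilde B_1=[\pm B[1,1]]$ with $B[1,1]\neq0$. In round $j\ge 2$ we must fix the signs of the newly appearing entries of the leading block, namely the nonzero entries of row $j$ and of column $j$ among the first $j$ rows and columns; assign $+1$ to all of these except the diagonal entry $\widetilde B[j,j]$. Expanding $\det\widetilde B_j$ along its last column, the variable $\widetilde B[j,j]$ survives only in the $i=j$ cofactor term, which equals $\widetilde B[j,j]\cdot\det\widetilde B_{j-1}$; hence, once the round-$j$ auxiliary signs are fixed, $\det\widetilde B_j$ is an affine function of $\widetilde B[j,j]$ with slope $\det\widetilde B_{j-1}\neq0$. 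Since $\widetilde B[j,j]$ ranges over the two distinct values $\pm B[j,j]$, at least one of them makes $\det\widetilde B_j\neq0$; choose the sign of $\widetilde B[j,j]$ accordingly. After round $k$, $\widetilde B_k=\widetilde B$ is invertible, so $M(s)$ is invertible; output the symmetric signing $s$ obtained by filling the diagonal and non-edge positions arbitrarily (say with $+1$). Each round costs a constant number of $k\times k$ determinant or linear-system solves, so the algorithm runs in polynomial time.

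The conceptual content is light; the main points to get right are the reduction to the single square block $\widetilde B$, the observation that a matching edge $\{l_j,r_j\}$ turns $\widetilde B[j,j]$ into a free sign whose two values cannot both annihilate an affine polynomial with nonzero leading coefficient, and the bookkeeping showing that the $|L|\neq|R|$ and no-perfect-matching cases are genuine negative instances. The reason this yields an efficient \emph{search} algorithm---unlike the proof of Theorem~\ref{thm:characterization}---is precisely that bipartiteness lets us certify nonsingularity incrementally, one leading principal minor at a time.
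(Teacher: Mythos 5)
Your proof is correct and takes a genuinely different, and cleaner, route to the incremental step than the paper does.

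Both you and the paper reduce the search to an incremental construction seeded by a perfect matching. The paper's version (Lemma~\ref{lem:partial-signing-bipartite} and the algorithm in Figure~\ref{F:find_bipartite_invertible_signing}) starts from the signed matching-subgraph and adds \emph{one edge at a time}; it then shows that for the new edge, one of the two signs keeps $\det A_{G+e}(s')$ nonzero. Because that edge appears in two symmetric positions of the full $n\times n$ matrix, $\det A_{G+e}(s')$ is a \emph{quadratic} in the new sign $x$, and ruling out $f(1)=f(-1)=0$ requires a nontrivial argument: Schur complements to relate the coefficients to $\det A_{G-r}(s)$ and $\det A_{G-\ell}(s)$, then the parity observation that $G-r$ and $G-\ell$ are odd-order bipartite graphs, hence have no perfect $2$-matching, hence are singular by the easy direction of Theorem~\ref{thm:characterization}. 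Your version instead compresses $M(s)$ to the single $k\times k$ block $\widetilde B$ (using $\det M(s)=\pm\det(\widetilde B)^2$), reorders so the matching sits on the diagonal, and grows the leading principal minor one index at a time. The decisive simplification is that $\widetilde B[j,j]$ occurs in only \emph{one} position of $\widetilde B_j$, so a Laplace expansion along the last column shows $\det\widetilde B_j$ is \emph{affine} in $\widetilde B[j,j]$ with slope $\det\widetilde B_{j-1}\ne 0$, and two distinct values of an affine function with nonzero slope cannot both vanish. No Schur complement or odd-order parity argument is needed; your invariant is exactly the classical ``all leading principal minors nonzero'' condition, which also makes the construction transparently an LU-type argument. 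The paper's edge-at-a-time lemma is more flexible in that it permits arbitrary edge-insertion orders, but for the purpose of Theorem~\ref{thm:non-singular-signing-search} your row-and-column-at-a-time schedule delivers the same conclusion with considerably less machinery. The bookkeeping on the negative instances ($|L|\neq|R|$ forcing rank deficiency; no perfect matching forcing $\det\widetilde B\equiv 0$, which in the bipartite case coincides with the no-perfect-$2$-matching criterion of Theorem~\ref{thm:characterization}) is also correct.
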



We next investigate the number of invertible signed adjacency matrices of graphs with perfect $2$-matchings. In light of Theorem \ref{thm:characterization}, we recall the variant of the randomized algorithm to verify the existence of a perfect matching in a bipartite graph mentioned in Section \ref{sec:motivations}: substitute uniformly random $\{\pm 1\}$ values into variables $x_{e}$ in the Edmonds matrix of $G$ and evaluate the determinant over the reals.  Given a graph $G$ with $n$ vertices and $m$ edges, it can be shown that if $G$ has a perfect matching then the determinant will be nonzero with probability at least $2^{-m}$ using our characterization in Theorem \ref{thm:characterization}. 

We show a much larger lower bound on this probability by obtaining a lower bound on the number of invertible signed adjacency matrices of graphs.
It is well-known that flipping the signs on the edges of a cut preserves the spectrum of the signed adjacency matrix. Thus, the existence of one invertible signed adjacency matrix for a (connected) graph $G$ also implies the existence of $2^{n-1}$ invertible signed adjacency matrices. 
In comparison, the lower bound obtained in our next result is much larger.
We emphasize that our lower bound holds for general simple graphs (and not just bipartite graphs).

\begin{theorem}
\label{thm:non-singular-counting}
Let $G$ be a {simple} graph with $n$ vertices and $m$ edges that has at least one perfect 2-matching.
Then, the number of invertible signed adjacency matrices of $G$ is at least ${2^{m-n}}/{n!}$.
\end{theorem}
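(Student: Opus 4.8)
The plan is to read $\det(A_G(s))$ as a low-degree multilinear polynomial in the $m$ edge-signs and then apply a Schwartz--Zippel-type estimate over the Boolean cube $\Set{\pm 1}^m$. As a first step I would observe that, since $G$ is simple, a symmetric signing of the adjacency matrix $A_G$ is---up to its irrelevant diagonal and non-edge entries---nothing but an assignment $s\in\Set{\pm 1}^{E(G)}$, and distinct assignments give distinct signed matrices; so the quantity to be bounded is exactly $N:=\Abs{\Set{s\in\Set{\pm 1}^{E(G)}:\det(A_G(s))\neq 0}}$.

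Next I would establish two facts about the function $p(s):=\det(A_G(s))$ on $\Set{\pm 1}^{E(G)}$. First, \emph{$p$ is not identically zero}: this is precisely the contrapositive of the implication $(2)\Rightarrow(1)$ in Theorem~\ref{thm:characterization}, since $G$ has a perfect $2$-matching and hence $A_G$ admits some invertible symmetric signing. Second, \emph{$p$ coincides on $\Set{\pm 1}^{E(G)}$ with a multilinear polynomial of degree at most $n$}: expanding $\det(A_G(s))$ by the Leibniz formula, and using that $A_G$ has zero diagonal and $\Set{0,1}$ off-diagonal entries, a permutation $\sigma$ contributes a nonzero term only when $\sigma$ is fixed-point-free and every $\{i,\sigma(i)\}$ is an edge of $G$, in which case its term is $\sgn(\sigma)\prod_i s_{\{i,\sigma(i)\}}$. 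An edge spanned by a transposition of $\sigma$ occurs twice in this product and contributes $s_e^2=1$, whereas an edge lying on a cycle of $\sigma$ of length at least $3$ occurs once; so the term equals $\sgn(\sigma)\prod_{e\in E_\sigma}s_e$, where $E_\sigma$ is the edge set of the length-$\ge 3$ cycles of $\sigma$. Those cycles are vertex-disjoint, so $\Abs{E_\sigma}\le n$, and collecting terms by the set $E_\sigma$ writes $p(s)=\sum_S c_S\prod_{e\in S}s_e$ with every $\Abs{S}\le n$.

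The remaining ingredient is an elementary counting lemma: \emph{a multilinear polynomial $q$ in $k$ variables over $\Set{\pm 1}$ of degree at most $d$ that is not identically zero on $\Set{\pm 1}^k$ is nonzero on at least $2^{k-d}$ points.} I would prove this by induction on $k$, writing $q(x)=x_k\,q_1(x_1,\dots,x_{k-1})+q_0(x_1,\dots,x_{k-1})$ with $\deg q_1\le d-1$ and $\deg q_0\le d$: if $q_1\equiv 0$ then $q=q_0$ is nonzero on at least $2^{k-1-d}$ partial assignments, each of which extends to two full assignments; and if $q_1\not\equiv 0$ then $q_1$ is nonzero on at least $2^{k-1-(d-1)}=2^{k-d}$ partial assignments, and at each of them at least one of the two choices of $x_k$ keeps $q$ nonzero. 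Applying the lemma to $p$ with $k=m$ and $d=n$ gives $N\ge 2^{m-n}\ge 2^{m-n}/n!$, which is the assertion.

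Given Theorem~\ref{thm:characterization}, there is really no obstacle here: the non-vanishing of $p$ is the only substantive input, and the degree bound and the counting lemma are routine. Two remarks. The argument in fact yields the slightly stronger bound $N\ge 2^{m-n}$. And if one wishes to land exactly on the stated constant by a more hands-on route, one can fix on the at most $n$ edges of a perfect $2$-matching the signs coming from one invertible signing and let the remaining at least $m-n$ edges vary: then $p$ restricts to a polynomial that is not identically zero and has at most $n!$ monomials in those free variables, so the uncertainty principle on the corresponding Boolean cube produces at least $2^{m-n}/n!$ good signings---which is presumably the origin of the $1/n!$ factor.
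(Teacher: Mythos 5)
Your proposal is correct, and it is a genuinely different (and cleaner) argument than the paper's. You identify that $\det(A_G(s))$, viewed as a function of the edge-sign vector $s\in\{\pm1\}^{E(G)}$, is a multilinear polynomial of degree at most $n$: only the edges lying on cycles of length $\ge 3$ in a permutation's cycle decomposition contribute a single factor $s_e$ (transposition edges contribute $s_e^2 = 1$), and the vertex-disjointness of those cycles bounds their total edge count by $n$. Combined with the non-vanishing of $\det(A_G(s))$ (which follows from the existence of a perfect $2$-matching via Theorem~\ref{thm:characterization}), your Reed--Muller/Schwartz--Zippel-type counting lemma ("a nonzero degree-$d$ multilinear function on $\{\pm1\}^k$ is nonzero on $\ge 2^{k-d}$ points") immediately gives at least $2^{m-n}$ invertible signings. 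Your inductive proof of that lemma is correct, and the reduction from counting matrices to counting edge-sign assignments is valid for simple $G$.

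Note that you actually prove the \emph{stronger} bound $2^{m-n}$, improving on the stated $2^{m-n}/n!$ by a factor of $n!$. The paper's proof instead fixes a permutation $\tau$ with the largest number of matching edges in its cycle decomposition, restricts attention to the signings that are $+1$ on every cycle and loop edge of $\tau$ (a set $\mathcal{R}$ of size at least $2^{m-n}$), and then bounds $\bigl|\sum_{s\in\mathcal{R}}\sum_{\omega\in S_n}A_\omega(s)\bigr|$ from above by $|\Gamma_1|\cdot n!$ and from below by $|\mathcal{R}|$, picking up the $n!$ loss in the trivial upper bound $|\!\det(A(s))|\le n!$. Both arguments pivot on the same structural observation---that fixing the cycle-edge signs pins down the contribution of the extremal permutation(s) while everything else cancels---but packaging it as a degree bound and invoking the standard minimum-weight fact for multilinear functions on the cube is tidier and removes the $n!$. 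Your closing speculation about how one could "land exactly on $2^{m-n}/n!$" via the uncertainty principle is fine as a side remark but unnecessary, since the main argument dominates the stated bound anyway.
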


Theorem \ref{thm:non-singular-counting} shows that the fraction of invertible signed adjacency matrices of a graph $G$ containing a perfect $2$-matching is at least $2^{-O(n\log n)}$. An upper bound of $2^{-\Omega(n)}$ on the fraction is demonstrated by the graph that is a disjoint union of $4$-cycles.

\vspace{9pt}
Finally, akin to \emph{frustration index}, 
we define the \emph{solvability index} of a real symmetric matrix $M$ to be the smallest number of non-diagonal zero entries that need to be converted to non-zeroes so that the resulting \emph{symmetric} matrix has an invertible symmetric signing. We give an efficient algorithm to find the solvability index of a given symmetric matrix $M$. We emphasize that the support-increase operation that we consider preserves symmetry, that is, if we replace the zero entry $A[i,j]$ by $\alpha$, then the zero entry $A[j,i]$ is also replaced by $\alpha$. 

\begin{theorem}
\label{thm:algo-edge-add}
There exists a polynomial-time algorithm to find the solvability index of a given real symmetric matrix. 
\end{theorem}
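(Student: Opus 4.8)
The plan is to reduce the computation of the solvability index to a single minimum-weight perfect matching computation. The first step is to pass from the algebraic formulation to a graph-theoretic one via Theorem~\ref{thm:characterization}. Let $G$ be the support graph of the input symmetric matrix $M$ on vertex set $[n]$, where each looped vertex of $G$ records a nonzero diagonal entry. Converting an off-diagonal zero pair $M[i,j], M[j,i]$ into nonzeros is exactly the operation of adding the edge $\{i,j\}$ (with $i\neq j$) to $G$, and by Theorem~\ref{thm:characterization} the modified matrix has an invertible symmetric signing if and only if the modified support graph has a perfect $2$-matching. Hence (up to the harmless factor of two coming from symmetry) the solvability index of $M$ equals the minimum number of non-loop edges whose addition gives $G$ a perfect $2$-matching, which in turn equals $\min_x |\{\,e : x_e\ge 1,\ e\notin E(G)\,\}|$ over all perfect $2$-matchings $x$ of the complete graph on $[n]$ with self-loops placed at the looped vertices of $G$.

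Next I would set up the matching instance. A perfect $2$-matching is a spanning multigraph in which every vertex has degree exactly two, i.e.\ a vertex-disjoint union of self-loops, doubled edges, and cycles of length at least three (equivalently, one may argue through perfect matchings of the bipartite double cover). Using the classical Tutte-style blow-up that turns degree-constrained subgraph problems into perfect matching problems, I would build an auxiliary graph in which each vertex $v$ is replaced by a gadget enforcing ``degree exactly two'', each edge of $G$ contributes a weight-$0$ connector (together with a parallel copy, so that doubling an existing edge is free), a weight-$0$ self-loop option is attached only at the looped vertices of $G$, and each non-edge $\{u,v\}$ contributes a connector of weight $1$. A minimum-weight perfect matching of this graph, found in polynomial time by Edmonds' algorithm, then corresponds to a perfect $2$-matching of a supergraph of $G$ using the fewest possible new edges, and recovering the modified entries from the matching is straightforward.

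The crux, and the step I expect to be the main obstacle, is correctly charging \emph{doubled} new edges. If a new edge $\{u,v\}$ is used with multiplicity two, then $\{u,v\}$ is an isolated component of the $2$-matching and must be paid for once, not twice; moreover such doubled new edges genuinely cannot be avoided in an optimal solution in general (already for $G$ equal to one edge plus two isolated vertices the optimum uses one), so one cannot simply restrict to new edges of multiplicity one, nor use a naive linear weighting. I would handle this by attaching to each non-edge $\{u,v\}$ a small fixed-charge gadget that, when activated, absorbs the blow-up vertices of $u$ and $v$ in their ``doubled'' configuration at total weight $1$ and is wired so that it can only be activated as a whole, while being inert when $\{u,v\}$ is instead used once inside a longer cycle. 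Proving that the resulting reduction is value-preserving is then a structural argument: one shows, via the decomposition of perfect $2$-matchings into self-loops, doubled edges, and long cycles together with a short exchange argument, that there is always an optimal solution in which every new edge is used either exactly once within a cycle or exactly twice as an isolated doubled edge, and that these are precisely the configurations realized by perfect matchings of the auxiliary graph. Once this correspondence is established, Theorem~\ref{thm:algo-edge-add} follows.
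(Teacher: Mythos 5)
Your first step---reducing to the purely combinatorial problem of adding the fewest non-loop edges so that the support graph acquires a perfect $2$-matching---matches the paper's reduction to \textsc{EdgeAdd}, and you correctly identify the central difficulty: a new edge used with multiplicity two forms an isolated doubled component and must be charged once, not twice, so a naive linear weighting of the Tutte blow-up overcounts. Your example ($G$ a single edge plus two isolated vertices) showing that doubled new edges are genuinely necessary is also correct.

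The gap is the ``fixed-charge gadget,'' which carries the entire proof and is not actually constructed. The cost profile you need on a non-edge $\Set{u,v}$ is $c(0)=0,\ c(1)=1,\ c(2)=1$, which is concave, and concave edge costs are exactly what break the LP-integrality that makes minimum-weight perfect matching tractable; fixed-charge versions of otherwise polynomial network problems are NP-hard in general, so the burden is on you to exhibit the gadget, and the natural candidates do not work. Any design with two parallel copies of the non-edge and weights $(1,0)$ lets single use pick the free copy; weights $(\tfrac12,\tfrac12)$ undercharge single use; and a ``combo gadget'' that absorbs both vertex gadgets of $u$ and $v$ at total weight $1$ must be wired so that it cannot be activated partially (absorbing only one side, or absorbing one port on each side and thereby simulating single use at a fraction of the intended cost), which is precisely the failure mode in the straightforward constructions. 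Your fallback ``exchange argument'' does not close this: the claim that every new edge in an optimal solution is used either once in a cycle or twice as an isolated doubled edge is true but vacuous (it holds for every perfect $2$-matching by the degree constraints), so it does not restrict the solution space in a way that removes the concavity. Note also that you cannot restrict to one configuration or the other: the paper's own optimal solution uses a new edge exactly once in a $3$-cycle whenever the relevant exposed-vertex set has odd size.

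The paper avoids matching reductions entirely. It shows the optimum equals $\Ceil{\bigl. |V|/2 - \nu_f(G)}$, where $\nu_f$ is the maximum fractional matching value: the lower bound follows from the elementary fact that adding one edge increases $\nu_f$ by at most one, and the matching upper bound is achieved constructively by taking the Gallai--Edmonds decomposition $(B=(B_1,B_{\ge 3}),C,D)$, choosing a maximum matching $M$ that leaves the fewest $B_1$ vertices exposed, and pairing up the $M$-exposed $B_1$ vertices (with a small triangle patch when their number is odd). That closed-form answer is a strong hint that a structure theorem, rather than a weighted matching reduction, is the right tool here; if you want to salvage your approach, you would at minimum need to prove that your auxiliary min-weight matching always equals $\Ceil{\bigl. |V|/2 - \nu_f(G)}$, which is essentially re-deriving the paper's argument.
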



\subsection{Organization}
\label{subsec:organize}
In Section \ref{S:prelim}, we review definitions and notations. 
In Section \ref{S:Invertible}, we focus on our results related to invertible signings.  
This includes the combinatorial characterization of matrices with invertible signings (Theorem~\ref{thm:characterization}), an efficient algorithm to find an invertible signing of matrices with bipartite support graphs (Theorem~\ref{thm:non-singular-signing-search}), a lower bound on the number of invertible signed adjacency matrices (Theorem~\ref{thm:non-singular-counting}), and an efficient algorithm to find the solvability index of symmetric matrices (Theorem~\ref{thm:algo-edge-add}). 
In Section \ref{S:Singular}, we focus on our results related to singular signings.  This includes an efficient algorithm to find a singular signing of adjacency matrices of bipartite graphs (Theorem~\ref{thm:singular-singing-for-bipartite-graphs}) and a proof of NP-completeness of \textsc{SingularSigning} (Lemma~\ref{L:singular-signing-npc}).  
In Section \ref{S:hardness-evalue}, we complete Theorem \ref{thm:bounded-ev} by showing that \textsc{PsdSigning} and \textsc{BoundedEvalueSigning} are also NP-complete. 
We conclude by discussing open questions and potential avenues for future research in Section \ref{S:conclusion}.

\subsection{Preliminaries} \label{S:prelim}
Unless otherwise specified, all matrices are symmetric and take values over the reals.
Since all of our results are for symmetric signings, we will just use the term \emph{signing} to refer to a symmetric signing in the rest of this work.  
We now introduce some definitions that will be used in multiple sections.  

A \emph{matching} is a vertex-disjoint union of edges. 
Let $S_n$ denote the set of permutations of $n$ elements. Then, the \emph{permutation expansion} of the determinant of a signed matrix $M(s)$ is given by  
\[
\det M(s) = \sum_{\sigma\in S_n} \sgn(\sigma) \cdot \prod_{i=1}^n M(s)[i,\sigma(i)].
\]
For ease of presentation, let us define $M_\sigma(s) \coloneqq \sgn(\sigma) \cdot \prod_{i} M(s)[i,\sigma(i)]$ and $M_{\sigma} \coloneqq M_{\sigma}(J)$, where $J$ is the signing corresponding to all entries being $+1$.
A permutation $\sigma$ in $S_n$ 
has a unique cycle decomposition and hence 
corresponds to a vertex disjoint union of directed cycles and self-loops on $n$ vertices. Removing the orientation gives an undirected graph which is a vertex disjoint union of cycles of length at least three, matching edges, and self-loops. 

\vspace{9pt}
We use the notion of \emph{Schur complement} repeatedly. The following lemma summarizes the definition and the relevant properties of the Schur complement. 
\begin{lemma}[Horn and Johnson \cite{horn}]
\label{lem:schur-complement-properties}
Suppose $A\in \R^{p\times p}$, $B\in \R^{p\times q}$, $C\in \R^{q\times q}$ are matrices such that $A$ is invertible and the matrix 
\[
D :=
\begin{bmatrix}
A &B \\
B^T & C\\
\end{bmatrix}
\]
is a symmetric matrix.  
Then the \emph{Schur complement} of $C$ in matrix $D$ is defined to be 
\[
D_C:=C - BA^{-1}B^T.
\]
We have the following properties:
\begin{enumerate}
\item[$(i)$] Suppose $A$ is positive definite. Then, $D$ is positive semi-definite if and only if the Schur complement of $C$ in $D$, namely $D_C$, is positive semi-definite. 
\item[$(ii)$] $\det (D) = \det (A) \cdot \det (D_C)$.
\end{enumerate}
\end{lemma}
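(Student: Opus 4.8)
\textbf{Proof proposal for Lemma~\ref{lem:schur-complement-properties}.}

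The plan is to verify the block factorization of $D$ directly and then read off both properties as corollaries. The key identity is that, since $A$ is invertible, one can write
\[
D = \begin{bmatrix} A & B \\ B^T & C \end{bmatrix}
= \begin{bmatrix} I & 0 \\ B^T A^{-1} & I \end{bmatrix}
\begin{bmatrix} A & 0 \\ 0 & C - B^T A^{-1} B \end{bmatrix}
\begin{bmatrix} I & A^{-1} B \\ 0 & I \end{bmatrix}.
\]
First I would expand the right-hand product and check entrywise that it equals $D$; this is just a routine multiplication of three block matrices, using $A A^{-1} = A^{-1} A = I$ and the symmetry $D^T = D$ (which forces the $(1,2)$ block to be $B$ and the $(2,1)$ block to be $B^T$, consistent with the factorization since $A$, and hence $A^{-1}$, is symmetric). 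Note that the first and third factors are transposes of one another when $A$ is symmetric, so this is a congruence $D = P^T \widetilde D P$ with $P = \begin{bmatrix} I & A^{-1}B \\ 0 & I \end{bmatrix}$ and $\widetilde D = \begin{bmatrix} A & 0 \\ 0 & D_C \end{bmatrix}$, where $D_C = C - B^T A^{-1} B$. (Here I am writing $B^T A^{-1} B$; since $A^{-1}$ is symmetric this agrees with the $B A^{-1} B^T$ in the statement up to the obvious transpose bookkeeping of $B$ versus $B^T$.)

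For property $(ii)$, I would take determinants on both sides: the outer triangular factors have determinant $1$, so $\det D = \det \widetilde D = \det(A)\cdot \det(D_C)$, which is exactly the claim. For property $(i)$, I would use that $P$ is invertible (determinant $1$), so the congruence $D = P^T \widetilde D P$ shows that $D$ and $\widetilde D$ have the same inertia by Sylvester's law of inertia; in particular $D \succeq 0$ iff $\widetilde D \succeq 0$. Since $\widetilde D$ is block diagonal, $\widetilde D \succeq 0$ iff $A \succeq 0$ and $D_C \succeq 0$; and under the hypothesis that $A$ is positive definite, the condition $A \succeq 0$ is automatic, leaving $D \succeq 0 \iff D_C \succeq 0$. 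Alternatively, and perhaps more elementarily, I would argue directly: for any vector $y$ of the appropriate dimension, setting $x = -A^{-1} B y$ gives $\begin{bmatrix} x \\ y \end{bmatrix}^T D \begin{bmatrix} x \\ y \end{bmatrix} = y^T D_C y$, which yields the ``only if'' direction; conversely, writing $v = \begin{bmatrix} x \\ y \end{bmatrix}$ and completing the square, $v^T D v = (x + A^{-1}By)^T A (x + A^{-1}By) + y^T D_C y \ge 0$ whenever $A \succ 0$ and $D_C \succeq 0$, giving the ``if'' direction.

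The main obstacle is essentially bookkeeping rather than mathematics: one must be careful that $A$ being symmetric (as a diagonal block of the symmetric matrix $D$) is what makes the outer factors genuine transposes of each other, so that the identity is a \emph{congruence} and not merely an arbitrary similarity — the inertia argument for $(i)$ relies on this. Since this lemma is quoted verbatim from Horn and Johnson~\cite{horn}, I expect the write-up to be short: state the factorization, note it is a congruence with unit-determinant triangular factors, and deduce $(i)$ and $(ii)$ in one or two lines each. No step here should require any nontrivial estimate or construction.
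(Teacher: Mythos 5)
Your proposal is correct, but there is nothing in the paper to compare it against: Lemma~\ref{lem:schur-complement-properties} is quoted as a known result from Horn and Johnson and the paper gives no proof. Your block factorization
$D = P^T\,\mathrm{diag}(A, D_C)\,P$ with unit block-triangular $P$, followed by taking determinants for $(ii)$ and either Sylvester's law of inertia or the completing-the-square identity for $(i)$, is exactly the standard textbook argument and is sound; the care points you flag are the right ones, namely that symmetry of $D$ forces $A$ (hence $A^{-1}$) to be symmetric so that the factorization is a genuine congruence for the inertia argument, while $(ii)$ needs only invertibility of $A$ and the unit-determinant outer factors. One detail worth noting: as printed in the statement, $D_C := C - BA^{-1}B^T$ is dimensionally inconsistent with $B\in\R^{p\times q}$; the intended Schur complement is $C - B^TA^{-1}B$, which is the form you use, so your ``transpose bookkeeping'' remark correctly identifies and repairs a typo in the statement rather than a gap in your argument.
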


\section{Invertible Matrices} \label{S:Invertible}

In this section, we focus on invertible signings and prove Theorems \ref{thm:characterization},
\ref{thm:non-singular-signing-search},
\ref{thm:non-singular-counting}, 
and \ref{thm:algo-edge-add}.



\subsection{Matrices with Invertible Signings}
\label{subsection:characterization}




\def\dist{\operatorname{\mathit{dist}}}

In this section, we present a characterization of symmetric matrices $M$ for which every symmetric signed matrix $M(s)$ is singular by proving Theorem \ref{thm:characterization}. 
We start by defining the notation. 

We recall that a permutation $\sigma$ in $S_n$ 
has a unique cycle decomposition and 
corresponds to a vertex disjoint union of directed cycles and self-loops on $n$ vertices. Moreover, removing the orientation gives an undirected graph which is a vertex disjoint union of cycles of length at least three, matching edges, and self-loops. Let the collection of (undirected) edges in the cycle components, matching components, and self-loop components in the resulting undirected graph be denoted by $\cycles(\sigma)$, $\matchings(\sigma)$, and $\loops(\sigma)$ respectively. 
We observe that $\sgn(\sigma)$ is the \emph{parity} of the sum of the number of matching edges and the number of even-length cycles (cycles with even number of edges) in the undirected subgraph induced by the edges in $\cycles(\sigma)\cup \matchings(\sigma)$.  
%
For a matrix $M$ and a signing $s$, we define 
\begin{align*}
M_{\cycles}(\sigma,s) &\coloneqq \Paren{ \prod_{\Set{u,v} \in \cycles(\sigma)}  M(s)[u,v]}, \\
M_{\matchings}(\sigma,s) &\coloneqq \Paren{ \prod_{\Set{u,v} \in \matchings(\sigma)} M(s)[u,v]^2 } \text{, and}\\
M_{\loops}(\sigma,s) &\coloneqq \Paren{ \prod_{\Set{u,u} \in \loops(\sigma)} M(s)[u,u] }.
\end{align*}
We use the convention that a product over an empty set is equal to $1$.
With this notation, we have
\[
M_\sigma(s) = \sgn(\sigma) \cdot M_{\cycles}(\sigma,s) \cdot M_{\matchings}(\sigma,s) \cdot M_{\loops}(\sigma,s).
\]

The following lemma is the core of our proof of Theorem \ref{thm:characterization}.
\begin{lemma}
\label{L:zero-terms}
Let $M$ be a symmetric $n \times n$ matrix. Then $M_\sigma=0$ holds for every permutation $\sigma$ in $S_n$ if and only if the signed matrix $M(s)$ is singular for all signing $s$.
\end{lemma}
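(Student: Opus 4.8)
The plan is to prove the contrapositive in one direction and a direct implication in the other, working through the permutation expansion $\det M(s) = \sum_\sigma M_\sigma(s)$. First observe the easy direction: if there is some permutation $\sigma$ with $M_\sigma \neq 0$, I want to exhibit a signing $s$ making $\det M(s) \neq 0$. The key identity is $M_\sigma(s) = \sgn(\sigma)\cdot M_{\cycles}(\sigma,s)\cdot M_{\matchings}(\sigma,s)\cdot M_{\loops}(\sigma,s)$, and the crucial structural point is that $M_{\matchings}(\sigma,s)$ is a product of \emph{squares} $M(s)[u,v]^2 = M[u,v]^2$, hence independent of $s$; likewise $M_{\loops}(\sigma,s) = M_{\loops}(\sigma,J)$ since diagonal entries of any symmetric signing are fixed to $\pm 1$ but actually only the absolute values matter there too once we argue carefully — more precisely $|M_\sigma(s)| = |M_\sigma|$ for every $s$. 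So the magnitude of every term in the permutation expansion is signing-invariant; only the signs of the cycle contributions $M_{\cycles}(\sigma,s)$ can be flipped. The task is then to choose $s$ so that no cancellation occurs, or at least so that the total sum is nonzero.

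The natural approach to the hard direction is a probabilistic/counting argument over random signings. Suppose $M_\sigma \neq 0$ for at least one $\sigma$; I want to show $\Pr_s[\det M(s) \neq 0] > 0$ where $s$ is a uniformly random symmetric signing. Consider $\det M(s)$ as a polynomial in the (finitely many) random sign variables $s[u,v]$ for $u < v$ on the support of $M$. Since $|M_\sigma(s)| = |M_\sigma|$ and there is a nonzero such term, the polynomial $\det M(s)$ is \emph{not identically zero} as a function of the sign variables — but I must be careful, because over $\{\pm 1\}$ variables a polynomial can vanish everywhere while being formally nonzero. The right move is to look at $\E_s[(\det M(s))^2]$ or to reduce to a single monomial: expand $(\det M(s))^2 = \sum_{\sigma,\tau} M_\sigma(s) M_\tau(s)$ and take expectations; each $s[u,v]$ is $\pm 1$ uniform and independent, so $\E_s[M_\sigma(s)M_\tau(s)]$ is nonzero only when the symmetric difference of the edge-multisets $\cycles(\sigma)$ and $\cycles(\tau)$ (with $\matchings$ contributing squares, hence always present with even multiplicity) has every edge appearing an even number of times. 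The diagonal term $\sigma = \tau$ always contributes $\E_s[M_\sigma(s)^2] = M_\sigma^2 \geq 0$, and this is strictly positive for the $\sigma$ we assumed nonzero; so I need to argue the cross terms cannot drive the total to zero, or better, that $\E_s[(\det M(s))^2] > 0$ directly because it equals a sum of squares plus controlled cross-terms.

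Actually the cleanest path, and the one I would bet the authors take, avoids the second-moment subtlety: choose a specific $\sigma^*$ with $M_{\sigma^*} \neq 0$ whose undirected graph uses the \emph{fewest edges in cycle components}, or more cleverly, recall that a permutation with only matching edges and loops (no cycles of length $\geq 3$) has $M_{\sigma^*}$ signing-invariant up to sign and contributes a term that cannot be cancelled by anything of strictly different monomial support. So the real plan: among all $\sigma$ with $M_\sigma \neq 0$, if some such $\sigma$ decomposes into only fixed points and transpositions (a partial-matching-type permutation, i.e. $\cycles(\sigma) = \emptyset$), then its contribution to $\det M(s)$ is $\pm\prod M[u,v]^2 \prod M[u,u]$, a fixed nonzero real with no sign variables, and one shows it is the unique term with that particular monomial in the $s$-variables, so $\det M(s)$ restricted to... hmm, this still needs the monomial-uniqueness claim. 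The \textbf{main obstacle} I anticipate is exactly this: controlling cancellation among terms $M_\sigma(s)$ that share the same absolute value. I would handle it by identifying, via Tutte's theorem on perfect $2$-matchings, that the existence of \emph{some} nonzero $M_\sigma$ is equivalent to $G$ having a perfect $2$-matching, then decomposing that $2$-matching into $\{0,2\}$-edges (giving matching components, squares, sign-free) and even cycles — and for the even-cycle parts, orient each cycle in \emph{both} directions to get two permutations whose contributions to $\E_s[(\det M(s))^2]$ reinforce rather than cancel. The endgame is an expected-value-is-positive argument: $\E_s[(\det M(s))^2] \geq M_{\sigma^*}^2 - (\text{cross terms})$, controlled so the sum stays positive, or cleaner, showing the whole expectation is a genuine sum of nonnegative contributions grouped by monomial support.
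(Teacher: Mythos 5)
Your proposal correctly isolates the central structural fact — that $M_{\matchings}(\sigma,s)$ is a product of squares and hence signing-invariant, so that $|M_\sigma(s)|=|M_\sigma|$ and only the cycle and loop signs can flip — but none of the three routes you sketch is actually brought to a conclusion, and one of them is based on a false premise.

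The paper's proof is an extremal argument, not a moment or Fourier argument. It supposes some $\tau$ has $M_\tau\neq 0$, takes a \emph{minimum-cardinality} set $\Gamma\ni\tau$ with $\sum_{\sigma\in\Gamma}M_\sigma(s)=0$ for all $s$, and proves (Claim~\ref{claim:coinciding-cycles-loops}) that every $\sigma\in\Gamma$ shares $\tau$'s cycles and loops, by flipping the sign on a single edge $e$ in the symmetric difference of $\cycles\cup\loops$ between $\tau$ and some $\tau'\in\Gamma$ and observing that terms with $e$ in a cycle or loop negate while the rest are unchanged, splitting the zero-sum constraint into two smaller zero-sum constraints, one of which contains $\tau$ and contradicts minimality. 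Once all of $\Gamma$ shares cycles and loops with $\tau$, every $M_\sigma(s)$ is a positive multiple of $M_\tau(s)$ (the ratio $M_{\matchings}(\sigma,s)/M_{\matchings}(\tau,s)$ is a ratio of nonnegative quantities), so the sum over $\Gamma$ cannot vanish.

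Your monomial-support instinct is in fact the other viable route — the one the paper alludes to in the remark as a Schwartz–Zippel-style proof. Here is what is missing from your version. Writing $M_\sigma(s)=M_\sigma\cdot\prod_{e\in\cycles(\sigma)\cup\loops(\sigma)}s_e$, the determinant becomes a multilinear polynomial in the $\{\pm1\}$ sign variables, and vanishing identically on the cube forces every Fourier coefficient to vanish: $\sum_{\sigma:\cycles(\sigma)=C,\loops(\sigma)=L}M_\sigma=0$ for every $(C,L)$. The crucial step you never state is that \emph{within such a class} all nonzero $M_\sigma$ have the same sign: $\sgn(\sigma)$, $M_{\cycles}$, and $M_{\loops}$ are determined by $(C,L)$, and $M_{\matchings}(\sigma)\geq 0$ always; so the class-sum vanishes only if every term vanishes. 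Without this, the grouping does nothing, because the obstacle you yourself identified (cancellation among permutations with the same monomial) is exactly the thing that must be ruled out.

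Two concrete problems with your other alternatives. First, the second-moment computation $\E_s[(\det M(s))^2]\geq M_{\sigma^*}^2-(\text{cross terms})$ is not a proof: the cross terms can be negative and nothing you say bounds them; the expectation of a square is of course nonnegative, but showing it is \emph{strictly positive} is precisely the statement to be proven, so this is circular unless you reduce to the diagonal via the same per-class sign observation above. Second, the fallback of choosing a $\sigma^*$ with $\cycles(\sigma^*)=\emptyset$ (only fixed points and transpositions) fails: if the only perfect $2$-matchings of the support graph use odd cycles (e.g., a triangle with no loops), then no such $\sigma^*$ exists, yet some $M_\sigma\neq 0$.
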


\begin{proof}
The forward implication follows immediately: If $M_{\sigma}=0$ for every permutation $\sigma$ in $S_n$, then for every signing $s$, we have $M_{\sigma}(s)=0$ for every permutation $\sigma$ in $S_n$. Hence, every term in the permutation expansion of the determinant of $M(s)$ is zero for every signing $s$.

We now show the reverse implication. 
By assumption, $M(s)$ is singular for all signings $s$:
\[
\det(M(s)) = \sum_{\sigma\in S_n} M_\sigma(s) = 0.
\]

Assume for contradiction that there is a permutation $\tau$ such that $M_\tau\neq 0$.
Let $\Gamma$ be a \emph{minimum cardinality} subset of permutations in $S_n$ such that 
\begin{enumerate}\itemsep=2pt
\item[($i$)] $\tau\in \Gamma$ and
\item[($ii$)] $\sum_{\sigma\in\Gamma} M_\sigma(s)=0$ for all signings $s$.
\end{enumerate}
We observe that such a set $\Gamma$ exists since $S_n$ is a valid choice for $\Gamma$. 
The following claim, which we prove later, shows that the cycles and the self-loops of all permutations in $\Gamma$ coincide with that of $\tau$. 
\begin{claim}\label{claim:coinciding-cycles-loops}
For every permutation $\sigma$ in $\Gamma$, we have $\cycles(\sigma) = \cycles(\tau)$ and $\loops(\sigma) = \loops(\tau)$.
\end{claim}

Now we show that Lemma \ref{L:zero-terms} follows from the claim.  Let us fix a permutation $\sigma$ in $\Gamma$.  
By Claim \ref{claim:coinciding-cycles-loops}, we have $\cycles(\sigma) = \cycles(\tau)$ and $\loops(\sigma) = \loops(\tau)$---i.e., the set of edges in cycle components and the set of edges in self-loop components in $\sigma$ and $\tau$ coincide.  Furthermore, Claim \ref{claim:coinciding-cycles-loops} also implies that the number of matching edges in $\sigma$ and $\tau$ is the same and hence, $\sgn(\sigma) = \sgn(\tau)$.  
So we have
\begin{align*}
M_\sigma(s) 
&= \sgn(\sigma) \cdot M_{\cycles}(\sigma,s) \cdot M_{\matchings}(\sigma,s) \cdot M_{\loops}(\sigma,s) \\
&= \sgn(\tau) \cdot M_{\cycles}(\tau,s) \cdot M_{\matchings}(\sigma,s) \cdot M_{\loops}(\tau,s) \\
&= M_\tau(s) \cdot \Paren{\frac{M_{\matchings}(\sigma,s)}{M_{\matchings}(\tau,s)}}.
\end{align*}
%
%
Using the above expression and taking the sum of the terms $M_\sigma(s)$ over all $\sigma$ in $\Gamma$, we have 
\[
\sum_{\sigma \in \Gamma} M_\sigma(s) = \frac{M_\tau(s)}{M_{\matchings}(\tau,s)} \cdot \Paren{ \sum_{\sigma \in \Gamma} M_{\matchings}(\sigma,s)}.
\]
By the choice of $\tau$, we know that $M_\tau(s)/M_{\matchings}(\tau,s)$ is non-zero. Moreover, $M_{\matchings}(\sigma,s)$ is a perfect square and therefore 
non-negative
for all permutations $\sigma$ in $\Gamma$. In particular, since $\tau$ is in $\Gamma$, we have that $\Gamma$ is non-empty and hence, 
\[
\sum_{\sigma \in \Gamma} M_{\matchings}(\sigma,s)\neq 0. 
\] 
Consequently, 
the sum $\sum_{\sigma \in \Gamma} M_\sigma(s)$ is non-zero, contradicting condition $(ii)$ in the choice of $\Gamma$.
\end{proof}

We now prove Claim \ref{claim:coinciding-cycles-loops}. 
\begin{proof}[Proof of Claim \ref{claim:coinciding-cycles-loops}]
Let us consider an arbitrary permutation $\tau'\in \Gamma\setminus \{\tau\}$. 
%
Assume for contradiction that there is an edge $e$ in the symmetric difference of $\cycles(\tau) \cup \loops(\tau)$ and $\cycles(\tau') \cup \loops(\tau')$.
Partition $\Gamma$ into two subsets $\Gamma_e$ and $\Gamma_{e}'$, where $\Gamma_e$ contains any permutation $\sigma$ in $\Gamma$ that has the edge $e$ in the subgraph induced by the edges in $\cycles(\sigma) \cup \loops(\sigma)$, and $\Gamma_{e}':=\Gamma\setminus \Gamma_e$. 
By this partitioning, for every permutation $\sigma'\in \Gamma_e'$, either $e\in \matchings(\sigma')$ or $e\not\in \cycles(\sigma')\cup\matchings(\sigma')\cup\loops(\sigma')$ holds. 
We also observe that exactly one of the permutations $\tau$ and $\tau'$ is in $\Gamma_e$ while the other is in $\Gamma_{e}'$.  In particular, both $\Gamma_e$ and $\Gamma_e'$ are nonempty. We will show that either $\Gamma_e$ or $\Gamma_e'$ contradicts the minimality of $\Gamma$. 
%

Let us consider an arbitrary signing $s$, and let $s'$ be another signing that is 
obtained from $s$ by flipping the sign only on edge $e$. 
Consider the following four sums:
\begin{align*}
\Sigma_{00} \coloneqq \sum_{\sigma\in{\Gamma_e}} M_\sigma(s), \quad 
\Sigma_{01} \coloneqq \sum_{\sigma\in{\Gamma_e}} M_\sigma(s'), \quad 
\Sigma_{10} \coloneqq \sum_{\sigma\in\Gamma_{e}'} M_\sigma(s), \quad
\Sigma_{11} \coloneqq \sum_{\sigma\in\Gamma_{e}'} M_\sigma(s').
\end{align*}

Now, by condition $(ii)$ in the choice of $\Gamma$, we have that 
\begin{align}
\Sigma_{00}+\Sigma_{10} =0 \label{eqn:sum1},\\
\Sigma_{01}+\Sigma_{11} =0 \label{eqn:sum2}.
\end{align}

For every $\sigma$ in $\Gamma_e$, we have that $M_{\sigma}(s)=-M_{\sigma}(s')$ since the edge $e=\Set{u,v}$ is in the subgraph induced by the edges in $\cycles(\sigma)\cup\loops(\sigma)$ and hence exactly one of the two terms $M(s)[u,v]$ and $M(s)[v,u]$ appears in $M_{\sigma}(s)$. Therefore, 
\begin{align}
\Sigma_{00} = -\Sigma_{01}. \label{eqn:negative}
\end{align}

For every $\sigma'$ in $\Gamma_e'$, we have that $M_{\sigma'}(s)=M_{\sigma'}(s')$ since we have either $e\in \matchings(\sigma')$ or $e\not\in \matchings(\sigma')\cup\cycles(\sigma')\cup\loops(\sigma')$ holds and in both cases, an even number of terms among $M(s)[u,v]$ and $M(s)[v,u]$ appear in $M_{\sigma'}(s)$. Therefore, 
\begin{align}
\Sigma_{10} = \Sigma_{11}. \label{eqn:positive}
\end{align}
By equations (\ref{eqn:sum1}), (\ref{eqn:sum2}), (\ref{eqn:negative}), and (\ref{eqn:positive}), we have that $\Sigma_{00}=\Sigma_{01}=\Sigma_{10}=\Sigma_{11}=0$ for every signing $s$. 

Now, let us take $T$ to be the set in $\{\Gamma_e,\Gamma_e'\}$ that contains $\tau$. We obtain that ($i$) $\tau\in T$ and ($ii$) $\sum_{\sigma\in T}M_\sigma(s)=0$ for every signing $s$. Moreover $|T|<|\Gamma|$, contradicting the minimality of $\Gamma$. 
\end{proof}
\paragraph{Remark.} 
Lemma \ref{L:zero-terms} can also be shown using the DeMillo-Lipton-Schwartz-Zippel lemma \cite{s-fpavp-80,z-pasp-79,dl-prapt-78} by exploiting the low-degree nature of the multivariate determinant polynomial. This alternative proof, upon defining the notations, also takes the same amount of space as the above proof. We avoid presenting multiple proofs of the same result in the interests of brevity. 

\vspace{9pt}
To complete the proof of Theorem~\ref{thm:characterization}, we use the following lemma about the complexity of verifying the existence of a perfect $2$-matching in a given graph. The lemma follows from a well-known reduction to the perfect matching problem in bipartite graphs. 

\begin{lemma}[Tutte \cite{t-1og-53}; Lov\'{a}sz and Plummer~{\cite[Corollary~6.1.5]{lp-mt-09}}]
\label{L:algo_find_perfect}
There exists a polynomial-time algorithm to verify if a given graph (possibly with loops) has a perfect $2$-matching. 
\end{lemma}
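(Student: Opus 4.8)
The plan is to reduce the problem to testing for a perfect matching in an auxiliary bipartite graph, a task solvable in polynomial time (for instance by the Hopcroft--Karp algorithm); the whole content of the lemma then lies in exhibiting and verifying a correct reduction. Given a graph $G=(V,E)$ with $\Abs{V}=n$, possibly with loops, I would form the bipartite graph $H$ on the vertex set $\Set{v' : v\in V}\cup\Set{v'' : v\in V}$, two disjoint copies of $V$, by inserting, for every edge $\Set{u,v}\in E$, the edge $\Set{u',v''}$. Thus a non-loop edge $\Set{u,v}$ of $G$ contributes the two distinct edges $\Set{u',v''}$ and $\Set{v',u''}$ to $H$, while a loop $\Set{v,v}$ contributes the single edge $\Set{v',v''}$. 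The graph $H$ has $2n$ vertices, at most $2\Abs{E}$ edges, and is constructed in polynomial time.

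The core step is the equivalence: $G$ has a perfect $2$-matching if and only if $H$ has a perfect matching. (This is in essence the half-integrality of the fractional matching polytope specialised to the perfect case, as in the cited references.) For the forward direction I would take a perfect $2$-matching $x$ of $G$ and regard its support as a multigraph in which an edge with $x_e=2$ is counted twice and a used loop is counted with degree $2$; the condition $\sum_{e\in\delta(v)}x_e=2$ for every $v$ makes this multigraph $2$-regular, so it decomposes into vertex-disjoint cycles, with loops playing the role of $1$-cycles and doubled edges the role of $2$-cycles. A loop $\Set{v,v}$ contributes $\Set{v',v''}$ to the matching $N$, a doubled edge $\Set{u,v}$ contributes both $\Set{u',v''}$ and $\Set{v',u''}$, and a cycle $v_1v_2\cdots v_kv_1$ contributes $\Set{v_1',v_2''},\Set{v_2',v_3''},\dots,\Set{v_k',v_1''}$; this last point works for every length $k$ because we are inside the bipartite double cover and need not alternate. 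Vertex-disjointness of the components then shows that each $w'$ and each $w''$ is covered exactly once, so $N$ is a perfect matching. Conversely, from a perfect matching $N$ of $H$ I would set $x_{\Set{u,v}}\coloneqq\Abs{N\cap\Set{\Set{u',v''},\Set{v',u''}}}$ on non-loop edges and $x_{\Set{v,v}}\coloneqq 2\Abs{N\cap\Set{\Set{v',v''}}}$ on loops, and then verify $\sum_{e\in\delta(w)}x_e=2$ by accounting for the unique edge of $N$ incident to $w'$ and the unique edge of $N$ incident to $w''$; in every case the two contributions total $2$.

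Given the equivalence, the algorithm is immediate: build $H$, test it for a perfect matching in polynomial time, and output the verdict. I expect the only delicate point to be the bookkeeping for loops --- fixing the convention under which a used loop contributes $2$ to the degree of its endpoint, and checking that it is consistent both with the $2$-regular-multigraph decomposition above and with the map carrying matchings of $H$ back to perfect $2$-matchings of $G$ --- together with the small but pleasant observation that odd cycles, and not merely even cycles, lift correctly to matchings of the double cover. Everything else is routine.
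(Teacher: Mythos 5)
Your proof is correct and takes the same standard approach as the cited sources: reduce to bipartite perfect matching via the bipartite double cover $H$, mapping each non-loop edge $\{u,v\}$ to the pair $\{u',v''\},\{v',u''\}$ and each loop $\{v,v\}$ to the single edge $\{v',v''\}$. The two points that need care---that cycles of every length (including the $1$-cycles from loops and the $2$-cycles from doubled edges) lift to matchings of the double cover, and the consistent loop convention ($x_e=2$ on a used loop so that $\sum_{e\in\delta(v)}x_e=2$)---are both handled correctly.
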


We now prove Theorem \ref{thm:characterization}. 

%
\begin{proof}[Proof of Theorem~\ref{thm:characterization}]
By Lemma~\ref{L:zero-terms}, the signed matrix $M(s)$ is singular for every signing $s$ if and only if $M_\sigma=0$ holds for every permutation $\sigma$ in $S_n$. 
The existence of a perfect 2-matching in the support graph of $M$ is equivalent to the fact that $M_\sigma \neq 0$ for some $\sigma$ in $S_n$, and therefore
we have that $M_\sigma=0$ for every $\sigma$ in $S_n$ if and only if the support graph of $M$ has no perfect $2$-matchings. 
Moreover, Lemma \ref{L:algo_find_perfect} immediately gives us a polynomial-time algorithm to verify whether the signed matrix $M(s)$ is singular for every signing $s$.
\end{proof}

\subsection{Finding Invertible Signings of Bipartite Graphs} \label{SS:bipartite-invertible-algorithm}
In this section, we present an algorithm to find an invertible signing of the adjacency matrix of a given bipartite graph and thus prove Theorem \ref{thm:non-singular-signing-search}.  

%
%

We say that a signing $s'$ \EMPH{extends} another signing $s$ \emph{on entry} $(u,v)$ if $s'[i,j] = s[i,j]$ for every entry $(i,j) \not\in \{(u,v),(v,u)\}$. Thus, if $s'$ extends a signing $s$ on entry $(u,v)$, then $s'$ could be $s$ or it differs from $s$ only in the entry corresponding to $u$'th row and $v$'th column (and by symmetry, the entry corresponding to $v$'th row and $u$'th column).
We now have the ingredients to show that incrementing a signing while preserving invertibility is possible.

\begin{lemma}[Incremental Signing]\label{lem:partial-signing-bipartite}
Let $G$ be a bipartite graph 
with bipartition $(L,R)$ of the vertex set, and let $A_G$ be the adjacency matrix of $G$.  Suppose there exists a signing $s$ such that $A_G(s)$ is invertible. Let $\ell\in L, r\in R$ be vertices in $G$ such that $e \coloneqq \{\ell,r\}$ is not an edge of $G$. Then there exists a signing $s'$ that extends $s$ on $(\ell,r)$ such that $A_{G+e}(s')$ is invertible, where $G+e$ is the graph obtained by adding the edge $e$ to $G$.
\end{lemma}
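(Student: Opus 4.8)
The plan is to add the edge $e=\{\ell,r\}$ and show that among the two candidate signings for the new entry—call them $s'_+$ with $s'_+[\ell,r]=+1$ and $s'_-$ with $s'_-[\ell,r]=-1$, both agreeing with $s$ everywhere else—at least one keeps the matrix invertible. First I would set up the determinant of $A_{G+e}(s'_\pm)$ via a cofactor-type expansion in the single new entry. Since only the $(\ell,r)$ and $(r,\ell)$ positions change, the permutation expansion splits according to whether a permutation $\sigma$ uses the pair $\{\ell,r\}$ or not: those that do not use it contribute the same quantity (which equals $\det A_G(s)$ up to the conventions) regardless of the sign choice, and those that do use $\{\ell,r\}$ as a matching edge contribute a term that carries a factor $s'_\pm[\ell,r]^2 = 1$, hence is also sign-independent, while permutations using $\{\ell,r\}$ in a longer cycle contribute a term linear in $s'_\pm[\ell,r]$ which flips sign. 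Here is where bipartiteness is essential: in a bipartite graph every cycle has even length, and the edge $\{\ell,r\}$ together with an $\ell$–$r$ path in $G$ forms a cycle, so all the "odd, sign-flipping" contributions genuinely exist. Concretely I would write
\[
\det A_{G+e}(s'_\pm) = \det A_G(s) \;\pm\; c,
\]
where $c$ collects the contribution of all permutations that route through the new edge, and the "$\pm$" matches the sign chosen at $(\ell,r)$; more carefully, $\det A_{G+e}(s'_\pm) = \alpha \pm \beta$ for fixed reals $\alpha,\beta$ with $\alpha$ containing the $\det A_G(s)$ part plus the even (matching-edge) usages of $e$.

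The key observation is then purely arithmetic: if $\det A_{G+e}(s'_+) = \alpha+\beta$ and $\det A_{G+e}(s'_-) = \alpha-\beta$ were \emph{both} zero, then $\alpha = \beta = 0$, so in particular $\alpha = 0$. But $\alpha$ is exactly the value of $\det$ of the matrix where the new entry is set to $0$—i.e.\ $\det A_G(s)$ together with the permutations that use $e$ with multiplicity giving a square—actually the cleanest route is: consider instead the three signings assigning $+1$, $-1$, and (formally) $0$ to the new entry. As a polynomial in the single variable $x = $ value at $(\ell,r)$ (set symmetrically at $(r,\ell)$), $\det$ has degree at most $2$, and its value at $x=0$ equals $\det A_G(s) \ne 0$ by hypothesis. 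A degree-$\le 2$ polynomial that is nonzero at $0$ cannot vanish at both $x=+1$ and $x=-1$ \emph{unless} it is $\det A_G(s)\cdot(1-x^2)$, i.e.\ has the form $p(x)=p(0)(1-x^2)$; so I must rule out that this polynomial is exactly $\det A_G(s)(1-x^2)$. This is precisely what bipartiteness prevents: the coefficient of $x$ in $p(x)$ is $\sum M_\sigma(s)$ over permutations using $e$ in a cycle, and if this sum is nonzero then $p$ is not even in $x$, so $p(1)$ and $p(-1)$ are not both zero, and we are done by picking the appropriate sign; whereas if the coefficient of $x$ is zero we instead need the coefficient of $x^2$ to differ from $-p(0)$, i.e.\ the matching-edge contributions of $e$ do not exactly cancel $\det A_G(s)$.

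To handle the remaining case cleanly I would avoid a delicate cancellation argument and instead reduce to Theorem~\ref{thm:characterization} (via Lemma~\ref{L:zero-terms}): the obstruction to invertibility under \emph{some} signing is exactly the absence of a perfect $2$-matching in the support graph. The support graph of $A_G(s)$ is $G$ itself and it has a perfect $2$-matching (since $A_G(s)$ is invertible, some $M_\sigma \ne 0$). Adding the edge $e$ only enlarges the edge set, so $G+e$ still has a perfect $2$-matching; therefore \emph{some} signing of $A_{G+e}$ is invertible. The subtlety is that Lemma~\ref{L:zero-terms} gives a signing, not one extending $s$ on $(\ell,r)$. So the actual argument I will run is the polynomial one above, using the structure of the degree-$\le 2$ polynomial $p(x)=\det A_{G+e}(s^{(x)})$ where $s^{(x)}$ has value $x$ at the new entry: write $p(x) = a_0 + a_1 x + a_2 x^2$, note $a_0 = \det A_G(s) \ne 0$, observe $p(+1)=a_0+a_1+a_2$ and $p(-1)=a_0-a_1+a_2$, and argue that if both vanish then $a_1 = 0$ and $a_2 = -a_0$; then exhibit a single permutation (the cycle formed by $e$ and a shortest $\ell$–$r$ path $P$ in $G$, which exists because $G$ is connected on the relevant component — or handle disconnected case separately since then adding $e$ can't create the needed cycle and we must argue differently) whose contribution to $a_2$ has a definite sign, or more robustly observe that $a_2 = M_\tau(s)/M_{\matchings}(\tau,s)\cdot(\text{stuff})$ is tied to $\det A_G(s)$ via the $2$-matching structure and cannot equal $-a_0$ by parity. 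The main obstacle, and the step I expect to spend the most care on, is exactly this: ruling out the coincidental identity $p(x) = a_0(1-x^2)$, i.e.\ showing the new edge cannot both be useless in cycles and exactly sign-cancelling in matchings; I anticipate this is where even-cycle (bipartite) structure and a sign/parity computation on the cycle through $e$ must be combined, possibly by instead choosing $s'$ on $e$ based on the sign of $\det A_G(s)$ relative to the cycle contribution so that the two terms reinforce rather than cancel.
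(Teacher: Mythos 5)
Your setup is correct and matches the paper: write the determinant $\det A_{G+e}(s^{(x)}) = a_0 + a_1 x + a_2 x^2$ as a quadratic in the entry $x$ at $(\ell,r)$, note $a_0 = \det A_G(s) \neq 0$, and observe that if both $x=1$ and $x=-1$ are roots then $a_1 = 0$ and $a_2 = -a_0$. You correctly flag that ruling out this coincidence is the crux. But there is a genuine gap here: neither of your two suggestions for this step works. Exhibiting a single permutation (the cycle through $e$ and an $\ell$--$r$ path) with nonzero contribution to $a_1$ or $a_2$ does not help, because those contributions can cancel against other permutations; and invoking Theorem~\ref{thm:characterization} on $G+e$ only gives you \emph{some} invertible signing, not one extending $s$, as you yourself observe. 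The ``parity'' hand-wave is not an argument.

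The actual resolution (and where bipartiteness really enters) is different from what you sketch. Identify the coefficients via Schur complements against the $2\times 2$ block on $\{\ell,r\}$: one finds $a_2 = -\det A_H(s)$ where $H = G - \ell - r$, and the coefficient $a_1$ is, after Schur reduction, proportional to $b_r A_H(s)^{-1} b_\ell^T$. The assumption $a_1 = 0$ then forces both off-diagonal entries $b_r A_H(s)^{-1} b_\ell^T$ and $b_\ell A_H(s)^{-1} b_r^T$ to vanish, while $a_2 = -a_0$ together with $\det A_G(s) = \det A_H(s) \cdot \det(-B A_H(s)^{-1} B^T)$ forces the product of diagonal Schur entries $\bigl(b_\ell A_H(s)^{-1} b_\ell^T\bigr)\bigl(b_r A_H(s)^{-1} b_r^T\bigr)$ to equal $1$. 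But those diagonal entries equal $-\det(A_{G-\ell}(s))/\det A_H(s)$ and $-\det(A_{G-r}(s))/\det A_H(s)$ respectively, and here is the bipartite punch line: $G-\ell$ and $G-r$ are bipartite graphs on an \emph{odd} number of vertices, so they have no perfect $2$-matching, so by the backward direction of Theorem~\ref{thm:characterization} we get $\det(A_{G-\ell}(s)) = \det(A_{G-r}(s)) = 0$ for \emph{every} signing $s$, making the product $0 \neq 1$ --- contradiction. So Theorem~\ref{thm:characterization} is indeed the right tool, but applied to the vertex-deleted subgraphs $G-\ell$ and $G-r$ rather than to $G+e$, and the cycle-parity intuition you were reaching for is replaced by a clean vertex-count parity argument. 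Without this Schur-complement reduction and the odd-vertex observation, your proof does not close.
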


\begin{proof}
Let $n$ be the number of vertices in $G$. 
Let $s$ be a signing such that $A_G(s)$ is invertible. 
Let $s'$ be an $n \times n$ matrix where $s'[i,j] = s[i,j]$ for all pairs of $(i,j)$ besides $(\ell,r)$ and $(r,\ell)$, and set $s'[\ell,r]$ (and thus by symmetry, $s'[r,\ell]$) to be a variable $x$. Let $b_\ell$ and $b_r$ be vectors of length $n-2$ such that $b_\ell[i]  = A_G(s')[\ell,i]$ and $b_r[i] = A_G(s')[r,i]$ for every $i$ not equal to $\ell$ or 
$r$. For a subgraph $G'$ of $G$ with adjacency matrix $A_{G'}$, let $A_{G'}(s)$ denote the signed adjacency matrix of $G'$ obtained by the entry-wise product of $A_{G'}$ and the signing obtained by projecting $s$ to the edges of $G'$.

Consider the matrix $A_{G+e}(s')$ obtained by taking entry-wise product of $A_{G+e}$ and $s'$. Let $H$ be the graph obtained by removing vertices $r$ and $\ell$ from $G$, and let $A_H$ be the adjacency matrix of $H$. 
In the notation defined, we have
\begin{align*}
A_{G+e}(s') = 
\begin{bmatrix}
	0 & x & \multicolumn{2}{c}{b_\ell} \\
	x & 0 & \multicolumn{2}{c}{b_r} \\
	b_\ell^T & b_r^T &\multicolumn{2}{c}{A_H(s)} \\
\end{bmatrix}
\end{align*}
with the first and second rows (by symmetry, columns) corresponding to vertices $\ell$ and $r$ respectively. Let $f(x):=\det(A_{G+e}(s'))$. We have that 
\begin{align*}
f(x) 
&= - \det \left(
	A_H(s) 
\right)x^2 
- \det \left(
\begin{bmatrix}
	0 & \multicolumn{2}{c}{b_r} \\
	b_\ell^T&\multicolumn{2}{c}{A_H(s)} \\
\end{bmatrix}
\right)x 
+
\det \left(
A_G(s)
\right).
\end{align*}
We note that $f(x)$ is a quadratic function of $x$. Now suppose for the sake of contradiction that the matrix $A_{G+e}(s')$ is singular for both $x = \pm1$. Then $f(1)=f(-1)=0$ and hence the following holds.
\begin{align}
&\det \left(
\begin{bmatrix}
	0 & \multicolumn{2}{c}{b_r} \\
	b_\ell^T&\multicolumn{2}{c}{A_H(s)} \\
\end{bmatrix}
\right) = 0 \label{eqn:bia-1}\\
&\det(A_H(s)) =  \det(A_G(s)) \label{eqn:bia-2}
\end{align}
We recall that $\det(A_G(s))\neq 0$ and hence $\det(A_H(s))\neq 0$ by equation~(\ref{eqn:bia-2}). Since $\det(A_H) \not = 0$, we use the property of the Schur complement (Lemma \ref{lem:schur-complement-properties}) to obtain that
\begin{align*}
\det(A_G(s)) &= 
\det(A_H(s)) \cdot
\det
\left(
0 - 
\begin{bmatrix}
	b_\ell \\
	b_r \\
\end{bmatrix}
A_H(s)^{-1}
\begin{bmatrix}
	b_\ell^T & b_r^T
\end{bmatrix}
\right)\\
&= \det(A_H(s)) \cdot 
\det \left(
\begin{bmatrix}
	b_\ell A_H(s)^{-1} b_\ell^T & b_\ell A_H(s)^{-1} b_r^T \\
	b_r A_H(s)^{-1} b_\ell^T &  b_r A_H(s)^{-1} b_r^T \\
\end{bmatrix}
\right).
\end{align*}
Using equation (\ref{eqn:bia-2}), we thus have 
\begin{equation}\label{func:func1}
\det \left(
\begin{bmatrix}
	b_\ell A_H(s)^{-1 }b_\ell^T & b_\ell A_H(s)^{-1} b_r^T \\
	b_r A_H(s)^{-1} b_\ell^T &  b_r A_H(s)^{-1} b_r^T \\
\end{bmatrix}
\right) = 1.
\end{equation}
Let $G - \ell$ and $G - r$ be the graphs obtained by removing vertices $\ell$ and $r$ from $G$ respectively. 
Then by applying the Schur complement on $A_{G-r}(s)$ (Lemma~\ref{lem:schur-complement-properties}), we have that 
\begin{align}
\det (A_{G-r}(s)) 
= 
\det \left(
\begin{bmatrix}
	0 & b_r \\
	b_r^T & A_H(s)\\
\end{bmatrix}
\right) 
&=\det(A_H(s)) \cdot \det(0 - b_r A_H(s)^{-1} b_r^T),
\end{align}
and hence
\begin{align}
\det (A_{G-r}(s)) &= -\det(A_H(s)) \cdot b_r A_H(s)^{-1} b_r^T.\label{eqn:bia-3}
\end{align}
Similarly, we also have 
\begin{align}
\det (A_{G-\ell}(s)) 
&= 
 -\det(A_H(s)) \cdot b_\ell A_H(s)^{-1} b_\ell^T. \label{eqn:bia-4}
\end{align}
Moreover, by equation (\ref{eqn:bia-1}) and the property of Schur complement (Lemma~\ref{lem:schur-complement-properties}), we have that 
\begin{align}
0=
\det \left(
\begin{bmatrix}
	0 & b_r \\
	b_\ell^T & A_H(s)\\
\end{bmatrix}
\right) 
&=
\det(A_H(s)) \cdot \det(0 - b_r A_H(s)^{-1} b_\ell^T). \notag
\end{align}
Hence,
\begin{align}
b_r A_H(s)^{-1} b_\ell^T&=0. \label{eqn:bia-5}
\end{align}
Similarly, we also have 
\begin{align}
b_{\ell} A_H(s)^{-1} b_r^T&=0. \label{eqn:bia-6}
\end{align}
Thus, using equations (\ref{eqn:bia-3}), (\ref{eqn:bia-4}), (\ref{eqn:bia-5}), and (\ref{eqn:bia-6}), we have 
 \begin{align}
\det \left(
\begin{bmatrix}
	b_\ell A_H(s)^{-1} b_\ell^T & b_\ell A_H(s)^{-1} b_r^T \\
	b_r A_H(s)^{-1} b_\ell^T &  b_r A_H(s)^{-1} b_r^T \\
\end{bmatrix}
\right) 
&=
\frac{\det(A_{G-r}(s))}{\det(A_H(s))} \cdot \frac{\det(A_{G-\ell}(s))}{\det(A_H(s))}.\label{eqn:bia-7}
\end{align}
However, since $G$ is bipartite and has a perfect 2-matching, the subgraphs $G-r$ and $G-\ell$ must  be bipartite and have an odd number of vertices. Hence, the subgraphs $G-r$ and $G-\ell$ have no perfect 2-matching. Thus, by Lemma~\ref{L:algo_find_perfect} and the backward direction of Theorem~\ref{thm:characterization},  we have $\det(A_{G-r}(s)) = \det(A_{G-\ell}(s)) = 0$ which together with equation (\ref{eqn:bia-7}) contradicts equation~(\ref{func:func1}).
\end{proof}

\begin{figure}[ht]
\centering
\small
\begin{algorithm}
	\label{algo:find_bipartite_invertible_signing}
	\textul{$\mathsc{FindInvertibleSigningBipartite}(G)$:} \quad {\textit{Input}: A bipartite graph $G$.} \+ 
\\	1. Find a perfect matching $M$ in $G$.
\\	2. Let $H$ be the subgraph of $G$ with edge set $M$.
\\	3. Let $s$ be the all-one signing. 
\\	4. While $H \not = G$: \+
\\		4.1. Let $e:=\{\ell,r\}$ be an edge in $G$ but not in $H$.
\\		4.2. Find a signing $s'$ that extends $s$ on $(\ell,r)$ such that $A_{H + e}(s')$ is invertible.
\\		4.3. Update $s\leftarrow s'$ and $H\leftarrow H+e$. \-
\\	5. Return $s$.
\end{algorithm}
\caption{The algorithm $\mathsc{FindInvertibleSigningBipartite}(G)$.}
\label{F:find_bipartite_invertible_signing}
\end{figure}

Lemma \ref{lem:partial-signing-bipartite} suggests a natural algorithm to find an invertible signing of the adjacency matrix of a given bipartite graph in polynomial time that is presented in Figure~\ref{F:find_bipartite_invertible_signing}. The correctness of the algorithm follows from Lemma \ref{lem:partial-signing-bipartite}. It can be implemented to run in polynomial time since a perfect matching in a bipartite graph can be found efficiently and moreover, Step 4.2 only requires us to consider the determinant of the signed adjacency matrix of $H+e$ for the two possible 
signings $s'$ that extend $s$ on $(\ell,r)$ 
(where the two extensions are obtained by signing the edge $e$ as $\pm1$). This completes the proof of Theorem \ref{thm:non-singular-signing-search}.
Our algorithm also gives an alternative constructive proof of Theorem \ref{thm:characterization} for matrices whose support graph is bipartite.


\subsection{Number of Invertible Signings}
\label{SS:counting}

In this section, we show a lower bound on the number of invertible signings of adjacency matrices of graphs with at least one perfect $2$-matching, thus proving Theorem~\ref{thm:non-singular-counting}.

\def\supp{\mathrm{supp}}

\begin{proof}[Proof of Theorem~\ref{thm:non-singular-counting}]
Let $A$ denote the adjacency matrix of $G$ while $n$ and $m$ be the number of vertices and edges in $G$ respectively.
For a vector $s\in \{\pm 1\}^{E(G)}$ that gives a sign for the edges of $G$, we define $A(s)$ to be the signed adjacency matrix of $G$ obtained by taking the entry-wise product of $A$ and 
the $n\times n$ symmetric signing whose entry on $(i,j)$ (and by symmetry, $(j,i)$) is $s(\{i,j\})$ if $\{i,j\}$ is an edge in $G$, and $+1$ otherwise.  
We will abuse notation for the purposes of this proof only and call such a vector $s$ to be a \emph{signing} of the graph $G$.
Let 
\begin{align*}
\Gamma_{1}&:=\Set{s \in \{\pm 1\}^{E(G)}\mid \text{$A(s)$ is invertible}} \text{, and}\\
\Gamma_{0}&:=\Set{s \in \{\pm 1\}^{E(G)}\mid \text{$A(s)$ is singular}}.
\end{align*}
For a signing $s$ of $G$ in $\Gamma_0$, we have
\[
\sum_{\omega \in S_n} A_\omega(s) = 0.
\]
Assume for the sake of contradiction that the set $\Gamma_{1}$ has size less than $2^{m-n}/n!$.


By assumption there is a perfect $2$-matching in $G$. Hence there exists a permutation $\tau\in S_n$ for which $A_{\tau}\neq 0$ and hence $A_{\tau}\in \{\pm 1\}$. Fix a permutation $\tau\in S_n$ with the least number of cycles and loops for which $A_{\tau}\neq 0$ (i.e., the perfect $2$-matching corresponding to $\tau$ having the largest number of matching edges).  
Let 
\[
\Omega_0:=\Set{\sigma\in S_n:\cycles(\sigma)=\cycles(\tau)\text{ and } \loops(\sigma)=\loops(\tau)}.
\]
Then for every $\sigma$ in $\Omega_0$ and every signing $s$, we have $A_\sigma(s) = A_\tau(s)$. Now let 
\[
\mathcal{R}:=\Set{s \in \{\pm 1\}^{E(G)} \mid \text{$s(e)=+1$ for every edge $e$ in $\cycles(\tau)$ and $\loops(\tau)$}}. 
\]
Consider the absolute value of double-sum
\[
\left|\sum_{s \in \mathcal{R}} {\sum_{\omega \in S_n} A_\omega(s)}\right|.
\]
We have the upper bound
\[
\left|\sum_{s \in \mathcal{R}} {\sum_{\omega \in S_n} A_\omega(s)}\right|
= \left|\sum_{s \in \mathcal{R}\cap \Gamma_{1}} {\sum_{\omega \in S_n} A_\omega(s)}\right|
\le \sum_{s \in \mathcal{R}\cap \Gamma_{1}} n!
\le |\Gamma_{1}| \cdot n!
< 2^{m-n}.
\]

Now we will show a lower bound to the double-sum that contradicts the upper bound. We first note that
\[
\sum_{s \in \mathcal{R}} {\sum_{\omega \in S_n} A_\omega(s)}
=\sum_{s \in \mathcal{R}} \sum_{\omega \in \Omega_0} A_\omega(s) + \sum_{s \in \mathcal{R}} \sum_{\omega \in S_n \setminus \Omega_0} A_\omega(s).
\]
Since $\tau$ is chosen to have the fewest number of cycles and loops, it follows that any permutation in $S_n \setminus \Omega_0$ has at least one cycle or loop edge that is not in $\tau$, which implies that the signings in $\mathcal{R}$ can be paired up such that each pair of signings differ only on that specific edge not in $\tau$.  Therefore for each permutation $\omega \in S_n \setminus \Omega_0$, we have 
\[
\sum_{s \in \mathcal{R}} A_\omega(s) = 0.
\]
Hence, 
\[
\sum_{s \in \mathcal{R}} \sum_{\omega \in S_n \setminus \Omega_0} A_\omega(s) = 0.
\]

Now fix any $s_0\in \mathcal{R}$ and we have 
\begin{align*}
\left|\sum_{s \in \mathcal{R}} {\sum_{\omega \in S_n} A_\omega(s)} \right|
= \left|\sum_{s \in \mathcal{R}} {\sum_{\omega \in \Omega_0} A_\omega(s)}\right|
= |\mathcal{R}||\Omega_0| |A_\tau(s_0)|
\ge |\mathcal{R}|.
\end{align*}
The second equation above is because $A_{\omega}(s)=A_{\tau}(s_0)$ for every $\omega\in \Omega_0$ and every $s\in \mathcal{R}$. 
The last inequality above is because $\tau\in \Omega_0$ and hence $|\Omega_0|\ge 1$ and moreover, $A_{\tau}(s_0)\in \{\pm 1\}$. 

Thus the cardinality of $\mathcal{R}$ is a lower bound for the absolute value of the double-sum of interest. Now we note that $\mathcal{R}$ is the set of all signings on edges for which $s(e)=+1$ on every edge $e$ in $\cycles(\tau)\cup \loops(\tau)$. The total number of edges in $\cycles(\tau)\cup \loops(\tau)$ is at most $n$ and hence $|\mathcal{R}|\ge 2^{m-n}$. Thus, we have 
\[
\left|
\sum_{s \in \mathcal{R}} {\sum_{\omega \in S_n} A_\omega(s)} \right|
\ge |\mathcal{R}| \ge 2^{m-n},
\]
a contradiction to the upper bound on the absolute value of the double-sum.
\end{proof}

\paragraph{Remark.} We mention that the above proof of Theorem \ref{thm:non-singular-counting} can also be adapted to prove Theorem \ref{thm:characterization}. We avoid presenting multiple proofs of the same result in the interests of brevity.
\subsection{Minimum Support Increase to Obtain an Invertible Signing}
\label{subsec:getting-perfect-2-matching}

In this section, we study the problem of computing the solvability index of real symmetric matrices, thus proving Theorem \ref{thm:algo-edge-add}. We recall the following definition:
%
For a real symmetric matrix $M$, the \emph{solvability index} of $M$ is the smallest number of non-diagonal zero entries that need to be converted to non-zeroes so that the resulting \emph{symmetric} matrix has an invertible signing. We remind the reader that the support-increase operation preserves symmetry.

By our characterization in Theorem \ref{thm:characterization}, computing the solvability index of a matrix reduces to the following edge addition problem:

\vspace{9pt}
\noindent {\textsc{EdgeAdd}}: Given a graph $G$ (possibly with self-loops) with vertex set $V$ and edge set $E$, find
\[
\min \Set{\bigl. |F| \mid \text{$F$ is a set of non-edges of $G$ with no loops such that $G+F$ has a perfect $2$-matching}}.
\]
In the above, 
$G+F$ denotes the graph obtained by adding the edges in $F$ to $G$. 
In the rest of the section, we will show that \textsc{EdgeAdd} can be solved efficiently, which will imply Theorem \ref{thm:algo-edge-add}. 

\begin{theorem}
\label{algo:edge-add}
There is a polynomial-time algorithm to solve \textsc{EdgeAdd}. 
\end{theorem}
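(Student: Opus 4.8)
The plan is to prove Theorem~\ref{algo:edge-add}, since Theorem~\ref{thm:algo-edge-add} then follows immediately from the reduction already described: the solvability index of a symmetric matrix $M$ equals the optimum of \textsc{EdgeAdd} run on the support graph of $M$ (by Theorem~\ref{thm:characterization}). The first step is to pass to the bipartite double cover. Recall that a graph $G$ on vertex set $V$ (possibly with loops) has a perfect $2$-matching if and only if the bipartite graph $B_G$ --- with classes $X=\{v_X:v\in V\}$ and $Y=\{v_Y:v\in V\}$ and an edge $u_Xv_Y$ exactly when $\{u,v\}\in E(G)$, so that a loop at $v$ becomes the edge $v_Xv_Y$ --- has a perfect matching; this is the reduction underlying Lemma~\ref{L:algo_find_perfect}. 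The key observation is that adding a non-loop edge $\{a,b\}$ to $G$ corresponds exactly to adding the \emph{swap-symmetric pair} $\{a_Xb_Y,\,b_Xa_Y\}$ of edges to $B_G$, and that loops may not be added translates to never adding an edge of the form $a_Xa_Y$. Hence \textsc{EdgeAdd} asks for the minimum number of swap-symmetric edge pairs whose addition to $B_G$ produces a perfect matching. Equivalently --- taking $F$ to be the edges of the resulting $2$-matching that lie outside $G$, which is without loss of generality --- $\mathrm{OPT}$ is the minimum number of non-$G$ edges appearing in the support of a perfect $2$-matching of $\widehat G$, the graph obtained from $G$ by adding every non-loop non-edge while keeping $G$'s self-loops.

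I would then solve this as a minimum-weight perfect matching problem. The natural auxiliary graph is $B_{\widehat G}$, but with the two double-cover copies $a_Xb_Y$ and $b_Xa_Y$ of each non-$G$ edge $\{a,b\}$ replaced by a small constant-size gadget whose sole effect is to charge cost exactly $1$ whenever at least one of those two copies would be used in the matching. Such a gadget captures the correct objective, because an added edge costs $1$ in $F$ regardless of whether it is used once --- as a chord on a cycle of the $2$-matching, so that exactly one double-cover copy appears --- or with multiplicity two --- so that both copies appear. A minimum-weight perfect matching of this polynomially-sized auxiliary graph, computable in polynomial time by the classical algorithm for minimum-weight perfect matching, then has weight equal to $\mathrm{OPT}$, and the matching itself exhibits an optimal set $F$. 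The only remaining bookkeeping is an infeasibility test for the degenerate cases where even $\widehat G$ has no perfect $2$-matching, which happens only when $|V|=1$ and $G$ has no self-loop.

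The step I expect to be the main obstacle is the design of the fixed-charge gadget. A naive weight of $1/2$ on each double-cover copy undercounts a new edge that is used once on a cycle, while a weight of $1$ on each copy overcounts a new edge used with multiplicity two; making the cost come out to exactly $1$ in both cases while (i) not creating any spurious perfect matchings and (ii) keeping the gadget of constant size is the delicate part. I would resolve it by case-splitting on how a new edge participates in the $2$-matching and building the gadget accordingly (or by a slightly different flow/matching formulation that offers ``new edge doubled'' and ``new edge on a cycle'' as separate unit-cost options and argues no optimal solution benefits from double-counting). An alternative route that avoids gadgetry is a direct analysis of the Gallai--Edmonds/K\H{o}nig structure of $B_G$ --- pairing off the ``deficient'' vertices two at a time using a single added edge, and patching the odd and other degenerate configurations by hand --- but this carries a heavier case analysis. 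Either way, every quantity involved (a maximum matching of $B_G$, or the minimum-weight perfect matching of the auxiliary graph) is polynomial-time computable, which gives Theorem~\ref{algo:edge-add} and therefore Theorem~\ref{thm:algo-edge-add}.
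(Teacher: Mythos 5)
Your reformulation---minimize the number of non-$G$ edges in the support of a perfect $2$-matching of the edge-completed graph $\widehat G$---is correct, and your reduction to the bipartite double cover is the right starting move. But the proof as written has a genuine gap at exactly the place you flag: you never construct the ``fixed-charge'' gadget, and it is far from clear that one exists. The difficulty is real, not bookkeeping. Both multiplicities genuinely occur (a non-$G$ edge must sometimes be doubled and must sometimes appear once on an odd cycle; four isolated vertices force doubling, while a $3$-path forces a cycle), so neither weight $1/2$ nor weight $1$ per double-cover copy gives the correct objective, and the discount term $\mathbf{1}[a_Xb_Y\in M]\cdot\mathbf{1}[b_Xa_Y\in M]$ that you would need to subtract is a pairwise interaction between non-vertex-disjoint edge pairs. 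Encoding such interactions in a polynomial-size matching gadget is the kind of thing that is typically hard, and you give no construction or argument that it can be done here, so the min-weight-matching route is not yet a proof.

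The paper takes what you describe as your fallback route, and it is cleaner than you anticipate. It works with the Gallai--Edmonds decomposition of $G$ itself (not of $B_G$). The whole argument rests on two facts: (a) adding any single non-edge increases the fractional matching number $\nu_f$ by at most one, so any feasible $F$ satisfies $|F|\ge \Ceil{|V|/2-\nu_f(G)}$; and (b) taking a maximum matching $M$ that exposes as few $B_1$-vertices as possible, the set $S$ of $M$-exposed $B_1$-vertices has size exactly $|V|-2\nu_f(G)$ (this uses that the remaining deficiency lives inside the factor-critical components of $B_{\ge 3}$), and pairing up $S$ (with one triangle patch when $|S|$ is odd) yields a feasible $F$ of size $\Ceil{|S|/2}=\Ceil{|V|/2-\nu_f(G)}$. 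So the optimum has the closed form $\Ceil{|V|/2-\nu_f(G)}$, and the algorithm is: compute the Gallai--Edmonds decomposition, compute such an $M$, pair up $S$. No weighted matching, no gadgets, and the odd case is a single local fix rather than a ``heavier case analysis.'' If you want to salvage your approach, you would need to either exhibit the gadget and prove it sound, or prove (which I doubt) that there is always an optimum in which every added edge is used with the same multiplicity; absent that, the argument you have is an outline, not a proof.
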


We need some terminology from matching theory. Let $G$ be a graph on vertex set $V$ and edge set $E$.  For a subset $S$ of vertices, denote the \emph{induced subgraph} of $G$ on $S$ as $G[S]$ and the \emph{non-inclusive neighborhood} of $S$ in $G$ by $N_G(S)$.
We recall that a \emph{matching} $M$ in $G$ is a subset of edges where each vertex is incident to at most one edge in $M$. 
Let $\nu(G)$ denote the cardinality of a \emph{maximum matching} in $G$ and let 
\[
\nu_f(G) \coloneqq \max \Set{ \sum_{e\in E}x_e \,\Bigg|\, \text{$\sum_{e\in \delta(v)}x_e\le 1$, and $x_e\ge 0$ for all $e\in E$} }
\] 
denote the value of a \emph{maximum fractional matching} in $G$.  
For a matching $M$, we define a vertex $u$ to be \emph{$M$-exposed} if none of the edges of $M$ are incident to $u$, and a vertex $v$ to be an \emph{$M$-neighbor} of $u$ if edge $\{u,v\}$ is in $M$. 
A vertex $u$ in $V$ is said to be \emph{inessential} if there exists a maximum cardinality matching $M$ in $G$ such that $u$ is $M$-exposed, and is said to be \emph{essential} otherwise. 
A graph $H$ is \emph{factor-critical} if there exists a perfect matching in $H-v$ for every vertex $v$ in $H$. The following result is an immediate consequence of the odd-ear decomposition characterization of Lov\'{a}sz \cite{Lov72}.
\begin{lemma}[Lov\'{a}sz \cite{Lov72}]
\label{lem:fc-graphs-perfect-2-matchings}
If $G$ is a factor-critical graph, then $G$ has a perfect $2$-matching. 
\end{lemma}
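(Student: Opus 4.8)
The plan is to derive Lemma~\ref{lem:fc-graphs-perfect-2-matchings} directly from Lov\'{a}sz's \emph{odd-ear decomposition} characterization of factor-critical graphs \cite{Lov72}: a factor-critical graph $G$ with at least one edge can be built as $G_0\subseteq G_1\subseteq\cdots\subseteq G_t=G$, where $G_0$ is an odd cycle and each $G_i$ is obtained from $G_{i-1}$ by attaching an \emph{odd ear} $P_i$ --- a path with an odd number of edges, both of whose endpoints lie in $G_{i-1}$ and all of whose internal vertices are new (the two endpoints may coincide, in which case $P_i$ is an odd cycle through a single vertex of $G_{i-1}$, and $P_i$ may also be a single edge joining two vertices of $G_{i-1}$). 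First I would fix such a decomposition, and then I would build a perfect $2$-matching of $G$ by induction on $t$, extending the matching one ear at a time.

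For the base case, assigning $x_e=1$ to every edge of the odd cycle $G_0$ already gives $\sum_{e\in\delta(v)}x_e=2$ at every vertex. For the inductive step, suppose $x$ is a perfect $2$-matching of $G_{i-1}$ and write the new ear as $P_i = a_0 a_1 \cdots a_{2\ell+1}$ with $a_0,a_{2\ell+1}\in V(G_{i-1})$, with $a_1,\dots,a_{2\ell}$ the (crucially, \emph{even} number of) new internal vertices, and with ear edges $f_j=\{a_j,a_{j+1}\}$ for $0\le j\le 2\ell$. I would extend $x$ to $E(G_i)$ by leaving the old values untouched, setting $x_{f_j}=2$ on the $\ell$ odd-indexed ear edges --- these are exactly $\{a_1,a_2\},\{a_3,a_4\},\dots,\{a_{2\ell-1},a_{2\ell}\}$, which form a perfect matching of the new internal vertices --- and setting $x_{f_j}=0$ on the remaining even-indexed ear edges, a set that includes the two boundary edges $f_0$ and $f_{2\ell}$.

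The verification is then short. Each new internal vertex $a_k$ is incident in $G_i$ only to $f_{k-1}$ and $f_k$, precisely one of which has odd index, so its weighted degree is $2+0=2$; each old endpoint $a_0$ (and $a_{2\ell+1}$, if distinct) receives only weight $0$ from the ear edges incident to it, so its weighted degree is unchanged and hence still $2$; and every other vertex of $G_{i-1}$ is untouched. Thus $x$ is a perfect $2$-matching of $G_i$, and taking $i=t$ yields one for $G$. The step I expect to require the most care --- rather than a genuine obstacle, since the construction is explicit --- is the parity bookkeeping: it is precisely the \emph{oddness} of each ear that forces an \emph{even} number of internal vertices, which is exactly what allows those internal vertices to be perfectly matched by alternate ear edges while both boundary edges are left unused and the endpoints' degrees are preserved. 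One must also check that this single recipe correctly handles the degenerate ears (a lone edge, and a cycle ear with coinciding endpoints), and set aside the trivial factor-critical graph on one vertex, for which the statement is understood to exclude the edgeless case.
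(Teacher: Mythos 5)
Your construction is correct and is exactly the argument the paper has in mind: the paper gives no proof, stating the lemma as an immediate consequence of Lov\'{a}sz's odd-ear decomposition, and your ear-by-ear assignment (weight $1$ on the initial odd cycle, weight $2$ on alternate ear edges matching the even number of internal vertices, $0$ elsewhere) is precisely the routine verification being deferred to. Your aside about the one-vertex factor-critical graph is the right caveat to flag, and it is harmless here since the paper only applies the lemma to components of $G[B_{\ge 3}]$, which have at least three vertices.
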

The \emph{Gallai-Edmonds decomposition} \cite{gallai63,gallai1964maximale,edmonds1965paths} of a graph $G$ is a partition of the vertex set of $G$ into three sets $(B,C,D)$, where $B$ is the set of inessential vertices, $C:=N_G(B)$, and $D:=V\setminus (B\cup C)$. Let $B_1$ denote the set of isolated vertices in $G[B]$ and $B_{\ge 3}:=B\setminus B_1$. 
For notational convenience, we will denote the Gallai-Edmonds decomposition as $(B=(B_1,B_{\ge 3}), C, D)$. The Gallai-Edmonds decomposition of a graph is unique and can be found efficiently \cite{edmonds1965paths}.
The following theorem summarizes the properties of the Gallai-Edmonds decomposition that we will be using (properties ($i$) and ($ii$) are well-known and can be found in Schrijver \cite{schrijver2003combinatorial} while property ($iii$) follows from results due to Balas \cite{Bal81} and Pulleyblank \cite{pulleyblank87}---see Bock \etal\ \cite{BCKPS14-j} for a proof of property ($iii$)):
\begin{theorem}
\label{thm:ged-properties}
Let $(B=(B_1, B_{\ge 3}),C,D)$ be the Gallai-Edmonds decomposition of a graph $G$.  
%
We have the following properties:
\begin{enumerate}\itemsep=3pt
\item[($i$)] Each connected component in $G[B]$ is factor-critical.  
\item[($ii$)] Every maximum matching $M$ in $G$ contains a perfect matching in $G[D]$ and matches each vertex in $C$ to distinct components in $G[B]$. 
\item[($iii$)] Let $M$ be a maximum matching that matches the largest number of $B_1$ vertices. 
Then there are $2(\nu_f(G)-\nu(G))$ $M$-exposed vertices in $B_{\ge 3}$.  
\end{enumerate}
\end{theorem}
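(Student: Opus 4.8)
The plan is to recognize that parts (i) and (ii) together are a reformulation of the classical Gallai--Edmonds Structure Theorem, so I would invoke (and sketch the proof of) that theorem for them, and then derive (iii) from the same structure together with standard facts about fractional matchings.

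For (i) and (ii), the engine is \emph{Gallai's Lemma}: if a graph $H$ is connected and $\nu(H-v)=\nu(H)$ for every vertex $v$ (every vertex of $H$ is inessential in $H$), then $H$ is factor-critical. I would prove Gallai's Lemma by the standard minimum-distance alternating-path argument: if $H$ is a smallest counterexample, take vertices $u,v$ with $\mathrm{dist}(u,v)$ minimum among pairs such that no maximum matching misses $u$ while covering $v$; since $\mathrm{dist}(u,v)\ge 2$, pick an internal vertex $w$ of a shortest $u$--$v$ path and maximum matchings $M_1$ missing $u$ but covering $w$ (which exists by minimality of the distance) and $M_2$ missing $w$, and analyze the component of $M_1\triangle M_2$ through $w$ to produce a maximum matching missing $u$ and covering $v$, a contradiction. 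With Gallai's Lemma available, the standard argument establishes the structure: every maximum matching of $G$ covers all of $C=N_G(B)$, sends the vertices of $C$ to pairwise distinct components of $G[B]$, and restricts to a perfect matching of $G[D]$ --- any violation yields an $M$-alternating walk from an exposed vertex of $B$ into $C\cup D$ that can be used to augment or re-route. That is precisely (ii), and restricting a maximum matching of $G$ to a single component $K$ of $G[B]$ shows every vertex of $K$ is inessential inside $K$, so Gallai's Lemma gives that $K$ is factor-critical, which is (i). (Equivalently, one can read $B$, $C$, $D$ off the terminal state of Edmonds' blossom algorithm as the outer vertices, their neighbours, and the remaining vertices.)

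For (iii), start from the structure: by (i) and (ii) a maximum matching $M$ of $G$ perfectly matches $G[D]$, matches $C$ injectively into components of $G[B]$, and hence near-perfectly matches every component of $G[B]$; the $M$-exposed vertices are therefore exactly one vertex from each component of $G[B]$ not hit by the $C$-matching. These split into exposed singleton components (vertices of $B_1$) and exposed components of size at least three (vertices of $B_{\ge 3}$); choosing $M$ to cover the maximum number of $B_1$-vertices forces the $C$-matching to absorb as many singleton components as possible, leaving some number $y$ of large components unmatched, so that there are exactly $y$ vertices of $B_{\ge 3}$ that are $M$-exposed. It remains to show $y=2(\nu_f(G)-\nu(G))$. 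For the lower bound $\nu_f(G)\ge\nu(G)+\tfrac12 y$, upgrade $M$ on each of these $y$ unmatched components: each is factor-critical of odd size at least three, hence carries a fractional perfect matching (take an odd-ear decomposition, put weight $\tfrac12$ on the initial odd cycle and an ordinary perfect matching on the even interior of each subsequent odd ear), and this gains exactly $\tfrac12$ per component over the near-perfect matching $M$ used there. For the matching upper bound, invoke the fractional analogue of the Berge--Tutte formula, $|V|-2\nu_f(G)=\max_{U\subseteq V}\bigl(i(G-U)-|U|\bigr)$ where $i(\cdot)$ counts isolated vertices (a consequence of the half-integrality of the fractional matching polytope together with LP duality), and evaluate the right-hand side at the set $U\subseteq C$ consisting of exactly the $C$-vertices that $M$ uses to hit singleton components; removing this $U$ leaves precisely the not-yet-hit singleton components isolated, and the resulting inequality rearranges to $\nu_f(G)\le\nu(G)+\tfrac12 y$. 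Equating the two bounds gives (iii).

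The main obstacle is the combinatorial bookkeeping inside (iii): making precise the phrase ``$M$ hits the maximum possible number of $B_1$-components'' and checking that the quantity it leaves behind in $B_{\ge 3}$ is exactly what \emph{both} the integral deficiency count and the fractional Berge--Tutte certificate detect. Concretely, one must analyze the bipartite graph between $C$ and the components of $G[B]$ --- which the structure theorem guarantees has a $C$-saturating matching obeying a Hall-type surplus condition --- and show, via König's theorem applied to its restriction to the singleton components, that a $C$-saturating matching covering a maximum set of singleton components exists and that the cover witnessing it is the set $U$ used above. This reconciliation is the substance of what one extracts from Balas and Pulleyblank; the rest (Gallai's Lemma, the structure theorem, half-integrality, and the odd-ear construction of a fractional perfect matching) is standard.
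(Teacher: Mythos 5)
Your treatment of (i) and (ii) via Gallai's Lemma and the standard alternating-path argument is fine, as is the lower bound $\nu_f(G)\ge\nu(G)+\tfrac12 y$ in (iii) (replacing the near-perfect matching on each exposed $B_{\ge 3}$ component by a fractional perfect matching obtained from an odd-ear decomposition). Note that the paper itself does not prove this theorem: it cites Schrijver for (i)--(ii) and Balas, Pulleyblank, and Bock \etal\ for (iii), so the only part that needs scrutiny is your self-contained argument for the upper bound in (iii), and that is where there is a genuine gap. You need a set $U$ with $i(G-U)-|U|\ge |S|$, where $S$ is the set of $M$-exposed $B_1$ vertices, and you evaluate at $U=\{$the $C$-vertices that $M$ matches into $B_1\}$, claiming this isolates the unhit singletons. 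The problem is that a \emph{matched} $B_1$ vertex may be adjacent to a $C$-vertex that $M$ matches into a $B_{\ge3}$ component; such a vertex is not isolated in $G-U$ even though its partner is counted in $|U|$, and extremality of $M$ does not forbid this (the exchange argument only rules it out for $B_1$ vertices reachable from $S$ by $M$-alternating paths). Concretely, take vertices $s_1,s_2,b,c_1,c_2,c_3$, two triangles $T=t_1t_2t_3$ and $T'=u_1u_2u_3$, and edges $s_1c_1,\,s_2c_1,\,bc_2,\,bc_3,\,c_2t_1,\,c_3t_1,\,c_3u_1$ plus the triangle edges. Here $B_1=\{s_1,s_2,b\}$, $B_{\ge3}=T\cup T'$, $C=\{c_1,c_2,c_3\}$, $D=\emptyset$, $\nu=5$, $\nu_f=11/2$, and $M=\{s_1c_1,\,bc_2,\,c_3u_1,\,u_2u_3,\,t_1t_2\}$ is a maximum matching covering the largest possible number (two) of $B_1$ vertices, exposing $s_2$ and $t_3$ (so $y=1$, consistent with the theorem). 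Your set is $U=\{c_1,c_2\}$, but $i(G-U)-|U|=2-2=0<1=|S|$ because $b$ keeps its neighbor $c_3\in C\setminus U$; the evaluation only yields the vacuous bound $\nu_f\le |V|/2$, not $\nu_f\le\nu+\tfrac12 y$.

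The argument can be repaired, but with a different certificate: take $U$ to be the set of $C$-vertices reachable from the $M$-exposed $B_1$ vertices by $M$-alternating paths in the bipartite graph between $C$ and $B_1$ (equivalently, $U=N_G(T)$ for a subset $T\subseteq B_1$ maximizing $|T|-|N_G(T)|$). Extremality of $M$ then gives, via the exchange you already use, that every vertex of $U$ is matched by $M$ to a reachable $B_1$ vertex and that every reachable $B_1$ vertex has all its neighbors inside $U$; hence the reachable $B_1$ vertices are all isolated in $G-U$ and number $|U|+|S|$, so $i(G-U)-|U|\ge|S|$ as required (in the example above this is $U=\{c_1\}$, giving $2-1=1$). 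Your closing remark about König's theorem points in this direction, but the identification you make there --- that the relevant cover ``is the set $U$ used above'' --- is exactly the false step: the defect-Hall/König witness is $N_G(T)$ for a maximum-deficiency $T\subseteq B_1$, not the set of $C$-vertices that $M$ happens to match into singleton components.
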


We observe that $G$ contains a perfect $2$-matching if and only if $\nu_f(G)=|V|/2$. Therefore, adding edges to get a perfect $2$-matching in $G$ is equivalent to adding edges to increase the maximum fractional matching value to $|V|/2$. 

\begin{proof}[Proof of Theorem \ref{algo:edge-add}]%
We will assume that $G$ has no isolated vertices and no self-loops in the rest of the proof. We make this assumption here in order to illustrate the main idea underlying the algorithm. This assumption can be relaxed by a case analysis in the algorithm as well as the proof of correctness. We defer the details of the case analysis to the full-version of the paper.

\begin{figure}[ht]
\centering
\small
\begin{algorithm}
	\label{algo1}
	\textul{$\mathsc{EdgeAdd}(G)$:} \quad {\textit{Input}: A graph $G$ with no isolated vertices and no self-loops.} \+ 
\\	1. Find the Gallai-Edmonds decomposition $(B=(B_1, B_{\ge 3}),C,D)$ of $G$.
\\	2. Find a maximum matching $M$ that matches the largest number of $B_1$ vertices.  
\\	3. Let $S \coloneqq \{u\in B_1 \mid \text{$u$ is $M$-exposed}\}$.
\\	4. If $|S|$ is even: \+
\\		Pick an arbitrary pairing of the vertices in $S$. \-
\\	5. If $|S|$ is odd: \+
\\		Consider a vertex $s$ in $S$, pick a vertex $t$ in $N_G(s)$ and let $u$ be the $M$-neighbor of $t$.
\\		Pair up $u$ with $s$ and pick an arbitrary pairing of the vertices in $S\setminus \{s\}$. \-
\\	6. Return the set of pairs $F$.
\end{algorithm}
\caption{The algorithm $\mathsc{EdgeAdd}(G)$.}
\label{F:edge-add}
\end{figure}

We use the algorithm $\mathsc{EdgeAdd}(G)$ given in Figure~\ref{F:edge-add}.
We briefly describe an efficient implementation for Step~2, since it is easy to see that other steps can be implemented efficiently. In order to find a maximum matching that matches the largest number of $B_1$ vertices (as mentioned in property ($iii$) of Theorem \ref{thm:ged-properties}), we first find the Gallai-Edmonds decomposition and a maximum matching $M$. Then, we repeatedly augment $M$ by searching for $M$-alternating paths (of even-length) from $M$-exposed $B_1$ vertices. This approach can be implemented to run in polynomial time. Alternatively, Step~2 can also be implemented by solving a maximum weight matching with suitably chosen weights. 

We now argue the correctness of the algorithm. We first show that if $|S|$ is odd, then there is a choice of vertices $t$ and $u$ as described in the algorithm $\mathsc{EdgeAdd}(G)$: this is because, $G$ has no isolated vertices and hence there exists a vertex $t$ in $N_G(s)$. Moreover, by Theorem \ref{thm:ged-properties}, since $s$ is in $B_1$, it follows that $t$ is in $C$ and thus $t$ is matched by $M$ to a node $u$ in $B$. 
Now, Claim \ref{C:complete} proves feasibility and bounds the size of the returned solution $F$ while Claim \ref{C:sound} proves the optimality.
\end{proof}


\begin{claim}
\label{C:complete}
The algorithm $\mathsc{EdgeAdd}(G)$ returns a set $F$ of non-edges of $G$ such that 
(1) $G+F$ contains a perfect $2$-matching, and
(2) $|F|=\Ceil{\bigl. |V|/2-\nu_f(G) }$.
\end{claim}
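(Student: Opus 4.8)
The plan is to first establish the size bound~(2), then argue feasibility~(1) by exhibiting a perfect 2-matching of $G+F$ explicitly using the structure guaranteed by the Gallai-Edmonds decomposition. For the size bound, I would start from property~($iii$) of Theorem~\ref{thm:ged-properties}: since $M$ is a maximum matching matching the largest number of $B_1$ vertices, the $M$-exposed vertices lying in $B_{\ge 3}$ number exactly $2(\nu_f(G)-\nu(G))$. Combined with property~($ii$) (every maximum matching perfectly matches $G[D]$ and matches $C$ into distinct components of $G[B]$), the only other $M$-exposed vertices are in $B$; among these, the ones in $B_{\ge 3}$ are accounted for above, so the remaining $M$-exposed vertices are precisely $S \subseteq B_1$. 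Counting exposed vertices two ways gives $|V| - 2\nu(G) = |S| + 2(\nu_f(G)-\nu(G))$, hence $|S| = |V| - 2\nu_f(G)$. Since $F$ consists of one edge per pair in the pairing of $S$ (with one pair in the odd case involving $u$ instead of a second $S$-vertex), we get $|F| = \lceil |S|/2\rceil = \lceil |V|/2 - \nu_f(G)\rceil$, which is~(2). I would also need to check that each pair in $F$ is genuinely a non-edge: vertices of $B_1$ are isolated in $G[B]$ and (being inessential) have no neighbors outside $C$ that would force an edge between two of them — more carefully, two distinct $B_1$ vertices are non-adjacent because an edge between them would sit inside $G[B]$; and in the odd case the pair $\{s,u\}$ is a non-edge since $u \in B$, $s \in B_1$ is isolated in $G[B]$, so $\{s,u\} \notin E$ unless $u$ itself were a neighbor of $s$, but then $u \in C$, contradicting $u \in B$.

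For feasibility, I would use the observation recorded just before the proof: $G+F$ has a perfect 2-matching iff $\nu_f(G+F) = |V|/2$. So it suffices to show the addition of the $|F|$ edges raises the fractional matching value to $|V|/2$, equivalently that $G+F$ has no $M'$-exposed vertex deficiency in the fractional sense. Concretely, I would build the perfect 2-matching of $G+F$ as follows: take the maximum matching $M$ (weight-2 on... no — weight~$2$ is wrong, rather) assign $x_e = 2$ on a suitable matching and handle the rest with odd cycles. The cleaner route: each connected component of $G[B]$ is factor-critical (property~($i$)), so by Lemma~\ref{lem:fc-graphs-perfect-2-matchings} each such component has a perfect 2-matching; property~($ii$) lets us match $C$ into distinct $B$-components and perfectly match $G[D]$; the leftover exposed $B_1$ vertices are exactly $S$, and the added edges $F$ pair them up (in the odd case, rerouting through the $M$-edge $\{t,u\}$: use the new edge $\{s,u\}$ and let $t$ be absorbed into its $B$-component's 2-matching adjusted to cover $t$). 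Assembling a 2-matching: put weight $2$ on the $M$-edges and on the $F$-edges, and on each $B$-component not fully covered this way, use the odd cycle / factor-critical 2-matching to cover the one remaining vertex with two weight-1 edges. One must verify every vertex gets total weight exactly $2$; the potential trouble spots are the $C$-vertices and the component reached in the odd-$|S|$ case, where the matching edge $\{t,u\}$ is "split" by the new edge.

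The main obstacle I anticipate is precisely the bookkeeping in the odd case: when we pair $u$ (the $M$-neighbor of some $t \in N_G(s)$) with $s$, we remove the $M$-edge $\{t,u\}$ from the matching we use and must re-cover both $t$ and the component of $G[B]$ containing $u$ from scratch via factor-criticality, making sure $t$ — now matched into that component instead of by $M$ — still ends up with weight exactly $2$ and that no vertex of $C$ that was relying on being matched to $u$'s component is left stranded. Handling this requires invoking factor-criticality of $u$'s component at the vertex $u$ (giving a perfect matching of that component minus $u$, then a 2-matching covering $u$ via its incident structure once we route $t$ in), and checking the count of $C$-vertices sent into that component is still at most the number of components available. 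I expect this to be a short but delicate argument; everything else (the counting for~(2), the non-edge verification, and the generic assembly of the 2-matching from factor-critical pieces plus the $M$- and $F$-edges) is routine given Theorems~\ref{thm:ged-properties} and Lemma~\ref{lem:fc-graphs-perfect-2-matchings}.
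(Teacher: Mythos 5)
Your counting argument for part~(2) is correct and matches the paper's: you count the $M$-exposed vertices, split them between $B_1$ and $B_{\ge 3}$ using properties~($ii$) and~($iii$), and conclude $|S| = |V| - 2\nu_f(G)$ and $|F| = \Ceil{|S|/2}$. Your non-edge verification is also essentially right (two $B_1$-vertices are non-adjacent since each is isolated in $G[B]$; the same applies to $\{s,u\}$ since $s$ is isolated in $G[B]$ and $u \in B$).

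However, your assembly of the perfect $2$-matching has a genuine gap. You propose to ``put weight $2$ on the $M$-edges and on the $F$-edges, and on each $B$-component not fully covered this way, use the odd cycle / factor-critical $2$-matching to cover the one remaining vertex with two weight-$1$ edges.'' This double-counts: if a component $K$ of $G[B_{\ge 3}]$ has an $M$-exposed vertex $w$, then the other vertices of $K$ are $M$-matched internally and would already carry weight $2$; adding an odd cycle through $w$ with weight $1$ pushes those vertices above $2$. The correct fix---and what the paper does---is to \emph{partition} rather than superpose: let $\mathcal{K}$ be the set of $B_{\ge 3}$-components containing an $M$-exposed vertex; on each $K \in \mathcal{K}$ discard the $M$-edges entirely and instead use the factor-critical $2$-matching $x^K$ guaranteed by Lemma~\ref{lem:fc-graphs-perfect-2-matchings}; on everything else, retain the $M$-edges at weight $2$. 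Formally, $N$ is defined as the set of $M$-edges disjoint from $\bigcup_{K\in\mathcal{K}} V(K)$, and those get weight $2$.

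Your treatment of the odd case also misidentifies the difficulty. You worry about ``$t$ being absorbed into its $B$-component's $2$-matching,'' but $t \in C$, not $B$, so no $B$-component's $2$-matching can cover it. The resolution is in fact far simpler than you anticipate: assign weight $1$ to all three edges $\{s,t\}$, $\{t,u\}$, $\{s,u\}$ (a triangle, with $\{s,u\}$ being the added edge), giving each of $s,t,u$ exactly weight $2$ with no further adjustment. Since $u$ is $M$-matched to $t \in C$, property~($ii$) forces $u$'s component $K_u$ to have no $M$-exposed vertex, so $K_u \notin \mathcal{K}$ and the internal $M$-edges of $K_u - u$ are retained at weight $2$ unchanged. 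In particular no $C$-vertex is rerouted and the ``count of $C$-vertices sent into that component'' never changes---that concern is spurious. The one bookkeeping change in the odd case is that $N$ is taken from $M \setminus \{\{t,u\}\}$ and $F \setminus \{\{s,u\}\}$ gets weight $2$ instead of all of $F$.
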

\begin{proof}
By property ($ii$) of Theorem \ref{thm:ged-properties}, the set $F$ is a set of non-edges of $G$. We will construct a perfect $2$-matching in $G+F$. By property ($i$) of Theorem \ref{thm:ged-properties}, every component in $G[B_{\ge 3}]$ is factor-critical. By Lemma \ref{lem:fc-graphs-perfect-2-matchings}, every component $K$ in $G[B_{\ge 3}]$ contains a perfect $2$-matching $x^K$. Let $N_K$ denote the support of $x^K$. Let $\mathcal{K}$ denote the components in $G[B_{\ge 3}]$ that contain an $M$-exposed vertex. We have two cases:

\begin{enumerate}
\item[] \textit{Case 1:} Suppose $|S|$ is even. 
Let $N$ denote the set of edges of $M$ that do not match any vertices in $\bigcup_{K\in \mathcal{K}}V(K)$. Now, the set of edges induced by $\Paren{\bigcup_{K\in \mathcal{K}}N_K} \cup N\cup F$ has a perfect $2$-matching.  A perfect $2$-matching $x$ in $G+F$ can be obtained by assigning $x(e):=x^K(e)$ for edges $e$ in $\bigcup_{K\in \mathcal{K}}N_K$, $x(e):=2$ for edges $e$ in $N\cup F$, and $x(e):=0$ for the remaining edges in $G+F$.
\item[] \textit{Case 2:} Suppose $|S|$ is odd. 
Let $N$ denote the set of edges of $M\setminus\{\{t,u\}\}$ that do not match any vertices in $\bigcup_{K\in \mathcal{K}}V(K)$. Now, $\Paren{\bigcup_{K\in \mathcal{K}}N_K} \cup N\cup (F\setminus \{s,u\})\cup\{\{t,u\},\{s,t\},\{s,u\}\}$ has a perfect $2$-matching.  
We note that the edges $\{t,u\}, \{s,t\}$ were already present in the graph owing to the choice of $c$ and $u$ while the edge $\{s,u\}$ was added as an edge from $F$. 
A perfect $2$-matching $x$ in $G+F$ can be obtained by assigning $x(e):=x^K(e)$ for edges $e\in \bigcup_{K\in \mathcal{K}}N_K$, $x(e):=1$ for edges $e$ in $\{\{t,u\},\{s,t\},\{s,u\}\}$, $x(e):=2$ for edges $e$ in $N\cup (F\setminus \{s,u\})$, and $x(e):=0$ for the remaining edges in $G+F$. 

\end{enumerate}
Next we find the size of the set $F$ returned by the algorithm. We observe that $|F|=\Ceil{\bigl. |S|/2}$. 
It remains to bound $|S|$. For this, we count the number of vertices in the graph using the matched and exposed vertices. We have that $|V|=2|M|+|S|+\{\text{number of $M$-exposed vertices in $B_{\ge 3}$}\}$. By property ($iii$) of Theorem \ref{thm:ged-properties} and the choice of the matching $M$, we have $|V|=2|M|+|S|+2(\nu_f(G)-\nu(G))$. Since $M$ is a maximum cardinality matching, we know that $|M|=\nu(G)$ and hence, $|S|=|V|-2\nu_f(G)$. 
\end{proof}

Our next claim shows a lower bound on the optimal solution that matches the upper bound and hence proves the optimality of the returned solution. 
\begin{claim}
\label{C:sound}
Let $F'$ be a set of non-edges of $G$. Suppose $G+F'$ has a perfect $2$-matching. Then $|F'|\ge \Ceil{\bigl. |V|/2-\nu_f(G) }$.
\end{claim}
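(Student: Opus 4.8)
The goal is to prove Claim~\ref{C:sound}: any edge set $F'$ whose addition creates a perfect $2$-matching has size at least $\Ceil{|V|/2-\nu_f(G)}$. The plan is to exploit the reformulation noted just before the proof of Theorem~\ref{algo:edge-add}: a graph has a perfect $2$-matching if and only if its maximum fractional matching value equals $|V|/2$. So it suffices to show that adding $k$ edges to $G$ can increase $\nu_f$ by at most $k$; equivalently $\nu_f(G+F') \le \nu_f(G) + |F'|$, which combined with $\nu_f(G+F') = |V|/2$ gives $|F'| \ge |V|/2 - \nu_f(G)$, and then integrality of $|F'|$ promotes this to the ceiling.

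The key step is the inequality $\nu_f(G+e) \le \nu_f(G) + 1$ for a single added edge $e$, iterated over the edges of $F'$. One clean way to see this: take an optimal fractional matching $x$ on $G+F'$. Restricting $x$ to the edges of $G$ gives a feasible fractional matching on $G$ of value $\sum_{e\in E(G)} x_e = \nu_f(G+F') - \sum_{e\in F'} x_e \ge \nu_f(G+F') - |F'|$, using $x_e \le 1$ for every edge. Hence $\nu_f(G) \ge \nu_f(G+F') - |F'| = |V|/2 - |F'|$, i.e. $|F'| \ge |V|/2 - \nu_f(G)$. Since $|F'|$ is a nonnegative integer, $|F'| \ge \Ceil{|V|/2 - \nu_f(G)}$, as desired. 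I would present exactly this argument; it is short and avoids invoking the Gallai-Edmonds structure theorem on the side of the lower bound.

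The main (minor) obstacle is just making sure the bound $x_e \le 1$ is genuinely available in the fractional matching polytope of $G+F'$ — but that is immediate from the constraint $\sum_{e \in \delta(v)} x_e \le 1$ at either endpoint of $e$, since $x \ge 0$. A secondary point worth a sentence: one must confirm that the feasibility of a perfect $2$-matching really does force $\nu_f = |V|/2$ — the forward direction is because a perfect $2$-matching scaled by $1/2$ is a fractional matching of value $|V|/2$, and the reverse because $\nu_f(H) \le |V(H)|/2$ always and the fractional matching polytope of any graph has half-integral vertices (so the optimum is attained by a half-integral $x$, which is a perfect $2$-matching after scaling). This was already asserted in the text before the proof of Theorem~\ref{algo:edge-add}, so I can simply cite it.

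In summary, the proof is: (1) recall $\nu_f(G+F') = |V|/2$ since $G+F'$ has a perfect $2$-matching; (2) let $x$ be an optimal fractional matching of $G+F'$ and observe $\sum_{e\in F'} x_e \le |F'|$ because each $x_e \le 1$; (3) deduce the restriction of $x$ to $E(G)$ is a feasible fractional matching of $G$ of value at least $|V|/2 - |F'|$, so $\nu_f(G) \ge |V|/2 - |F'|$; (4) rearrange and use integrality of $|F'|$ to conclude $|F'| \ge \Ceil{|V|/2 - \nu_f(G)}$. Together with Claim~\ref{C:complete} this shows the algorithm $\mathsc{EdgeAdd}(G)$ is optimal, completing the proof of Theorem~\ref{algo:edge-add} and hence of Theorem~\ref{thm:algo-edge-add}.
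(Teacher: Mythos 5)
Your proof is correct and reaches the same conclusion as the paper's, via the same high-level plan: show $\nu_f(G+F')\le \nu_f(G)+|F'|$, combine with $\nu_f(G+F')=|V|/2$, and use integrality of $|F'|$ to promote to the ceiling. The one place where you diverge is in how the key inequality is justified. The paper proceeds edge by edge: it fixes an arbitrary ordering of $F'$, telescopes $\nu_f(G+F')-\nu_f(G)=\sum_i\bigl(\nu_f(G+F_i')-\nu_f(G+F_{i-1}')\bigr)$, and bounds each increment by $1$, pointing to LP duality (the minimum fractional vertex cover increases by at most $1$ upon adding a single edge). You instead argue in one shot: take an optimal fractional matching $x$ on $G+F'$, observe $x_e\le 1$ for every $e$ (forced by the degree constraint at either endpoint), restrict $x$ to $E(G)$ to obtain a feasible fractional matching of $G$ of value $\ge \nu_f(G+F')-|F'|$. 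This buys you a fully self-contained, primal argument that does not appeal to duality or to an inductive decomposition, at essentially no extra length; the paper's telescoping version is slightly more modular (it isolates the statement ``one edge increases $\nu_f$ by at most $1$'' as a reusable fact). Both are fine, and both correctly treat the secondary point that a perfect $2$-matching in $G+F'$ forces $\nu_f(G+F')=|V|/2$, which the text establishes just before the algorithm.
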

\begin{proof}
We first note that the addition of a non-edge can increase the value of the maximum fractional matching by at most one.  That is, for every graph $H$ and every non-edge $e$ of $H$, we have $\nu_f(H+e)-\nu_f(H)\le 1$ (this can be shown by considering the dual problem, namely the minimum fractional vertex cover). Now, consider an arbitrary ordering of the edges in the solution $F'$ and let $F_i'$ denote the set of first $i$ edges according to this order and let $F_0'=\emptyset$. Then, 
\[
\nu_f(G+F')-\nu_f(G)=\sum_{i=1}^{|F'|} \left(\nu_f(G+F_i)-\nu_f(G+F_{i-1})\right) \le |F'|.
\]
Thus, we have $|F'|\ge \nu_f(G+F')-\nu_f(G)$. We observe that if $G+F'$ has a perfect $2$-matching, then $\nu_f(G+F')=|V|/2$. Hence, $|F'|\ge |V|/2-\nu_f(G)$. Finally, we observe that $|F'|$ has to be an integer and hence, $|F'|\ge \Ceil{\bigl. |V|/2-\nu_f(G) }$. 
\end{proof}

%

\section{Singular Matrices} 
\label{S:Singular}

In this section, we focus on singular signings.  We will prove that \textsc{SingularSigning} is solvable for adjacency matrices of bipartite graphs (Theorem~\ref{thm:singular-singing-for-bipartite-graphs}) and that \textsc{SingularSigning} for arbitrary matrices is NP-complete. The latter result will be used to complete the proof of Theorem~\ref{thm:bounded-ev} in Section~\ref{S:hardness-evalue}.

\subsection{Finding Singular Signings of Bipartite Graphs} \label{SS:bipartite-singular-algorithm}
In this section, we characterize bipartite graphs whose signed adjacency matrix is invertible for all signings. We use this characterization to prove Theorem \ref{thm:singular-singing-for-bipartite-graphs}. 
We will use the following results by Little \cite{Little:1972} for our characterization.  
Lemma~\ref{L:bipartite-pm-parity-signing} stated below is a slight extension of the original result by Little.  We include its proof in the appendix (Section~\ref{A:bipartite-pm-parity-signing}) for the sake of completeness.



\begin{lemma}[Little \cite{Little:1972}]
\label{L:bipartite-pm-parity-signing}
Let $G$ be a graph 
with adjacency matrix $A_G$.
Then $\det(A_G(s))$ is even for all signings $s$ if and only if $G$ has an even number of perfect matchings.
\end{lemma}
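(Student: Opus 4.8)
The plan is to reduce the lemma to a parity statement about the single integer $\det(A_G)$ that no longer mentions signings, by passing to arithmetic modulo $2$. The key observation is that reducing mod $2$ makes every signing invisible: in the permutation expansion $\det(A_G(s)) = \sum_{\sigma\in S_n}\sgn(\sigma)\prod_{i} A_G(s)[i,\sigma(i)]$, a term is nonzero exactly when $\{i,\sigma(i)\}$ is an edge of $G$ for every $i$, and in that case the term equals $\pm 1$, hence is $\equiv 1 \pmod 2$ regardless of $\sgn(\sigma)$ and of $s$. Thus $\det(A_G(s)) \equiv N(G) \pmod 2$ for every signing $s$, where $N(G)$ is the number of permutations $\sigma\in S_n$ with $\{i,\sigma(i)\}\in E(G)$ for all $i$; in particular the parity of $\det(A_G(s))$ does not depend on $s$ and equals that of $\det(A_G)$. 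So ``$\det(A_G(s))$ is even for every signing $s$'' is equivalent to ``$N(G)$ is even,'' and it remains to prove that $N(G)$ has the same parity as the number of perfect matchings of $G$.

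For that I would use a sign-reversing-style involution. Since $G$ is simple, a permutation counted by $N(G)$ has no fixed point and its cycle decomposition uses only cycles of length at least $2$, each realized by edges of $G$ — that is, it is a directed cycle cover of $G$. Define $\phi$ on the set of these cycle covers as follows: if the cover contains a cycle of length at least $3$, let $v$ be the smallest vertex lying on a length-$\ge 3$ cycle, let $D$ be the cycle of the cover through $v$, and let $\phi$ reverse the orientation of $D$; if every cycle has length exactly $2$, let $\phi$ fix the cover. Reversing a cycle changes neither cycle lengths nor the partition of the vertices into cycles, so $v$ and $D$ are determined in the same way for a cover and for its image, whence $\phi$ is an involution. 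Its fixed points are precisely the fixed-point-free involutions $\sigma$ with $\{i,\sigma(i)\}\in E(G)$, and such a $\sigma$ is exactly the permutation encoding a perfect matching of $G$. Since all non-fixed covers pair off, $N(G)$ and the number of perfect matchings of $G$ have the same parity, which together with the first paragraph proves the lemma.

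I do not anticipate a real obstacle; the only delicate point is the well-definedness of $\phi$, for which the cycle selected for reversal must be chosen by a rule that is itself invariant under that reversal — using the cycle through the globally smallest vertex that lies on a cycle of length $\ge 3$ does this. One should also record explicitly the (routine) identification between fixed-point-free involutions supported on $E(G)$ and perfect matchings of $G$. Note that the whole argument makes no use of bipartiteness and applies to every simple graph $G$, which is exactly the generality asserted in the lemma and already gives the claimed extension of Little's original statement.
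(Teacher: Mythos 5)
Your proof is correct and follows essentially the same approach as the paper's proof in the appendix: both work from the permutation expansion of $\det(A_G(s))$ reduced mod $2$ and cancel permutations whose cycle structure contains a cycle of length at least $3$, leaving the perfect-matching permutations to determine the parity. Your explicit cycle-reversal involution is a slightly cleaner packaging of the pairing that the paper realizes by grouping permutations into $2^{|C|}$-sized classes sharing an underlying $2$-matching, and it lets you dispense with the paper's ``same sign'' bookkeeping, but the combinatorial content is identical.
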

%
%
%

\begin{theorem}[Little \cite{Little:1972}]
\label{L:bipartite-pm-little}
Let $G$ be a graph. Then $G$ has an even number of perfect matchings if and only if there is a set $S \subseteq V(G)$ such that 
every vertex in $G$ has even number of neighbors in $S$. 
Moreover, if $G$ has an even number of perfect matchings, then such a set $S$ can be found in polynomial time.
\end{theorem}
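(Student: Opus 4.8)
The statement to prove is Theorem~\ref{L:bipartite-pm-little}: a graph $G$ has an even number of perfect matchings if and only if there is a set $S\subseteq V(G)$ such that every vertex has an even number of neighbors in $S$; and such an $S$ can be found in polynomial time when it exists. The natural route is to work over $\mathbb{F}_2$ and connect the parity of the number of perfect matchings to the rank of the adjacency matrix $A_G$ modulo $2$. By Lemma~\ref{L:bipartite-pm-parity-signing}, $G$ has an even number of perfect matchings if and only if $\det(A_G(s))$ is even for every signing $s$; and reducing mod $2$ kills the signs, so this is equivalent to $\det(A_G)\equiv 0\pmod 2$, i.e. $A_G$ is singular over $\mathbb{F}_2$. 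So the whole statement reduces to: $A_G$ is singular over $\mathbb{F}_2$ if and only if there exists a nonzero $S\subseteq V(G)$ (viewed as its $0/1$ indicator vector) with $A_G\,\mathbf{1}_S\equiv \mathbf{0}\pmod 2$ — and indeed the condition ``every vertex has an even number of neighbors in $S$'' is exactly the statement that the vector $A_G\mathbf 1_S$ is zero mod $2$.

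First I would make the translation precise: the condition in the theorem says $\sum_{v\in S} A_G[u,v]\equiv 0\pmod 2$ for every $u\in V(G)$, which is literally $A_G\mathbf{1}_S = \mathbf 0$ over $\mathbb{F}_2$. Next, a nonzero such $S$ exists if and only if the kernel of $A_G$ over $\mathbb{F}_2$ is nontrivial, i.e. $A_G$ is singular over $\mathbb{F}_2$, i.e. $\det(A_G)\equiv 0\pmod 2$. Then I would close the loop with Lemma~\ref{L:bipartite-pm-parity-signing}: $G$ has an even number of perfect matchings $\iff$ $\det(A_G(s))$ is even for all signings $s$; applied with $s=J$ (all ones) this gives the forward direction $\det(A_G)$ even, and conversely, since $\det(A_G(s))\equiv\det(A_G)\pmod 2$ for every signing $s$ (the entrywise signs are $\pm1\equiv 1$), $\det(A_G)$ even forces $\det(A_G(s))$ even for all $s$, hence an even number of perfect matchings. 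Chaining these equivalences yields exactly the claimed iff.

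For the algorithmic part, computing $\det(A_G)\bmod 2$ and, if it vanishes, producing a nonzero kernel vector of $A_G$ over $\mathbb{F}_2$ are both standard Gaussian elimination over $\mathbb{F}_2$, running in $O(n^3)$ time; the support of that kernel vector is the desired set $S$. I would state this as a one-line observation.

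**Main obstacle.** There isn't a serious obstacle — the argument is essentially a dictionary between three reformulations (parity of perfect matchings; singularity of $A_G$ over $\mathbb{F}_2$; existence of an ``even-neighborhood'' set). The only point requiring a little care is the direction of Lemma~\ref{L:bipartite-pm-parity-signing} being used and the observation that $\det(A_G(s))\equiv\det(A_G)\pmod 2$ uniformly in $s$, which is what lets a statement about a single matrix $A_G$ control all signings simultaneously; I would make sure that step is spelled out rather than waved through. (As a sanity check: the empty set trivially satisfies the even-neighborhood condition, so the theorem is really asserting the existence of a \emph{nonempty} such $S$; I would note that the mod-$2$ kernel being nontrivial is exactly the nonemptiness we need.)
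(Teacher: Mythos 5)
The paper does not contain its own proof of Theorem~\ref{L:bipartite-pm-little}; it is cited as an external result of Little~\cite{Little:1972}, so there is no internal proof to compare against. Your proposal is essentially the standard linear-algebra-over-$\mathbb{F}_2$ argument, and it is correct. The chain you set up is sound: by Lemma~\ref{L:bipartite-pm-parity-signing}, an even number of perfect matchings is equivalent to $\det(A_G(s))$ being even for every signing $s$; since $A_G(s)\equiv A_G \pmod 2$ (each $\pm1$ reduces to $1$ and each $0$ stays $0$), this collapses to $\det(A_G)\equiv 0 \pmod 2$; and $A_G$ is singular over $\mathbb{F}_2$ iff it has a nonzero kernel vector $\mathbf{1}_S$, whose defining condition $A_G\mathbf{1}_S=\mathbf{0}$ over $\mathbb{F}_2$ is verbatim ``every vertex has an even number of neighbors in $S$'' (using that $A_G$ has zero diagonal). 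Gaussian elimination over $\mathbb{F}_2$ then gives the polynomial-time claim. It is worth saying explicitly that your argument also shows Lemma~\ref{L:bipartite-pm-parity-signing} is, over $\mathbb{F}_2$, equivalent to the simpler statement ``$\det(A_G)$ is even iff the number of perfect matchings is even''---the quantification over all signings is vacuous modulo $2$---which is a pleasant simplification.

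One point should be promoted from a parenthetical to the body of the argument: the theorem is only non-vacuous if $S$ is required to be \emph{nonempty}, since the empty set trivially satisfies the even-neighborhood condition. Your kernel argument handles this correctly (a nontrivial kernel means a \emph{nonzero} vector, hence a nonempty $S$), but as stated the theorem is sloppy and a careful write-up should say ``nonempty $S$'' both in the statement and where the kernel vector is introduced. Also, Lemma~\ref{L:bipartite-pm-parity-signing} as stated and proved in the appendix assumes a simple graph; your use of it is consistent with that, but if one wanted the theorem for graphs with loops, the diagonal term $A_G[u,u]$ would alter the meaning of ``even number of neighbors in $S$'' and both lemma and theorem would need restating.
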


%

We now have the ingredients to characterize bipartite graphs whose signed adjacency matrix is invertible for all signings. 

\begin{lemma}
\label{lem:bipartite_characterization}
Let $G$ be a bipartite graph 
and let $A_G$ be the adjacency matrix of $G$.  Then $\det(A_G(s)) \neq 0$ for all signings $s$ if and only if $G$ has an odd number of perfect matchings.  
\end{lemma}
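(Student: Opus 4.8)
The plan is to combine the two cited results of Little (Lemma~\ref{L:bipartite-pm-parity-signing} and Theorem~\ref{L:bipartite-pm-little}) with the characterization of invertible signings from Theorem~\ref{thm:characterization}, specialized to bipartite graphs. The key observation is that for a \emph{bipartite} graph, the notions ``has a perfect $2$-matching'' and ``has a perfect matching'' coincide: a bipartite graph has no odd cycles, so any perfect $2$-matching decomposes into perfect matchings (more precisely, its support, with the multiplicities, can be written as a combination of two perfect matchings, possibly equal). Hence by Theorem~\ref{thm:characterization}, if $G$ has no perfect matching then $A_G(s)$ is singular for every signing $s$ (and a fortiori $G$ cannot have an odd number of perfect matchings, since it has zero); this disposes of the case of zero perfect matchings on both sides of the equivalence.

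For the remaining case, assume $G$ has at least one perfect matching. First I would prove the contrapositive of the forward direction: if $G$ has an even (and positive) number of perfect matchings, then by Lemma~\ref{L:bipartite-pm-parity-signing}, $\det(A_G(s))$ is even for every signing $s$. Now I claim that in fact $\det(A_G(s)) = 0$ for every signing $s$. The point is that the value $\det(A_G(s))$, as $s$ ranges over all signings, is controlled by a sign-flip argument like the one in the proof of Lemma~\ref{L:zero-terms}: flipping the sign of a single edge $e$ that lies in the cycle/matching-edge part of some permutation changes the corresponding term, and one can try to show that the parity of $\det(A_G(s))$ being constantly even forces all terms to cancel. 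Actually the cleanest route is: the $\pm1$ values $\det(A_G(s))$ attains, together with evenness, combined with the fact that over $\mathbb{F}_2$ the determinant polynomial is the same as the permanent polynomial and counts perfect matchings — so evenness of $\det(A_G(s))$ for all $s$ is equivalent to an even number of perfect matchings — does not by itself give singularity. So instead I would use Theorem~\ref{L:bipartite-pm-little}: an even number of perfect matchings yields a set $S\subseteq V(G)$ such that every vertex has an even number of neighbors in $S$. Writing $\mathbf{1}_S$ for the indicator vector and working over $\mathbb{R}$, for any signing $s$ consider the vector obtained by suitably signing the coordinates of $\mathbf{1}_S$; the condition ``every vertex has an even number of neighbors in $S$'' should translate, after choosing signs row by row along the bipartition, into $A_G(s)\,v = 0$ for a nonzero $v$, exhibiting singularity. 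This sign-choosing step — propagating a consistent $\pm1$ scaling of the coordinates in $S$ using bipartiteness so that the contributions at each vertex cancel in $\mathbb{R}$ rather than merely in $\mathbb{F}_2$ — is the step I expect to be the main obstacle, and it is precisely where bipartiteness is essential.

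Conversely, suppose $G$ has an odd number of perfect matchings. Then by Theorem~\ref{L:bipartite-pm-little} there is no set $S$ with every vertex having an even number of neighbors in $S$; equivalently (this is the content of Little's theorem and its proof over $\mathbb{F}_2$), the adjacency matrix $A_G$ has full rank over $\mathbb{F}_2$. But $\det(A_G(s)) \equiv \det(A_G) \pmod 2$ for every signing $s$ (flipping a sign changes an entry by $2$, hence the determinant by an even amount — this is exactly Lemma~\ref{L:bipartite-pm-parity-signing} read in one direction), so $\det(A_G(s))$ is odd, in particular nonzero, for \emph{every} signing $s$. This gives the reverse implication (indeed the stronger statement that \emph{every} signing is invertible when the number of perfect matchings is odd), completing the proof. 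I would double-check the edge case where $G$ has a perfect matching but the bipartition is unbalanced — then there are zero perfect matchings, already handled — and note that all invocations (Gallai–Edmonds-free here; only Theorem~\ref{thm:characterization}, Lemma~\ref{L:bipartite-pm-parity-signing}, and Theorem~\ref{L:bipartite-pm-little}) are legitimate.
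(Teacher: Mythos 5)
Your forward direction (odd perfect matchings implies every signing is invertible) is correct and matches the paper: either directly via Lemma~\ref{L:bipartite-pm-parity-signing}, or via the mod-$2$ observation that $\det(A_G(s)) \equiv \det(A_G) \pmod 2$ for all $s$, so a single odd-determinant signing forces all determinants odd. The zero-perfect-matching case via Theorem~\ref{thm:characterization} is a fine, if unnecessary, way to handle one degenerate corner.

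The backward direction has a genuine error. You claim that if $G$ has an even (positive) number of perfect matchings, then ``$\det(A_G(s)) = 0$ for every signing $s$.'' This is false: take $G = C_4$, which has two perfect matchings. The all-ones signing gives a singular matrix, but signing exactly one edge $-1$ gives $\det = 4 \neq 0$. More generally, nothing in evenness of the perfect-matching count forces \emph{all} signings to be singular; it only guarantees \emph{at least one} singular signing. Your subsequent retreat to ``for any signing $s$, suitably sign the coordinates of $\mathbf{1}_S$ to produce a kernel vector $v$'' still carries the same mistake: for an invertible $A_G(s)$ there is no nonzero kernel vector, no matter how you sign the coordinates. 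The degrees of freedom you need to exercise live on the \emph{edges}, not on the entries of a prospective kernel vector. The paper's proof does exactly this: with $S$ from Theorem~\ref{L:bipartite-pm-little}, it builds a specific signing $\hat s$ by taking an Eulerian tour of each component of $G[S]$ (which has even length, by bipartiteness) and alternating signs along the tour, and for each vertex $v \notin S$ splitting its edges into $S$ into two equal halves signed $+$ and $-$; under $\hat s$ the columns indexed by $S$ sum to zero, so $\mathbf{1}_S \in \ker A_G(\hat s)$. Fixing your argument requires replacing ``for any $s$, choose $v$'' with ``construct one $\hat s$ so that $\mathbf{1}_S$ itself is a kernel vector,'' which is where the Eulerian-tour construction is the missing ingredient.
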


\begin{proof}
Suppose $G$ has an odd number of perfect matchings. By Lemma \ref{L:bipartite-pm-parity-signing}, we have that $\det(A_G(s)) \neq 0$ for all signings $s$.

Now suppose that $G$ has an even number of perfect matchings. By Theorem \ref{L:bipartite-pm-little}, there exists a set $S \subseteq V(G)$ such that $|N_G(v) \cap S|$ is even for all $v \in V(G)$. We observe that the subgraph $G[S]$ induced by $S$ is bipartite with every vertex having even degree. Thus, any closed walk on $G[S]$ has even number of edges and every connected component in $G[S]$ has an Eulerian tour with even number of edges. Let $C$ be a connected component of $G[S]$ with $m$ edges and let $T \coloneqq (e_1, e_2, \ldots, e_m)$ be an ordering of the edges that represents an Eulerian tour of $C$. Then we sign edge $e_i$ to be positive if $i$ is even and negative otherwise.  Every vertex $v \in V(G) \setminus S$ has even number of edges between $v$ and vertices in $S$. We partition the edges incident to $v$ into two arbitrary parts of equal size and sign all the edges in one part to be positive and the rest of the edges in the other part to be negative. Let $\hat s$ denote the resulting signing.

Under the signing $\hat s$ every vertex $v$ of $G$ has an equal number of positive and negative edges to vertices in $S$. Thus, the sum of the column vectors corresponding to the vertices in $S$ will be zero and hence $\det(A_G(\hat s)) = 0$.
\end{proof}

We note that the proof of Lemma \ref{lem:bipartite_characterization} is constructive since we can find a set $S$ for which every vertex has even number of neighbors in $S$ in polynomial time by Theorem~\ref{L:bipartite-pm-little}. Thus, Theorem \ref{thm:singular-singing-for-bipartite-graphs} follows from Theorem~\ref{L:bipartite-pm-little} and Lemma \ref{lem:bipartite_characterization}.

\subsection{Hardness of \textsc{SingularSigning}} \label{SS:hardness-singular}
 
In order to show the NP-completeness result, we reduce from the partition problem, which is a well-known NP-complete problem \cite{Ka72}. We recall the problem below:

\vspace{11pt}
\noindent {\textsc{Partition}}: Given an $n$-dimensional vector $b$ of non-negative integers, determine if there is a $\pm1$-signing vector $z$ such that the inner product $\langle b, z \rangle$ equals zero. 
\vspace{11pt}

\begin{lemma}
\label{L:singular-signing-npc}
\textsc{SingularSigning} is NP-complete.
\end{lemma}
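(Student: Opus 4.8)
The plan is to prove NP-completeness of \textsc{SingularSigning} by a reduction from \textsc{Partition}, following the hint in the excerpt. Membership in NP is immediate: given a symmetric signing $s$, one can verify in polynomial time that $M(s)$ is singular by computing $\det(M(s))$ (or the rank) over the rationals. So the bulk of the work is the hardness reduction.

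Given an instance $b = (b_1, \dots, b_n)$ of \textsc{Partition}, I would construct a symmetric matrix $M$ whose off-diagonal symmetric signings $s$ are in correspondence with sign vectors $z \in \{\pm 1\}^n$, and such that $M(s)$ is singular precisely when $\langle b, z\rangle = 0$. The natural gadget is a bordered matrix: introduce one vertex for each coordinate $i$, each connected to a common auxiliary vertex (or a small auxiliary structure) with edge weight $b_i$, so that the signing chooses $\pm b_i$ on that edge. Concretely, one can take $M$ of block form
\[
M = \begin{bmatrix} 0 & b^T \\ b & D \end{bmatrix}
\]
or a variant thereof, where $D$ is a carefully chosen diagonal (or block-diagonal) matrix ensuring that (i) the only freedom in an off-diagonal symmetric signing lies in flipping the signs of the border entries $b_i$, and (ii) expanding the determinant along the border row/column yields something proportional to $\langle b, z\rangle$ (or $\langle b,z\rangle^2$, or a linear expression that vanishes iff $\langle b, z \rangle = 0$). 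A clean way to force $\det$ to factor through $\langle b, z\rangle$ is to make the lower-right block invertible with a known inverse, then apply the Schur complement identity (Lemma~\ref{lem:schur-complement-properties}(ii)): $\det M(s) = \det(D) \cdot \bigl(0 - z^T \,\mathrm{diag}(b)\, D^{-1}\, \mathrm{diag}(b)\, z\bigr)$, and one picks $D$ so that $\mathrm{diag}(b) D^{-1} \mathrm{diag}(b)$ becomes (a scalar multiple of) the all-ones matrix $bb^T$ or $J$, making the Schur complement vanish exactly when $\langle b, z \rangle = 0$. Care is needed to handle zero entries $b_i$ (vertices with $b_i = 0$ contribute nothing and can be dropped) and to keep all matrix entries of size polynomial in the input.

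The steps in order: (1) state membership in NP; (2) describe the matrix $M$ built from $b$, noting its support graph and verifying that an off-diagonal symmetric signing of $M$ is determined exactly by a choice of signs $z_i$ on the border entries; (3) compute $\det M(s)$ via the Schur complement and simplify to a closed form in $z$ and $b$; (4) conclude that $M(s)$ is singular for some $s$ iff $\langle b, z\rangle = 0$ for some $z$, i.e., iff the \textsc{Partition} instance is a yes-instance; (5) note the reduction is polynomial-time and the entries of $M$ have polynomially bounded bit-length.

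The main obstacle I anticipate is choosing the auxiliary block $D$ so that three things hold simultaneously: the support graph of $M$ has no ``extra'' signable edges (so signings genuinely correspond to $\{\pm1\}^n$ rather than a larger set that could make the determinant vanish for spurious reasons), $D$ is invertible so the Schur complement argument applies unconditionally, and the resulting expression in $z$ is $0$ \emph{if and only if} $\langle b, z\rangle = 0$ (not merely implied in one direction). A subtlety is that $\langle b, z\rangle^2 = \sum_i b_i^2 + \sum_{i\neq j} b_i b_j z_i z_j$, so terms like $z_i z_j$ appear with the product sign rather than independently — one must ensure the gadget's determinant really is a function of $\langle b, z\rangle$ alone and does not accidentally separate into independently-controllable sign terms (which could make it always zero-able and thus trivialize the reduction). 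Picking $D$ to be a rank-one-plus-diagonal structure, or routing all $b_i$ edges through a single degree-$n$ hub with appropriate diagonal weights, should resolve this, and verifying the "only if" direction carefully is where the real care lies.
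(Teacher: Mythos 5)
Your high-level plan (NP membership, reduce from \textsc{Partition}, bordered matrix, Schur complement) matches the paper's, but the gadget you sketch has a genuine gap that you gesture at but do not resolve, and resolving it requires a structurally different matrix. Concretely, in your $(n+1)\times(n+1)$ form $M = \begin{bmatrix} 0 & b^T \\ b & D\end{bmatrix}$, if $D$ is diagonal then the Schur-complement determinant is $-\det(D)\cdot \hat b^T D^{-1} \hat b = -\det(D)\sum_i b_i^2/d_i$, which is \emph{independent of the signing} (since $\hat b_i^2 = b_i^2$), so the reduction carries no information. To make $\hat b^T D^{-1}\hat b$ depend on the signs you need $D^{-1}$ to have nonzero off-diagonal entries; but then $D$ itself has nonzero off-diagonal entries, which are themselves signable, so the set of signings of $M$ is no longer parametrized by $z\in\{\pm1\}^n$ alone, and you lose control of the ``only if'' direction. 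Your proposed fixes (rank-one-plus-diagonal $D$, a degree-$n$ hub) run straight into this: solving $\mathrm{diag}(b)D^{-1}\mathrm{diag}(b) = bb^T$ forces $D^{-1} = J$, which is singular, and any correction reintroduces off-diagonal support in $D$.

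The paper's construction sidesteps this by using \emph{two} border columns and an $(n+2)\times(n+2)$ matrix
\[
M = \begin{bmatrix} I_n & b & \mathbf{1}_n \\ b^T & \langle b,b\rangle & 0 \\ \mathbf{1}_n^T & 0 & n\end{bmatrix}.
\]
Here the upper-left invertible block is $I_n$ and the $2\times 2$ lower-right block is $\mathrm{diag}(\langle b,b\rangle, n)$. An off-diagonal signing turns $b\mapsto\hat b$ and $\mathbf{1}_n\mapsto z$ independently; the Schur complement of the $2\times 2$ block comes out as $\begin{bmatrix} 0 & -\langle\hat b,z\rangle \\ -\langle\hat b,z\rangle & 0\end{bmatrix}$ because $\langle\hat b,\hat b\rangle = \langle b,b\rangle$ and $\langle z,z\rangle = n$ exactly cancel the chosen diagonal, leaving $\det M(s) = -\langle\hat b,z\rangle^2$. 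The freedom per coordinate comes from the product of the two independently-signed border entries, not from a single signed edge, which is the idea your single-border gadget cannot realize. The per-coordinate product $\hat b_i z_i/b_i \in \{\pm1\}$ is exactly the \textsc{Partition} sign, giving the clean iff.
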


\begin{proof}
\textsc{SingularSigning} is in NP since if there is an (off-diagonal) signing of the given matrix that is positive semi-definite or singular, then this signing gives the witness. In particular, we can verify if a given (off-diagonal) symmetric signed matrix is positive semi-definite or singular in polynomial time by computing its spectrum \cite{book-matrix-computations}. 

We show NP-hardness of \textsc{SingularSigning} by reducing from \textsc{Partition}. Let the $n$-dimensional vector $b\coloneqq ( b_1, \ldots, b_n )^T$ be the input to \textsc{Partition}, where each $b_i$ is a non-negative integer. We construct a matrix $M$ as an instance of \textsc{SingularSigning} as follows: Consider the following $(n+2) \times (n+2)$-matrix
\begin{equation*}
M \coloneqq
\begin{bmatrix}
I_n & b & \mathbf{1}_n \\
b^T & \langle b, b \rangle & 0\\
 \mathbf{1}_n^T & 0 & n \\   
\end{bmatrix},
\end{equation*}
where $I_n$ is the $n\times n$ identity matrix and $\mathbf{1}_n$ is the $n$-dimensional column vector of all ones. 
%
%
Claim \ref{claim:singular-signing-correctness} proves the correctness of the reduction to \textsc{SingularSigning}.
\end{proof}

\begin{claim}\label{claim:singular-signing-correctness}
The matrix $M$ has a symmetric off-diagonal signing $s$ such that $M(s)$ is singular if and only if there is a vector $z\in \{\pm 1\}^n$ such that the inner product $\langle b,z\rangle$ is zero. 
\end{claim}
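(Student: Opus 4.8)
The plan is to analyze the determinant of $M(s)$ for an arbitrary off-diagonal symmetric signing $s$ and show it vanishes precisely when a valid partition vector exists. Since $s$ is off-diagonal, the diagonal block $I_n$ is unchanged, so the top-left $n \times n$ block of $M(s)$ remains the identity. Write the signing on the off-diagonal entries involving the last two coordinates: let $z \in \{\pm 1\}^n$ be the sign vector on the entries pairing coordinate $i \le n$ with coordinate $n+1$ (so the $(i, n+1)$ entry becomes $z_i b_i$), let $w \in \{\pm 1\}^n$ be the sign vector on the entries pairing $i \le n$ with $n+2$ (so the $(i,n+2)$ entry becomes $w_i$), and let $\epsilon \in \{\pm 1\}$ be the sign on the $(n+1, n+2)$ entry, which is multiplied by $0$ and hence irrelevant. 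So $M(s)$ has the block form with $I_n$ in the corner, column $\tilde b := (z_i b_i)_i$ and column $\mathbf{1}_w := (w_i)_i$ attached, and the $2\times 2$ trailing block $\begin{bmatrix} \langle b, b\rangle & 0 \\ 0 & n \end{bmatrix}$.

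Next I would compute $\det(M(s))$ via the Schur complement of the trailing $2 \times 2$ block with respect to the invertible block $I_n$ (Lemma~\ref{lem:schur-complement-properties}(ii)): $\det(M(s)) = \det(I_n) \cdot \det(D_C)$ where
\[
D_C = \begin{bmatrix} \langle b,b\rangle & 0 \\ 0 & n \end{bmatrix} - \begin{bmatrix} \tilde b^T \\ \mathbf{1}_w^T \end{bmatrix} I_n^{-1} \begin{bmatrix} \tilde b & \mathbf{1}_w \end{bmatrix} = \begin{bmatrix} \langle b,b\rangle - \langle \tilde b, \tilde b\rangle & -\langle \tilde b, \mathbf{1}_w\rangle \\ -\langle \mathbf{1}_w, \tilde b\rangle & n - \langle \mathbf{1}_w, \mathbf{1}_w\rangle \end{bmatrix}.
\]
Here $\langle \tilde b, \tilde b\rangle = \sum_i z_i^2 b_i^2 = \sum_i b_i^2 = \langle b, b\rangle$, so the top-left entry is $0$; and $\langle \mathbf{1}_w, \mathbf{1}_w\rangle = \sum_i w_i^2 = n$, so the bottom-right entry is $0$. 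Thus $D_C = \begin{bmatrix} 0 & -c \\ -c & 0 \end{bmatrix}$ where $c := \langle \tilde b, \mathbf{1}_w\rangle = \sum_i z_i w_i b_i$, giving $\det(M(s)) = -c^2$. Therefore $M(s)$ is singular if and only if $c = \sum_i (z_i w_i) b_i = 0$.

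Finally I would observe that as $s$ ranges over all off-diagonal symmetric signings, the product vector $y := (z_i w_i)_i$ ranges over all of $\{\pm 1\}^n$ (given any target $y$, take $z = y$ and $w = \mathbf{1}$, say), so there exists a singular signing if and only if there exists $y \in \{\pm 1\}^n$ with $\langle b, y\rangle = 0$, which is exactly a YES-instance of \textsc{Partition}. This proves the claim, and hence the reduction in Lemma~\ref{L:singular-signing-npc} is correct.

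The main obstacle, such as it is, is purely bookkeeping: making sure the diagonal is genuinely untouched (which is why the problem fixes \emph{off-diagonal} signings — otherwise one could sign a diagonal entry to $0$... though here the diagonal is nonzero, so actually one should double-check whether diagonal signings could help, but for \textsc{SingularSigning} the diagonal stays fixed at $\pm 1$ times nonzero values and the argument that $\langle \tilde b, \tilde b \rangle = \langle b, b\rangle$ and $\langle \mathbf 1_w, \mathbf 1_w\rangle = n$ still works regardless), and confirming the irrelevant $(n+1,n+2)$-entry really is multiplied by $0$ so its sign does not matter. The only genuine content is the Schur-complement computation collapsing to $-c^2$, which is short.
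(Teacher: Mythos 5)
Your proof is correct and takes essentially the same route as the paper: compute the Schur complement of the trailing $2\times 2$ block with respect to the (untouched) $I_n$ block, observe the diagonal cancellations $\langle \tilde b,\tilde b\rangle=\langle b,b\rangle$ and $\langle \mathbf 1_w,\mathbf 1_w\rangle=n$, and conclude $\det M(s)=-\langle\tilde b,\mathbf 1_w\rangle^2$ so that singularity is equivalent to a \textsc{Partition} solution via the product vector $(z_iw_i)_i$. The paper's own writeup folds the two sign vectors into one (writing the signed columns as $\hat b$ with $\hat b_i\in\{\pm b_i\}$ and $z\in\{\pm 1\}^n$) and defers the Schur complement arithmetic to Claim~\ref{claim:psd-signing-correctness}, but the computation and conclusion are identical to yours.
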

\begin{proof}
Construct the Schur complement $M'_C$ of $C$ of $M'$ as in Claim \ref{claim:psd-signing-correctness}.
Using property $(ii)$ of Lemma \ref{lem:schur-complement-properties}, we have that 
\begin{equation*}
\det M' = \det(I_n) \cdot \det(M'_C)= \det(I_n) \cdot \det \left(
\begin{bmatrix}
0 & - \langle \hat b, z \rangle\\
- \langle \hat b,z \rangle & 0 \\    
\end{bmatrix}
\right) = - \langle \hat b, z \rangle^2.
\end{equation*}
Therefore, $\det M' = 0$ if and only if $\langle \hat b, z \rangle = 0$. We note that $\langle \hat b,z\rangle =0$ if and only if there is a $\pm 1$-vector $z'$ such that $\langle b, z'\rangle=0$. 
\end{proof}


%

\section{Hardness of Eigenvalue Problems}
\label{S:hardness-evalue}

In this section we prove that \textsc{PsdSigning} and \textsc{BoundedEvalueSigning} are NP-complete.  Together with Lemma~\ref{L:singular-signing-npc} this completes the proof of Theorem \ref{thm:bounded-ev}.

\subsection{Hardness of Positive Semi-definite Signing Problem} 
\label{SS:hardness-psd}
 
In order to show the NP-completeness of \textsc{PsdSigning}, we again reduce from \textsc{Partition} \cite{Ka72}. 
The proof has a similar outline to the NP-completeness proof of \textsc{SingularSigning} (Lemma~\ref{L:singular-signing-npc}).

\begin{lemma}\label{lem:psd}
\textsc{PsdSigning} is NP-complete.
\end{lemma}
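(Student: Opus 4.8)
\textbf{Proof plan for Lemma~\ref{lem:psd}.}
The plan is to mimic the structure of the proof of Lemma~\ref{L:singular-signing-npc}, again reducing from \textsc{Partition}, but now arranging the matrix so that singularity is forced to coincide with positive semi-definiteness along the only degree of freedom. Membership in NP is immediate and identical to the singular case: given a candidate symmetric signing $s$, compute the spectrum of $M(s)$ in polynomial time and check that all eigenvalues are non-negative \cite{book-matrix-computations}. For the hardness direction, given an instance $b=(b_1,\dots,b_n)^T$ of \textsc{Partition}, I would use essentially the same $(n+2)\times(n+2)$ matrix
\[
M \coloneqq
\begin{bmatrix}
I_n & b & \mathbf{1}_n \\
b^T & \langle b, b \rangle & 0\\
\mathbf{1}_n^T & 0 & n \\
\end{bmatrix},
\]
possibly after adding a large constant to the two bottom-right diagonal entries so that the relevant leading block is positive definite regardless of signing. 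Since the only off-diagonal entries that can be signed nontrivially (without affecting the analysis) are those in the $b$-column and the $\mathbf 1_n$-column, a symmetric off-diagonal signing amounts to choosing a vector $z\in\{\pm1\}^n$ acting on $b$ (and independently on $\mathbf 1_n$, but that part can be fixed or shown irrelevant).

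The key computational step is to take the Schur complement of the trailing $2\times 2$ block with respect to the positive definite $I_n$ block, exactly as in Claim~\ref{claim:singular-signing-correctness}; this yields a $2\times 2$ matrix of the form
\[
\begin{bmatrix}
0 & -\langle \hat b, z\rangle \\
-\langle \hat b, z\rangle & 0
\end{bmatrix}
\]
(up to the constant shift), whose eigenvalues are $\pm\langle\hat b,z\rangle$. Now I invoke property~$(i)$ of Lemma~\ref{lem:schur-complement-properties}: since the $I_n$ block is positive definite, $M(s)$ is positive semi-definite if and only if this Schur complement is positive semi-definite. But a symmetric $2\times 2$ matrix with zero diagonal is psd if and only if it is the zero matrix, i.e.\ if and only if $\langle\hat b,z\rangle=0$. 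Translating back (as in the singular case, $\langle\hat b,z\rangle=0$ iff some $\pm1$-vector $z'$ satisfies $\langle b,z'\rangle=0$), we conclude that $M$ has a psd off-diagonal symmetric signing if and only if the \textsc{Partition} instance is a yes-instance. I would state this equivalence as a claim (parallel to Claim~\ref{claim:psd-signing-correctness} that the proof already references) and prove it via the Schur-complement computation.

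The main obstacle — and the one point requiring care — is ensuring that the leading block used for the Schur complement is genuinely positive \emph{definite} after signing, so that property~$(i)$ of Lemma~\ref{lem:schur-complement-properties} applies, and simultaneously that no choice of signs on the remaining off-diagonal entries (the $\mathbf 1_n$-column, and the single $0$ entry between the $b$-row and the $\mathbf 1_n$-row) can sneak in an extra psd signing that breaks the equivalence. The clean way to handle this is to choose the diagonal/scaling so that the $(n+1,n+2)$-block dominates: shift the two trailing diagonal entries by a large $T$ and observe that for large $T$ the Schur complement is positive definite precisely when the off-diagonal entry is zero, uniformly over the irrelevant sign choices, which then get absorbed into $\hat b$ exactly as in the singular-signing reduction. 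Once this bookkeeping is pinned down, the rest is a verbatim adaptation of the argument for Lemma~\ref{L:singular-signing-npc}.
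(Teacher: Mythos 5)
Your overall plan is right — reduce from \textsc{Partition} using exactly the $(n+2)\times(n+2)$ matrix you wrote, take the Schur complement with respect to the $I_n$ block, and observe the complement collapses to $\begin{bmatrix} 0 & -\langle\hat b,z\rangle\\ -\langle\hat b,z\rangle & 0\end{bmatrix}$, which is psd iff $\langle\hat b,z\rangle=0$. This is precisely the paper's reduction. But the ``fix'' you propose for the obstacle you flag is incorrect and would actually destroy the reduction. If you shift the two trailing diagonal entries by a large constant $T$, the Schur complement becomes $\begin{bmatrix} T & -\langle\hat b,z\rangle\\ -\langle\hat b,z\rangle & T\end{bmatrix}$, which is positive semi-definite precisely when $|\langle\hat b,z\rangle|\le T$; for large $T$ this holds for \emph{every} signing (the inner product is bounded by $\|b\|_1$), so the constructed instance would always be a yes-instance regardless of whether the \textsc{Partition} instance is. The claim that a large shift forces the Schur complement to be psd ``precisely when the off-diagonal entry is zero'' has the quantifier backwards.

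The concern you raise — that a general symmetric signing (not merely off-diagonal) could flip diagonal entries of $I_n$ to $-1$, invalidating the Schur-complement step — is legitimate, but the correct resolution is the one the paper uses: a positive semi-definite matrix cannot have a negative diagonal entry, so any signing that makes $M(s)$ psd must be the identity on the diagonal of $I_n$ (and on the two trailing diagonal entries), and we may therefore restrict attention to off-diagonal signings without loss of generality. With that observation the leading $I_n$ block is literally $I_n$ under every candidate signing, positive definite with no modification needed. Your other two worries evaporate in the computation itself: the $(n+1,n+2)$ entry is zero so its sign is irrelevant, and the signing of the $\mathbf{1}_n$ column becomes $z\in\{\pm1\}^n$, which still satisfies $\langle z,z\rangle=n$, so the $(2,2)$ entry of the Schur complement is $n-n=0$ regardless of $z$. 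In short: drop the $T$-shift entirely, add the one-line observation about nonnegative diagonals, and your proof matches the paper's.
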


\begin{proof}
\textsc{PsdSigning} is in NP since if there is an (off-diagonal) signing of the given matrix that is positive semi-definite, then this signing gives the witness.  In particular, we can verify if a given (off-diagonal) symmetric signed matrix is positive semi-definite in polynomial time by computing its spectrum \cite{book-matrix-computations}. 

We show NP-hardness of \textsc{PsdSigning} by reducing from \textsc{Partition}. Let the $n$-dimensional vector $b\coloneqq ( b_1, \ldots, b_n )^T$ be the input to the \textsc{Partition} problem, where each $b_i$ is a non-negative integer. We construct a matrix $M$ as an instance of \textsc{PsdSigning} as follows: Consider the following $(n+2) \times (n+2)$-matrix
\begin{equation*}
M \coloneqq
\begin{bmatrix}
I_n & b & \mathbf{1}_n \\
b^T & \langle b, b \rangle & 0\\
 \mathbf{1}_n^T & 0 & n \\   
\end{bmatrix},
\end{equation*}
where $I_n$ is the $n\times n$ identity matrix and $\mathbf{1}_n$ is the $n$-dimensional column vector of all ones. 
%
%
Claim \ref{claim:psd-signing-correctness} proves the correctness of the reduction to \textsc{PsdSigning}.
\end{proof}

\begin{claim}\label{claim:psd-signing-correctness}
The matrix $M$ has a signing $s$ such that $M(s)$ is positive semi-definite if and only if there is a $\pm 1$-vector $z$ such that the inner product $\langle b,z\rangle$ is zero. 
\end{claim}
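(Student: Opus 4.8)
The plan is to reduce the positive-semi-definiteness of $M(s)$ to a single scalar condition by taking the Schur complement of the leading $I_n$ block (Lemma~\ref{lem:schur-complement-properties}).

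First I would record the shape of an arbitrary signed matrix. A signing only changes signs of entries: it fixes the off-diagonal $0$, fixes the diagonal entries $\langle b,b\rangle$ and $n$, turns $b$ into a vector $\hat b$ with $\hat b_i\in\{-b_i,+b_i\}$, turns $\mathbf{1}_n$ into a vector $z\in\{\pm1\}^n$, and turns the identity block into a diagonal $\pm1$ matrix $D$. Thus
\[
M' := M(s) = \begin{bmatrix} D & \hat b & z \\ \hat b^T & \langle b,b\rangle & 0 \\ z^T & 0 & n \end{bmatrix}, \qquad \langle \hat b,\hat b\rangle = \langle b,b\rangle, \quad \langle z,z\rangle = n .
\]
Since a positive-semi-definite matrix has non-negative diagonal, any PSD signing must satisfy $D = I_n$ (flipping a diagonal $+1$ to $-1$ destroys PSD-ness); and for the reverse direction I would exhibit a signing that does not touch the diagonal, so again $D=I_n$. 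Hence in both directions the leading block is the positive definite matrix $A=I_n$, and Lemma~\ref{lem:schur-complement-properties} applies.

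Next, writing $M'$ in the block form of Lemma~\ref{lem:schur-complement-properties} with $A=I_n$, $B=[\hat b\mid z]$ (an $n\times 2$ matrix), and $C$ the $2\times 2$ block with diagonal $(\langle b,b\rangle, n)$ and zero off-diagonal, the Schur complement of $C$ is
\[
M'_C = C - B^T B = \begin{bmatrix} \langle b,b\rangle - \langle \hat b,\hat b\rangle & -\langle \hat b,z\rangle \\ -\langle \hat b,z\rangle & n - \langle z,z\rangle \end{bmatrix} = \begin{bmatrix} 0 & -\langle \hat b,z\rangle \\ -\langle \hat b,z\rangle & 0 \end{bmatrix},
\]
using the two identities recorded above (this is exactly the $M'_C$ invoked in the proof of Claim~\ref{claim:singular-signing-correctness}). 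By property~$(i)$ of Lemma~\ref{lem:schur-complement-properties}, $M'$ is PSD iff $M'_C$ is PSD; and a symmetric $2\times 2$ matrix with zero diagonal and off-diagonal entry $a$ has eigenvalues $\pm a$, so it is PSD iff $a=0$. Therefore $M(s)$ is PSD if and only if $\langle \hat b, z\rangle = 0$.

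Finally I would translate this into the \textsc{Partition} condition. Writing $\hat b_i=\varepsilon_i b_i$ with $\varepsilon_i\in\{\pm1\}$, we get $\langle \hat b, z\rangle = \sum_{i=1}^n \varepsilon_i z_i b_i = \langle b, z'\rangle$ where $z':=(\varepsilon_i z_i)_{i=1}^n\in\{\pm1\}^n$, and conversely every $z'\in\{\pm1\}^n$ is realized this way (take $\varepsilon=\mathbf{1}_n$, $z=z'$, i.e.\ sign only the $\mathbf{1}_n$ column by $z'$ and leave $b$ and the diagonal untouched). Hence $M$ admits a PSD signing iff there is $z'\in\{\pm1\}^n$ with $\langle b,z'\rangle=0$, which is precisely the given \textsc{Partition} instance. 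I expect the only points needing care to be the observation that a PSD signing cannot flip a diagonal entry — so that Lemma~\ref{lem:schur-complement-properties}$(i)$ applies with $A=I_n$ — and the bookkeeping $\langle \hat b,\hat b\rangle=\langle b,b\rangle$, $\langle z,z\rangle=n$ that makes the diagonal of $M'_C$ vanish; the degenerate case $b=0$ is immediate, since then both sides of the equivalence hold.
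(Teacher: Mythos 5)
Your proof is correct and follows essentially the same route as the paper's: both restrict to off-diagonal signings via the observation that a PSD matrix cannot have negative diagonal entries, compute the Schur complement of the $2\times 2$ block against the $I_n$ block, and reduce positive semi-definiteness to $\langle \hat b, z\rangle = 0$, which is then translated to a \textsc{Partition} witness. One small imprecision: you state that a signing ``fixes the diagonal entries $\langle b,b\rangle$ and $n$'' as if this were automatic, but an arbitrary symmetric signing could flip them to $-\langle b,b\rangle$ and $-n$; the correct justification is the same non-negative-diagonal argument you invoke for $D=I_n$, which also forces these two entries to keep their positive sign (the paper handles all three at once by restricting to off-diagonal signings at the outset).
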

\begin{proof}
We may assume that any signed matrix $M(s)$ that is positive semi-definite may not have negative entries in the diagonal because a positive semi-definite matrix will not have negative entries on its diagonal. Hence, we will only consider symmetric off-diagonal signing $s$ of the matrix $M$ of the following form: 
\begin{equation*}
M' \coloneqq M(s) =
\begin{bmatrix}
I_n & \hat b & z \\
\hat b^T & \langle b, b \rangle & 0\\
 z^T & 0 & n \\    
\end{bmatrix},
\end{equation*}
where the $n$-dimensional vector $z$ takes values in $\{\pm 1\}^n$ and $\hat b = (\hat b_1, \ldots, \hat b_n)^T$, where $\hat b_i$ takes value in $\{\pm b_i\}$ for every $i$.
Let 
\begin{align*}
A & := I_n,\\
B & := 
\begin{bmatrix}
\hat b & z
\end{bmatrix} \text{, and}\\
C & := 
\begin{bmatrix}
\langle b,b \rangle & 0\\
0 & n\\
\end{bmatrix}.
\end{align*}
Since $A=I_n$ is invertible, the Schur complement of $C$ in $M'$ is well-defined and is given by 
\begin{align*}
M'_C&=
\begin{bmatrix}
\langle b, b \rangle & 0\\
0 & n \\    
\end{bmatrix}
- 
\begin{bmatrix}
\hat b^T \\
 z^T \\    
\end{bmatrix}
I_n^{-1}
\begin{bmatrix}
\hat b & z
\end{bmatrix}\\
& = 
\begin{bmatrix}
\langle b, b \rangle & 0\\
0 & n \\    
\end{bmatrix}
- 
\begin{bmatrix}
 \langle \hat b, \hat b \rangle &  \langle \hat b, z \rangle \\
 \langle \hat b,z \rangle & \langle z, z \rangle \\    
\end{bmatrix}\\
&= 
\begin{bmatrix}
0 & - \langle \hat b, z \rangle\\
- \langle \hat b,z \rangle & 0 \\    
\end{bmatrix},
\end{align*}
where the last equation follows because we have $\langle \hat b, \hat b \rangle = \langle b, b \rangle$ and $\langle z,z\rangle = n$. 

We note that $A=I_n$ is positive definite. Therefore, by property $(1)$ of Lemma \ref{lem:schur-complement-properties}, the matrix $M'$ is positive semi-definite if and only if $M'_C$ is positive semi-definite. Therefore, $M'$ is positive semi-definite if and only if $\langle \hat b, z \rangle = 0$. Finally, we note that $\langle \hat b, z \rangle = 0$ if and only if there is a $\pm 1$-vector $z'$ such that $\langle b, z' \rangle=0$. 
\end{proof}
\subsection{Hardness of Bounded Eigenvalue Signing Problem} \label{SS:hardness-bounded-evalue}

To prove  that \textsc{BoundedEvalueSigning} is NP-complete, we consider the following problem that is closely related to \textsc{PsdSigning}:\\

\noindent {\textsc{NsdSigning}}: Given a real symmetric matrix $M$, verify if there exists a signing $s$ such that $M(s)$ is negative semi-definite.\\

We observe that a real symmetric $n\times n$ matrix is positive semi-definite if and only if $-M$ is negative semi-definite. Lemma \ref{lem:psd} and this observation lead to the following corollary. 

\begin{corollary}\label{cor:nsd}
\textsc{NsdSigning} is NP-complete.
\end{corollary}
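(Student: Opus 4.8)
The plan is to derive the NP-completeness of \textsc{NsdSigning} directly from Lemma~\ref{lem:psd} by exhibiting the trivial polynomial-time reduction $M \mapsto -M$ from \textsc{PsdSigning}; the observation immediately preceding the statement already isolates the one fact that makes this work.

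First I would verify membership in NP. A certificate for a yes-instance $M$ is a symmetric signing $s$: given $s$, one forms $M(s)$ and computes its spectrum in polynomial time~\cite{book-matrix-computations}, then checks that every eigenvalue is at most $0$. Since all of our signings are symmetric, $M(s)$ is real symmetric and has a real spectrum, so this check is well-defined.

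Next I would establish NP-hardness by a reduction from \textsc{PsdSigning}. Given an instance $M$ of \textsc{PsdSigning}, the reduction outputs $-M$, which is again a real symmetric matrix and is computable in polynomial time. Two elementary facts drive the correctness. (a) Entry-wise negation commutes with signing: for every symmetric signing $s$ we have $(-M)(s) = -(M(s))$, because $((-M)(s))[i,j] = -M[i,j]\cdot s[i,j] = -(M(s)[i,j])$. (b) A real symmetric matrix $N$ is negative semi-definite if and only if $-N$ is positive semi-definite, since the eigenvalues of $-N$ are precisely the negatives of those of $N$. Combining (a) and (b), a signing $s$ makes $(-M)(s)$ negative semi-definite exactly when that same $s$ makes $M(s)$ positive semi-definite. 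Hence $-M$ is a yes-instance of \textsc{NsdSigning} iff $M$ is a yes-instance of \textsc{PsdSigning}, which together with membership in NP gives the claim.

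The main obstacle is, honestly, that there is none: the corollary is the clean consequence advertised by the preceding observation. The only points worth stating explicitly are why the same signing vector transfers between the two instances---this is exactly fact (a)---and why no separate treatment of diagonal signs is needed, namely that both \textsc{PsdSigning} and \textsc{NsdSigning} are posed for arbitrary symmetric signings rather than off-diagonal ones.
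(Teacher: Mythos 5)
Your argument is correct and matches the paper's own reasoning: the paper derives the corollary from Lemma~\ref{lem:psd} via the observation that a real symmetric matrix is positive semi-definite iff its negation is negative semi-definite, which is exactly the reduction $M \mapsto -M$ you spell out. You simply fill in the (easy) details the paper leaves implicit — that entry-wise negation commutes with signing and that membership in NP follows from polynomial-time spectrum computation.
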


We next reduce \textsc{NsdSigning} to \textsc{BoundedEvalueSigning} which also completes the proof of Theorem \ref{thm:bounded-ev}. 

\begin{lemma}
\textsc{BoundedEvalueSigning} is NP-complete.
\end{lemma}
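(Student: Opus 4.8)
The plan is to reduce \textsc{NsdSigning} to \textsc{BoundedEvalueSigning}, using Corollary~\ref{cor:nsd}. The key observation is that the condition ``$M(s)$ is negative semi-definite'' is equivalent to ``$\lambda_{\max}(M(s)) \le 0$'', so at first glance \textsc{NsdSigning} looks like the special case $\lambda = 0$ of \textsc{BoundedEvalueSigning}. The only subtlety is that \textsc{BoundedEvalueSigning} is stated for \emph{off-diagonal} signings (the diagonal is forced to $+1$), whereas \textsc{NsdSigning} allows arbitrary symmetric signings. So the real content is a gadget that converts an instance where diagonal signs matter into one where they do not.

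First I would handle membership in NP: given a candidate off-diagonal signing $s$, one computes the spectrum of $M(s)$ in polynomial time \cite{book-matrix-computations} and checks $\lambda_{\max}(M(s)) \le \lambda$; this is the same witness argument used in Lemma~\ref{lem:psd} and Lemma~\ref{L:singular-signing-npc}. For hardness, take an instance $M$ of \textsc{NsdSigning}; inspecting the proof of Lemma~\ref{lem:psd} (via Claim~\ref{claim:psd-signing-correctness}), the hard instances of \textsc{PsdSigning}, and hence of \textsc{NsdSigning}, already have the property that any semi-definite signed matrix must have the ``correct'' diagonal sign pattern forced by semi-definiteness — e.g.\ $-M$ has nonnegative diagonal, so a negative semi-definite signing of $M$ necessarily keeps those diagonal entries with their sign, i.e.\ is effectively an off-diagonal signing after fixing the global sign. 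Thus I would argue that $M$ (or $-M$, or a trivial rescaling/translation of it) is a yes-instance of \textsc{NsdSigning} if and only if it is a yes-instance of \textsc{BoundedEvalueSigning} with $\lambda = 0$ restricted to off-diagonal signings: in one direction an off-diagonal signing with $\lambda_{\max} \le 0$ is literally a negative semi-definite symmetric signing; in the other, a negative semi-definite symmetric signing can be massaged (using that the diagonal entries of $-M$ are nonnegative, so the diagonal signs are immaterial or can be absorbed) into an off-diagonal one with the same spectrum.

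The step I expect to be the main obstacle is the off-diagonal-versus-general-signing mismatch: I need to be careful that fixing the diagonal to $+1$ does not change the answer. The cleanest route is to translate — replace $M$ by $M - tI$ for a large enough integer $t$ so that every diagonal entry of the shifted matrix has a sign that is forced (e.g.\ so that a bounded-eigenvalue signing must leave the diagonal untouched), which shifts all eigenvalues by $-t$ and hence turns the threshold into $\lambda = -t$. Then ``$\lambda_{\max}(M'(s)) \le -t$'' with $s$ off-diagonal is equivalent to ``$M'(s)$ is negative semi-definite after adding $tI$'', i.e.\ to the original \textsc{NsdSigning} instance. I would verify that this shift is polynomial in the input size and that it genuinely forces the diagonal, completing the reduction. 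Together with Lemma~\ref{L:singular-signing-npc} and Lemma~\ref{lem:psd}, this finishes the proof of Theorem~\ref{thm:bounded-ev}.
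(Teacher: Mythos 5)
Your diagnosis of the obstacle is exactly right, and the membership-in-NP argument is fine, but the gadget you settle on---shift to $M - tI$ and threshold $\lambda = -t$---is not a correct generic reduction from \textsc{NsdSigning}. Signing does not commute with adding $-tI$: for a general symmetric signing $s$, the $(i,i)$ entry of $(M - tI)(s)$ is $s[i,i](M[i,i]-t)$, not $s[i,i]M[i,i] - t$, so when $s[i,i] = -1$ the shift blows the entry up rather than translating it. A concrete counterexample is the $1\times 1$ matrix $M=[1]$: it is a yes-instance of \textsc{NsdSigning} via $s=[-1]$, yet $M - tI = [1-t]$ admits only the off-diagonal signing $[+1]$ and $\lambda_{\max}([1-t]) = 1-t > -t$, so $(M-tI,\ \lambda=-t)$ is a no-instance of \textsc{BoundedEvalueSigning}. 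The forward direction of your claimed equivalence fails whenever a negative semi-definite signing of $M$ must flip a positive diagonal entry.

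The paper's gadget is different and sidesteps this: leave the off-diagonal entries alone, set $M'[i,i] := -\Abs{M[i,i]}$, and ask for $\lambda = 0$. Because a negative semi-definite matrix has non-positive diagonal, any general NSD signing $s$ of $M$ must have $s[i,i]\,M[i,i] = -\Abs{M[i,i]} = M'[i,i]$, so $M(s)$ and $M'(s')$ (with $s'$ the off-diagonal restriction of $s$) are literally the same matrix, and the equivalence holds for \emph{every} input $M$. Your earlier remark---that the particular Partition-based hard instances already force the diagonal signs---would also rescue the proof (just feed $-M$ from the \textsc{PsdSigning} gadget directly into \textsc{BoundedEvalueSigning} with $\lambda=0$), but you should drop the translation step, since that is the part that does not work.
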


\begin{proof}
\textsc{BoundedEvalueSigning} is in NP since if there is an off-diagonal signing of a given matrix that has all eigenvalues bounded above by a given real number $\lambda$, then this signing gives the witness. We can verify if 
all eigenvalues of a given off-diagonal symmetric signed matrix are at most $\lambda$ in polynomial time by computing the spectrum of the matrix.

We show NP-hardness of \textsc{BoundedEvalueSigning} by reducing from \textsc{NsdSigning} which is NP-complete by Corollary~\ref{cor:nsd}.  Let the real symmetric $n \times n$ matrix $M$ be the input to the \textsc{NsdSigning} problem. We construct an instance of \textsc{BoundedEvalueSigning} by considering $\lambda =0$ and the matrix $M'$ obtained from $M$ as follows (where $|a|$ denotes the magnitude of $a$):
\begin{equation*}
M'[i,j]=\left\{
\begin{array}{ll}
M[i,j] &\text{if $i\neq j$,} \\
-\Abs{\bigl. M[i,j]} &\text{if $i=j$.}
\end{array}
\right.
\end{equation*}
We observe that every negative semi-definite signing of $M$ has to necessarily have negative values on the diagonal. 
Hence, there is a signing $s$ such that that $M(s)$ is negative semi-definite if and only if there is an off-diagonal signing $t$ such that $\lambda_{\max} (M'(t)) \leq \lambda = 0$.
\end{proof}

\section{Discussion}
\label{S:conclusion}
Signed matrices are encountered in a variety of fields. Our work sheds light on the complexity landscape surrounding the spectral aspects of symmetric signings. Our results suggest that the key combinatorial structures are perfect $2$-matchings and perfect matchings in the support graph of the matrix. 

The complexities of the four problems that we studied in this work are still open and are of special interest when we restrict the input to be \emph{graph-related matrices}. In particular, are \textsc{BoundedEvalueSigning}, \textsc{PsdSigning}, and \textsc{SingularSigning} efficiently solvable or NP-complete for graph-related matrices? We show that \textsc{SingularSigning} for adjacency matrices of bipartite graphs can be solved efficiently. 
It would be interesting to solve \textsc{SingularSigning} for adjacency matrices of arbitrary graphs.

The search variant of the four problems for graph-related matrices are also of interest. In particular, we believe that our search technique for finding an invertible signing for the adjacency matrix of bipartite graphs merits careful investigation: we initialize with a signed subgraph satisfying the required spectral property and incrementally sign and add edges without violating the spectral property (Lemma \ref{lem:partial-signing-bipartite}). Variations of this technique could be applicable to solve the search variant of all four problems. It would be especially interesting from the perspective of expander construction if the technique can be used to solve the search variant of \textsc{BoundedEvalueSigning} for the adjacency matrix of a given $d$-regular bipartite graph when $\lambda=2\sqrt{d-1}$. 

The counting variant of the four problems for graph-related matrices asks for the number of signed matrices which satisfy the spectral property associated with the problem. If we could obtain a lower bound on the number of such signed matrices, then we could design randomized algorithms to solve the search problem (pick a random signing and succeed with probability equal to the fraction of signed matrices satisfying the required spectral property). While we have shown a lower bound on the number of invertible signed adjacency matrices, it remains open to address the remaining three spectral properties for graph-related matrices. For instance, experimental evidence suggests that the fraction of \emph{Ramanujan signings} of a $d$-regular bipartite graph is at least a constant and it has been a challenging open problem to prove (or disprove) this. 
Furthermore, counting guarantees for the number of invertible signings could also help to derandomize or decrease the amount of randomness in the algebraic method for verifying the existence of a perfect matching in a given bipartite graph. 

\paragraph{Acknowledgements.}
The authors would like to thank the anonymous referees for their helpful comments.  

\bibliographystyle{newabuser}
\bibliography{references}

\def\burl#1{$\langle$\url{#1}$\rangle$}
\begin{thebibliography}{10}

\bibitem{AR58}
R. Abelson and M. Rosenberg.
\newblock Symbolic psycho-logic: {A} model of attitudinal cognition.
\newblock \emph{Behavioral Science} 3:1--13, 1958.

\bibitem{ACKM13}
N. Agarwal, K. Chandrasekaran, A. Kolla, and V. Madan.
\newblock {On the Expansion of Group--based Lifts}.
\newblock \emph{Proceedings of the 21st International Workshop on Randomization
  and Computation (RANDOM '17) (to appear), Preprint url:
  \url{https://arxiv.org/abs/1311.3268}}, 2017.

\bibitem{akiyama}
J. Akiyama, D. Avis, V. Chv\'atal, and H. Era.
\newblock Balancing signed graphs.
\newblock \emph{Discrete Appl. Math.} 3:227--233, 1981.

\bibitem{al-istgp-07}
N. Alon and E. Lubetzky.
\newblock Independent sets in tensor graph powers.
\newblock \emph{Journal of Graph Theory} 54(1):73--87, 2007.

\bibitem{altafini}
C. Altafini.
\newblock {Consensus problems on networks with antagonistic interactions}.
\newblock \emph{IEEE Trans. Automatic Control} 58:935--946, 2013.

\bibitem{Bal81}
E. Balas.
\newblock Integer and fractional matchings.
\newblock \emph{Annals of Discrete Mathematics} 11:1--13, 1981.

\bibitem{BL06}
Y. Bilu and N. Linial.
\newblock Lifts, discrepancy and nearly optimal spectral gap.
\newblock \emph{Combinatorica} 26(5):495--519, 2006.

\bibitem{BCKPS14-j}
A. Bock, K. Chandrasekaran, J. K\"{o}nemann, B. Peis, and L. Sanit\`{a}.
\newblock {Finding Small Stabilizers for Unstable Graphs}.
\newblock \emph{Mathematical Programming} 154(1):173--196, 2015.

\bibitem{boyd}
S. Boyd, L. {El~{G}haoui}, E. Feron, and V. Balakrishnan.
\newblock \emph{Linear Matrix Inequalities in System and Control Theory}.
\newblock Studies in Applied Mathematics. {SIAM}, 1994.

\bibitem{BNR96}
J. Brown, R. Nowakowski, and D. Rall.
\newblock The ultimate categorical independence ratio of a graph.
\newblock \emph{SIAM Journal on Discrete Mathematics} 9(2):290--300, 1996.

\bibitem{cohen}
M.~B. Cohen.
\newblock Ramanujan graphs in polynomial time.
\newblock \emph{Proceedings of the 57th Annual IEEE Symposium on Foundations of
  Computer Science}, 276--281, 2016.

\bibitem{dl-prapt-78}
R.~A. DeMillo and R.~J. Lipton.
\newblock A probabilistic remark on algebraic program testing.
\newblock \emph{Information Processing Letters} 7(4):193--195. Elsevier, 1978.

\bibitem{edmonds1965paths}
J. Edmonds.
\newblock Paths, trees, and flowers.
\newblock \emph{Canadian Journal of Mathematics} 17(3):449--467, 1965.

\bibitem{Fri03}
J. Friedman.
\newblock Relative expanders or weakly relatively {R}amanujan graphs.
\newblock \emph{Duke Math. J.} 118(1):19--35, 2003.

\bibitem{Fri08}
J. Friedman.
\newblock \emph{A Proof of Alon's Second Eiganvalue Conjecture and Related
  Problems}.
\newblock Memoirs of the American Mathematical Society 910. American
  Mathematical Soc., 2008.

\bibitem{gallai63}
T. Gallai.
\newblock {Kritische Graphen II}.
\newblock \emph{Magyar Tud. Akad. Mat. Kutat{\'o} Int. K{\"o}zl} 8:373--395,
  1963.

\bibitem{gallai1964maximale}
T. Gallai.
\newblock {Maximale {S}ysteme unabh{\"a}ngiger {K}anten}.
\newblock \emph{Magyar Tud. Akad. Mat. Kutat{\'o} Int. K{\"o}zl} 9:401--413,
  1964.

\bibitem{book-matrix-computations}
G. Golub and C. Van~Loan.
\newblock \emph{Matrix Computations}.
\newblock Johns Hopkins University Press, 2013.

\bibitem{hansen78}
P. Hansen.
\newblock Labelling algorithms for balance in signed graphs.
\newblock \emph{Probl\'{e}mes Combinatoires et Th\'{e}orie des Graphes}
  215--217, 1978.

\bibitem{harary}
F. Harary.
\newblock On the notion of balance of a signed graph.
\newblock \emph{Michigan Math. J.} 2(2):143--146, 1953.

\bibitem{Ha59}
F. Harary.
\newblock On the measurement of structural balance.
\newblock \emph{Behavioral Science} 4(4):316--323, 1959.

\bibitem{HJ80}
F. Harary and J. Kabell.
\newblock A simple algorithm to detect balance in signed graphs.
\newblock \emph{Mathematical Social Sciences} 1(1):131--136, 1980.

\bibitem{He46}
F. Heider.
\newblock Attitudes and cognitive organization.
\newblock \emph{J. Psych.} 21:107--112, 1946.

\bibitem{horn}
R. Horn and C. Johnson.
\newblock \emph{Matrix Analysis}, 2nd edition.
\newblock Cambridge University Press, 2012.

\bibitem{huffner07}
F. H{\"u}ffner, N. Betzler, and R. Niedermeier.
\newblock Optimal edge deletions for signed graph balancing.
\newblock \emph{Proceedings of the 6th international conference on Experimental
  algorithms}, 297--310, 2007.

\bibitem{rohn1}
C. Jansson and J. Rohn.
\newblock An algorithm for checking regularity of interval matrices.
\newblock \emph{SIAM Journal on Matrix Analysis and Applications}
  20(3):756--776, 1999.

\bibitem{Ka72}
R. Karp.
\newblock Reducibility among combinatorial problems.
\newblock \emph{Complexity of Computer Computations}, 85--103, 1972.

\bibitem{katai}
O. Katai and S. Iwai.
\newblock {Studies on the balancing, the minimal balancing, and the minimum
  balancing processes for social groups with planar and nonplanar graph
  structures}.
\newblock \emph{J. Math. Psychology} 18(2):140--176, 1978.

\bibitem{LP10}
N. Linial and D. Puder.
\newblock Word maps and spectra of random graph lifts.
\newblock \emph{Random Struct. Algorithms} 37(1):100--135, 2010.

\bibitem{Little:1972}
C. Little.
\newblock The parity of the number of 1-factors of a graph.
\newblock \emph{Discrete Math.} 2(2):179--181, 1972.

\bibitem{Lov72}
L. Lov\'{a}sz.
\newblock A note on factor-critical graphs.
\newblock \emph{Studia Sci. Math. Hungar.} 7:279--280, 1972.

\bibitem{lp-mt-09}
L. Lov\'{a}sz and M. Plummer.
\newblock \emph{Matching Theory}.
\newblock AMS Chelsea Publishing Series. American Mathematical Soc., 2009.

\bibitem{Lov79}
L. Lovász.
\newblock On determinants, matchings, and random algorithms.
\newblock \emph{Fundamentals of Computation Theory (FCT '79): Proceedings of
  the Conference on Algebraic, Arithmetic, and Categorial Methods in
  Computation Theory}, 565--574, 1979. Mathematical Research, Akademie-Verlag.

\bibitem{LPS88}
A. Lubotzky, R. Phillips, and P. Sarnak.
\newblock Ramanujan graphs.
\newblock \emph{Combinatorica} 8(3):261--277, 1988.

\bibitem{MSS13}
A. Marcus, D. Spielman, and N. Srivastava.
\newblock Interlacing families {I}: {R}amanujan graphs of all degrees.
\newblock \emph{Annals of Mathematics} 182(1):307--325, 2015.

\bibitem{nemirovski}
A. Nemirovskii.
\newblock Several {NP}-hard problems arising in robust stability analysis.
\newblock \emph{Math. Control Signal Systems} 6(2):99--105, 1993.

\bibitem{Nil91}
A. Nilli.
\newblock On the second eigenvalue of a graph.
\newblock \emph{Discrete Math} 91(2):207--210, 1991.

\bibitem{plummer}
M. Plummer.
\newblock Extending matchings in graphs: {A} survey.
\newblock \emph{Discrete Mathematics} 127(1):277--292, 1994.

\bibitem{rohn3}
S. Poljak and J. Rohn.
\newblock Checking robust nonsingularity is {NP}-hard.
\newblock \emph{Mathematics of Control, Signals and Systems} 6(1):1--9, 1993.

\bibitem{pulleyblank87}
W. Pulleyblank.
\newblock Fractional matchings and the {E}dmonds-{G}allai theorem.
\newblock \emph{Discrete applied mathematics} 16(1):51--58, 1987.

\bibitem{seymour}
N. Robertson, P. Seymour, and R. Thomas.
\newblock Permanents, {P}faffian orientations, and even directed circuits.
\newblock \emph{Annals of Mathematics} 150(3):929--975, 1999.

\bibitem{rohn2}
J. Rohn.
\newblock Positive definiteness and stability of interval matrices.
\newblock \emph{SIAM Journal on Matrix Analysis and Applications}
  15(1):175--184, 1994.

\bibitem{schrijver2003combinatorial}
A. Schrijver.
\newblock \emph{Combinatorial Optimization: {P}olyhedra and Efficiency}.
\newblock Algorithms and Combinatorics. Springer, 2003.

\bibitem{s-fpavp-80}
J.~T. Schwartz.
\newblock Fast probabilistic algorithms for verification of polynomial
  identities.
\newblock \emph{Journal of the ACM} 27(4):701--717, 1980.

\bibitem{sivaraman}
V. Sivaraman.
\newblock \emph{{Some topics concerning graphs, signed graphs and matroids}}.
\newblock {Ph.D.} dissertation, Ohio State University, 2012.

\bibitem{thomas}
R. Thomas.
\newblock A survey of {P}faffian orientations of graphs.
\newblock \emph{Proceedings of the International Congress of Mathematicians},
  vol.~3, 963--984, 2006.

\bibitem{To14}
{\'A}. T\'{o}th.
\newblock On the ultimate categorical independence ratio.
\newblock \emph{Journal of Combinatorial Theory, Series B} 108:29--39, 2014.

\bibitem{t-1og-53}
W. Tutte.
\newblock The 1-factors of oriented graphs.
\newblock \emph{Proceedings of the American Mathematical Society}
  4(6):922--931, 1953.

\bibitem{vazirani}
V. Vazirani and M. Yannakakis.
\newblock Pfaffian orientations, 0-1 permanents, and even cycles in directed
  graphs.
\newblock \emph{Discrete Applied Mathematics} 25(1):179--190, 1989.

\bibitem{zaslavsky1}
T. Zaslavsky.
\newblock {The geometry of root systems and signed graphs}.
\newblock \emph{Amer. Math. Monthly} 2:88--105, 1981.

\bibitem{zaslavsky2}
T. Zaslavsky.
\newblock {Signed graphs}.
\newblock \emph{Discrete Appl. Math.} 1:47--74, 1982.

\bibitem{z-pasp-79}
R. Zippel.
\newblock Probabilistic algorithms for sparse polynomials.
\newblock \emph{Symbolic and algebraic computation} 216--226, 1979.

\end{thebibliography}

\appendix 
\section{Appendix}
\subsection{Proof of Lemma \ref{L:bipartite-pm-parity-signing}}
\label{A:bipartite-pm-parity-signing}

\begin{proof}[Proof of Lemma \ref{L:bipartite-pm-parity-signing}]
Suppose $G$ has $n$ vertices. We recall that the permutation expansion of the determinant of $A_G(s)$ for a signing $s$ is given by 
\[
\det A_G(s) = \sum_{\sigma\in S_n} \sgn(\sigma) \cdot \prod_{i=1}^n A_G(s)[i,\sigma(i)].
\]
In order to prove the claim, it suffices to show that every perfect 2-matching in $G$ that is not a perfect matching must contribute an even number of distinct nonzero terms of the same sign to the determinant of $A_G(s)$ for all signings $s$. We show this next.

Let $M$ be a perfect $2$-matching in $G$ that is not a perfect matching. Let $C$ be the set of vertex disjoint cycles in $M$. Then $M$ corresponds to $2^{|C|}$ distinct permutations of $S_n$ and thus $2^{|C|}$ distinct terms in the permutation expansion of the determinant of $A_G(s)$ for all signings $s$. Since all such permutations have the same cycle structure it follows that each term must have the same sign. Thus each perfect $2$-matching that is not a perfect matching in $G$ contributes an even number to the permutation expansion of the determinant of $A_G(s)$.
\end{proof}

\end{document}